\renewcommand{\emph}[1]{\textit{\textbf{#1}}}
\definecolor{blue}{rgb}{0.032812499999999994, 0.0390625, 0.95390624999999996}
\crefname{algocf}{Algorithm}{Algorithms}
\Crefname{algocf}{Algorithm}{Algorithms}
\title{Adaptive Exact Learning in a Mixed-Up World:
Dealing with Periodicity, Errors and Jumbled-Index Queries in String Reconstruction}
\titlerunning{Adaptive Exact Learning in a Mixed-Up World}
\author{Ramtin Afshar\inst{1} \and
Amihood Amir\inst{2} \and
Michael T. Goodrich\inst{1}\orcidID{0000-0002-8943-191X} \and
Pedro Matias\inst{1}\orcidID{0000-0003-0664-9145}
}
\authorrunning{R.\,Afshar, A.\,Amir, M.\,T. Goodrich, and P.\,Matias}
\institute{Dept. of Computer Science, Univ. of California Irvine, USA
\email{\{afsharr,goodrich,pmatias\}@uci.edu} \and
Dept. of Computer Science, Bar Ilan Univ., Israel
\email{amir@cs.biu.ac.il}
}
\begin{document}
\maketitle

\begin{abstract}
We study the query
complexity of exactly reconstructing a string from
adaptive queries, such as substring, subsequence, and jumbled-index
queries.
Such problems have applications, e.g., in computational biology.
We provide a number of new and improved bounds for exact string reconstruction
for settings where either the string or the queries are ``mixed-up''.

For example, we show that a periodic (i.e., ``mixed-up'') 
string, $S=p^kp'$, of smallest period $p$, where $|p'|<|p|$,
can be reconstructed using $O(\sigma|p|+\lg n)$ substring queries, where
$\sigma$ is the alphabet size, if $n=|S|$ is unknown.
We also show that we can reconstruct $S$ after having been corrupted by a small number of errors $d$, measured by Hamming distance. In this case, we give an algorithm that uses $O(d\sigma|p| + d|p|\lg \frac{n}{d+1})$ queries. 
In addition, we show that a periodic string can be reconstructed 
using $2\sigma\lceil\lg n\rceil + 2|p|\lceil\lg \sigma\rceil$ subsequence queries,
and that general strings can be reconstructed using
$2\sigma\lceil\lg n\rceil + n\lceil\lg \sigma\rceil$ subsequence queries,
without knowledge of $n$ in advance.
This latter result improves the previous best, decades-old result,
by Skiena and Sundaram.
Finally, we believe we are the 
first to study the exact-learning query complexity for
string reconstruction using jumbled-index queries,
which are a ``mixed-up'' type of query that have received much attention of late.

\keywords{Exact Learning \and String Reconstruction \and Jumbled-Index Queries \and 
Periodicity \and DNA Sequencing \and Stringology \and Substrings \and Hybridization \and Information Security}
\end{abstract}

%!TEX root = main.tex

\section{Introduction}
\emph{Exact learning} involves asking a series of queries so as to 
learn a configuration or concept uniquely and without errors, 
e.g., see~\cite{angluin1988queries}.
For example,
imagine a game where a player, Alice, is trying to exactly learn a secret string, 
$S$, 
such as $S=$~\texttt{"rumpelstiltskin"}, which is known only to a magic fairy.
Alice may ask the fairy questions about $S$, but only if they are in a form
allowed by the fairy,
such as ``Is $X$ a substring of $S$?''.
Any allowable question that Alice asks must be answered truthfully 
by the fairy.
Alice's goal is to learn $S$
by asking the fewest number of allowable questions.
Her strategy is \emph{adaptive} if her questions can depend on the answers to
previous queries.
This exact-learning string-reconstruction 
problem might at first seem like a contrived game,
but it actually has a number of applications.

For example, the magic fairy could represent a corporation
with a document database, $S$, that supports
an API allowing users to perform certain online 
query operations on $S$, such as keyword searches.
Further, this corporation may receive financial compensation for each of its
database responses 
(either directly or through advertisements); hence, the corporation
might not want the database's entire contents leaking out.
In this case, Alice could represent a rival corporation that is interested in 
learning the contents of the database, by asking legal queries from its API,
so that Alice can setup a competing online query service.
An optimal solution to the fairy-querying game would allow Alice 
to steal the database 
by asking the fewest number of questions necessary.

As another example,
in interactive DNA sequencing,
the fairy's string is an unknown DNA sequence, $S$,
and allowable queries are ``Is $X$ a substring of $S$?''
Each such question can be answered by a hybridization experiment that exposes 
copies of $S$ to a mixture containing specific primers to see which ones bind to $S$,
e.g., see~\cite{DBLP:journals/jcb/SkienaS95}.
An efficient scheme for Alice to play this fairy-querying game results in an
efficient method for sequencing the unknown DNA sequence.

Yet another application comes from computer security and
cryptography, dealing with
searchable encryption 
(e.g.,~\cite{DBLP:journals/jcs/CurtmolaGKO11,DBLP:conf/ndss/StefanovPS14}), 
where a database returns encrypted answers in
response to queries.
In this case, so long as Alice can, for instance, tell 
encryptions of ``yes'' apart from encryptions of ``no,''
then the fairy-querying game corresponds to a type 
of side-channel attack, e.g.,
see~\cite{DBLP:conf/ccs/KellarisKNO16,DBLP:conf/sp/LachariteMP18,DBLP:conf/ccs/NaveedKW15,DBLP:conf/ccs/CashGPR15,%
k-nn-attack,DBLP:conf/uss/ZhangKP16}.

Thus,
we are interested in the exact-learning complexity 
of adaptively learning an unknown string via queries of various given types, that is,
for exactly reconstructing a string from queries.
Formally, we are interested in minimizing a \emph{query-complexity}
measure, $Q(n)$, which, in our case, is the number of queries of certain types needed in
order to exactly learn a string, $S$.
This query-complexity concept
comes from machine-learning and complexity theory,
e.g., see~\cite{angluin1988queries,DBLP:conf/birthday/AfshaniADDLM13,CHOI2010551,%
Dobzinski:2012:QCC:2213977.2214076,Tardos1989,%
Yao:1994:DTC:195058.195414,BERNASCONI2001113}.

\subsection{Related Work}
Motivated by DNA sequencing, 
Skiena and Sundaram~\cite{DBLP:journals/jcb/SkienaS95}
were the first to study exact string reconstruction from adaptive queries.
For \emph{substring queries}, of the form ``Is $X$ a substring of $S$?'',
they give a bound for $Q(n)$ of $(\sigma-1)n+2\log n+O(\sigma)$, 
where $\sigma$ is the alphabet size.
For \emph{subsequence queries}, of the form ``Is $X$ a subsequence of $S$?'',
they prove a bound for $Q(n)$ of $\Theta(n\log \sigma + \sigma\log n)$.
Recently, Iwama {\it et al.}~\cite{iwama2018reconstructing} study the 
problem for binary alphabets, which 
removes the additive logarithmic term in this case.
These papers do not consider ``mixed-up'' strings, however,
such as strings that are periodic or periodic with errors.
% Ami addition
The abundance of repetitions and periodic runs in genomic sequences is 
well known and has been exploited in the last decades for biologic and
medical information (see e.g.~\cite{D-89,BW-94,b:99,KBK1:03,PFAP03,
WexlerYKG04, DomanicP07,SokolDB09,PellegriniRV10,dlle:16}). It is somewhat
surprising that this phenomenon has not been used to achieve more 
efficient algorithms.
% end ami addition
Margaritis and Skiena~\cite{ms-95} study a 
parallel version of exact string reconstruction
from queries, which are hybrids of adaptive and non-adaptive strategies, showing, e.g.,
that a length-$n$ string can be reconstructed in
$O(\log^2 n)$ rounds using $n$ substring queries per round.
Tsur~\cite{Tsur} gives a polynomial approximation 
algorithm for the 1-round case. 
As in~\cite{DBLP:journals/jcb/SkienaS95},
these papers do not consider bounds for $Q(n)$ based on properties of
the string such as its periodicity.
Cleve {\it et al.}~\cite{Cleve} study string reconstruction
in a quantum-computing model, showing, 
for example, that a sublinear number of queries
are sufficient for a binary alphabet.
This result does not seem to carry over 
to a classical computing model, however,
which is the subject of our paper.

Another type of query we consider 
is the
\emph{jumbled (or histogram)-index} query, first considered in \cite{DBLP:conf/stringology/CicaleseFL09,DBLP:journals/ijfcs/BurcsiCFL12,DBLP:journals/jcb/EresLP04,DBLP:journals/ipl/ButmanEL04} and studied more recently in, e.g. \cite{afshani-20,ami-jumbled-14,AMIR2016146,doi:10.1098/rsta.2013.0132,MOOSA2010795,Jumbled-13}. 
% Ami - added May 22
Jumbled indexing has many applications. It can be used as a tool for de novo peptide identification (as in 
e.g.~\cite{Kim09a,Kim09b,Jeong10}), and has been used as a filter for searching an image database
~\cite{toy:01,Cieplinski:01,DKN:08,WH:11,ZLT:17}. 
% end Ami addition
In this query, which has received much study of late,
but has not been studied before for adaptive string reconstruction,
one is given a Parikh vector, i.e., a vector of frequency counts for each
character in an alphabet, and asked if there is a substring of the reference string,
$S$, having these frequency counts and, if so, where it occurs in $S$. 
% Ami - added May 22
Such reconstruction may aid in narrowing down peptide identification, 
or focusing on image retrieval.
% and Ami addition

Another model for string reconstruction, tangential to ours and
studied extensively, is the one defined by a non-adaptive oracle,
where: we are given a set of answers to queries in
advance, and we aim to understand sufficient and necessary conditions
on the answers that enable the exact reconstruction of the
string. 
This model differs from the adaptive one considered
in this paper in that it focuses on the study of combinatorial
properties of strings, rather than on minimizing the number of
queries. Below, we give a detailed review of existing literature on this model, for each type of query considered in this paper.

\subsubsection{Non-adaptive Substring Queries}There is an extensive
line of work focusing on the ability to reconstruct a string given
the multiset of all its length-$L$ substrings. For $L\ge a\lg n$
($a > 1$), it is shown in
\cite{DBLP:conf/isit/GabrysM18,DBLP:conf/isit/ElishcoGMY19,DBLP:journals/tit/ChangCEK17}
that, as $n$ approaches infinity, almost every length-$n$ string
can be recovered. The following variants have also been studied:
(i) only a subset of the length-$L$ substrings is given, or each
substring is subject to substitution errors of fixed Hamming distance
\cite{DBLP:journals/corr/abs-1912-11108,DBLP:journals/tit/KiahPM16};
(ii) the hidden string is an i.i.d. DNA string
\cite{DBLP:journals/jcb/ArratiaMRW96}, combined with a random subset
of the length-$L$ substrings \cite{DBLP:journals/tit/MotahariBT13},
subject to probabilistic substitution errors
\cite{DBLP:conf/isit/MotahariRTM13} or edit errors (of fixed maximum
amount) \cite{DBLP:conf/isit/GangulyMR16}; (iii) the hidden string
satisfies several constraints based on its repeat statistics
\cite{bresler2013optimal,DBLP:journals/tcs/Ukkonen92} and input
substrings are subject to erasure errors\footnote{A letter in the
substring is replaced by an $\varepsilon$.}
\cite{DBLP:conf/isit/ShomoronyCT15}; and (iv) when partial
reconstruction of the hidden string is sufficient
\cite{DBLP:conf/isit/ShomoronyKXCT16}. On a different note, the
authors of \cite{DBLP:journals/tcs/FiciMRS06,DBLP:journals/tcs/CarpiL01}
consider instead the case where the input is a special set of
substrings which is derived from the set of maximal substrings.

\subsubsection{Non-adaptive Subsequence Queries}Perhaps the most
studied problem in this category is the $k$-deck problem: given the
multiset of all length-$k$ subsequences of a length-$n$ string $S$,
what is the smallest value of $k$ that enables the unique reconstruction
of $S$? This problem was introduced in \cite{kalashnik1973reconstruction},
who showed an upper bound of $\lfloor n/2 \rfloor$. This bound was
improved to $(1+o(1))\sqrt{(n \ln n)}$ in \cite{DBLP:journals/dm/Scott97}
and, in the same year, to $\lfloor 16/7 \sqrt{n} \rfloor + 5$ in
\cite{DBLP:journals/jct/KrasikovR97}. The first non-trivial lower
bound, of $\lg / \lg \lg n$, was given in \cite{zenkin1984non} and
later on, was improved to $\lg n$ in \cite{DBLP:journals/dm/ManvelMSSS91}
and to $e^{\Omega(\sqrt{\lg n})}$ in \cite{DBLP:journals/jct/DudikS03}.
Recently, Gabrys \textit{et al.} \cite{DBLP:conf/isit/GabrysM17}
considered an extension of the $k$-deck problem, where one is also
given a number of special subsequences of length $n-t$, $t>0$; they
provide lower and upper bounds that have a dependence on $t$. Also
related to the $k$-deck problem is the work of Simon
\cite{DBLP:conf/automata/Simon75}, on which subsequences are
considered to be of length \emph{at most} $k$. Another relevant
problem is trace reconstruction. The input to this problem is a set
of traces, distorted versions of the hidden string obtained by
deletion (i.e. subsequences) or other types of errors, when sending
it through a noisy channel. Similarly, the goal is to recover the
hidden string $S$, either exactly or with some accuracy or probability,
using the least amount of traces. To the best of our knowledge,
this problem was first studied in \cite{DBLP:journals/tit/Levenshtein01},
who provided bounds for the number of input traces, when subject
to a worse case fixed number of substitutions, transpositions,
deletions or insertion errors. In the case of exclusively dealing
with deletions, where each letter is deleted with some fixed
probability $q$, Batu \textit{et al.} \cite{DBLP:conf/soda/BatuKKM04}
showed that reconstruction is possible w.h.p. for $q=O(1/\lg n)$
and $O(\lg n)$ traces, when $S$ is chosen uniformly at random.
Moreover, they show that, for arbitrary $S$ and for
$q=O(1/n^{1/2+\epsilon})$, $O(1/\epsilon)$ traces are sufficient
to reconstruct a close approximation of $S$ and $O(n \lg n)$ traces
are sufficient to recover $S$ exactly. Later Kannan \textit{et al.}
\cite{DBLP:conf/isit/Kannan005} extended these results to the case
where insertion errors are also allowed, showing that for
deletion/insertion error probabilities of $q=O(1/\lg^2 n)$ and
$O(\lg n)$ traces, $S$ can be recovered w.h.p. assuming it is chosen
uniformly at random. Similarly, they show that an arbitrary $S$ can
be recovered w.h.p., for $q=O(1/n^{1/2+\epsilon})$ and $O(1)$ traces
of length at most $n^\epsilon$. Later, Viswanathan \textit{et al.}
\cite{DBLP:conf/soda/ViswanathanS08} improved on this, by showing
that deletion/insert error probabilities of $q=O(1/\lg n)$ are
sufficient to reconstruct $S$, chosen uniformly at random. They
also show that $\Omega(\lg n)$ traces are necessary to reconstruct
$1-o(1)$ length-$n$ strings w.h.p. In
\cite{DBLP:conf/soda/HolensteinMPW08}, the authors showed that, for
the case of deletion errors only, of probability $q=O(1)$,
reconstruction is possible w.h.p. using $poly(n)$ traces, when $S$
is chosen uniformly at random. Finally, Sala \textit{et al.}
\cite{DBLP:conf/isit/SalaGSMD16} studied lower bounds on the number
of input traces formed from a worst-case number of insertion errors,
where $S$ is a member of specific error-correcting codes, i.e. sets
of strings constructed strategically to allow recovering them from
a noisy channel is modified.

\subsubsection{Non-adaptive Jumbled-Index Queries} In
\cite{DBLP:conf/isit/AcharyaDMOP14,DBLP:journals/siamdm/AcharyaDMOP15},
Acharya \textit{et al.} study a non-adaptive version of the problem
of enumerating candidate strings from the \textit{composition multiset}
of the underlying string. The composition multiset corresponds to
the set of answers to all possible queries of the following type:
given a Parikh vector, how many times does a matching substring
occur in the hidden string? Under this model, they extend polynomial
techniques used for the turnpike problem (see
\cite{dakic2000turnpike,DBLP:conf/compgeom/SkienaSL90}) to give:
(i) sufficient (but not necessary) conditions for the ability to
uniquely reconstruct a string, (ii) a sufficient characterization
of unreconstructable strings and (iii) a backtracking algorithm
that enumerates the set of all candidate strings, whose cardinality
they lower and upper bound.

\subsection{Our Results}
We provide new and improved results for exactly reconstructing strings from 
adaptive substring, subsequence, and jumbled-index queries. 
For example, we believe we are the first to characterize
query complexities for exactly reconstructing periodic strings from adaptive
queries, including the following results
for reconstructing 
a length-$n$ periodic (i.e., ``mixed-up'')
string, $S=p^kp'$, of smallest period $p$, where $p'$ is a prefix of $p$ and the alphabet has size $\sigma$:
\begin{itemize}
\item
It requires at least $|p|\lg \sigma$ substring or subsequence queries.
\item
It can be done
with $\sigma|p|+\lceil\lg |p|\rceil$ substring queries, if $n$ is known.
\item
It can be done
with
$O(\sigma|p|+\lg n)$ substring queries, if $n$ is unknown.
\item
It can be done
with $\sigma\lceil\lg n\rceil + 2|p|\lceil\lg \sigma\rceil$
subsequence queries,
for known~$n$.
\item
It can be done
with $2\sigma\lceil\lg n\rceil + 2|p|\lceil\lg \sigma\rceil$
subsequence queries, if $n$ is unknown. 
\end{itemize}

Perhaps our most technical result is that we show 
that we can reconstruct a length-$n$ string, $S$, 
within Hamming distance $d$ of 
a periodic string $S'=p^kp'$, of smallest period $p$, using 
$O(\min(\sigma n, \, d\sigma|p| + d|p|\lg \frac{n}{d+1}))$ substring queries,
if $n$ is unknown. 
We also show
that we can exactly reconstruct a general length-$n$ string, $S$, 
using $2\sigma\lceil\lg n\rceil + n\lceil\lg \sigma\rceil$
subsequence queries, if $n$ is unknown. 
Such queries are another ``mixed-up'' setting, since there can be multiple
subsequence matches for a given string.
Our bound improves
the previous best, decades-old result, by 
Skiena and Sundaram~\cite{DBLP:journals/jcb/SkienaS95},
who prove a query complexity of $2\sigma\lg n+1.59n\lg \sigma+5\sigma$ for this
case. 
If $n$ is known, then $\sigma\lceil\lg n\rceil + n\lceil\lg \sigma\rceil$ subsequence queries suffice.
We believe we are the first to study string reconstruction using jumbled-index
queries, which are yet another ``mixed-up'' setting, since they simply count
the frequency of each character occurring in a substring.
We prove the following results:
\begin{itemize}
\item
We can reconstruct
a length-$n$ string
with $O(\sigma n)$ yes/no 
extended jumbled-index queries,
which include a count for an end-of-string character,~\$.
\item
For jumbled-index queries that return an index of a matching substring,
string reconstruction is not possible if this index is chosen 
adversarially, but is possible using $O(\sigma+n\lg n)$ queries if it
is chosen uniformly at random.
\end{itemize}

\subsection{Preliminaries}
\label{subsec:prelim}
We consider strings over the 
alphabet $\Sigma=\{a_1,a_2,\ldots,a_\sigma\}$ of $\sigma$ letters. The size of a string $X$ is denoted by $|X|$. 
We use $X[i]$ to denote the $i\textsuperscript{th}$ letter of $X$ and $X[i..j]$ to refer to the substring of $X$ starting at its $i\textsuperscript{th}$ and ending at its $j\textsuperscript{th}$ letter (e.g., $X=X[1..|X|]$). 
We may ignore $i$ when expressing a prefix $X[..j]$ of $X$. 
Similarly, $X[i..]$ is a suffix of $X$.
Occasionally, we will express concatenation of strings $X$ and $Y$ by $X\cdot Y$ (instead of $XY$) to emphasize some property of the string.
A string $X$ concatenated with itself $k$ (resp. infinitely many) times can be expressed as $X^k$ (resp. $X^\infty$). The reversal of a string $X$ is denoted by $X^R$.

A string, $S$, has \emph{period} $p$ if $S = p^kp'$, 
such that $k > 0$ is an integer and $p'$ is a (possibly empty) prefix of $p$. Further, a string $S$ is \emph{periodic} if it has a period that repeats at least twice, i.e. $S=p^kp'$ and $k>1$\footnote{Our algorithms assume that $S$ is periodic ($k>1$), while the Periodicity Lemma (\labelcref{lem:periodicity}) only requires a string to have a period ($k>0$).}. The following is a well known result concerning the periodicity of a string, due to Fine and Wilf~\cite{fine1965uniqueness}, which we will need later on.

\begin{lemma}[Periodicity Lemma \cite{fine1965uniqueness}]\label{lem:periodicity}
  If $p,q$ are periods of a string $X$ of length $|X|\ge |p|+|q|-\gcd(|p|,|q|)$, then $X$ also has a period of size $\gcd(|p|,|q|)$.
\end{lemma}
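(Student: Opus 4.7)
The plan is to prove the Periodicity Lemma by strong induction on $|p|+|q|$, in the spirit of the Euclidean algorithm. Without loss of generality, assume $|p|\geq|q|$. If $|p|=|q|$, then $\gcd(|p|,|q|)=|p|$ is itself a period of $X$ and there is nothing to prove. Otherwise, the goal is to show that $|p|-|q|$ is also a period of $X$; once this is established, the inductive hypothesis applied to the pair $(|p|-|q|,|q|)$ finishes the proof, since their gcd is still $d:=\gcd(|p|,|q|)$ and the needed length bound $|X|\geq(|p|-|q|)+|q|-d=|p|-d$ is weaker than the assumed $|X|\geq|p|+|q|-d$.

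To prove $|p|-|q|$ is a period, I would fix any $i$ with $1\leq i\leq|X|-(|p|-|q|)$ and establish $X[i]=X[i+|p|-|q|]$ by chaining $|p|$- and $|q|$-periodicity. Two natural two-step chains are available: the ``forward then backward'' chain $X[i]=X[i+|p|]=X[i+|p|-|q|]$, valid whenever $i+|p|\leq|X|$, and the ``backward then forward'' chain $X[i]=X[i-|q|]=X[i-|q|+|p|]$, valid whenever $i\geq|q|+1$. Provided $|X|\geq|p|+|q|$, these two chains together handle every $i$ in the required range and the inductive step closes.

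The main obstacle is the tightness of the length hypothesis: when $|X|<|p|+|q|$, neither basic chain is immediately valid for $i$ in the middle band $|X|-|p|+1\leq i\leq|q|$, which has at most $d$ indices. For such $i$, I would construct a longer walk on $\{1,\ldots,|X|\}$ whose steps are $+|p|$ and $-|q|$ (``add $|p|$ if it keeps us in range, else subtract $|q|$'') taking $i$ to $i+|p|-|q|$. By B\'ezout's identity, the desired net displacement is achievable by such combinations, and the hypothesis $|X|\geq|p|+|q|-d$ is exactly what ensures the walk never exits $\{1,\ldots,|X|\}$, so that $X[i]=X[i+|p|-|q|]$ follows by chaining the periodicity equalities along the walk. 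The hard step is verifying this stay-in-bounds invariant, which is best handled by a careful case analysis based on $i\bmod d$ combined with the Euclidean reduction that drives the outer induction; this is precisely where the Fine--Wilf bound is used in its tight form.
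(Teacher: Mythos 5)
First, note that the paper does not prove this statement at all: it is quoted as the Fine--Wilf Periodicity Lemma and cited to \cite{fine1965uniqueness}, so your proposal can only be compared to the standard textbook argument. Your outer structure (subtractive-Euclid induction, reducing to showing that $|p|-|q|$ is a period, with the two chains $X[i]=X[i+|p|]=X[i+|p|-|q|]$ and $X[i]=X[i-|q|]=X[i-|q|+|p|]$) is indeed the classical route, and it correctly isolates the difficulty in the middle band $|X|-|p|+1\le i\le |q|$. But the way you propose to close that band has a genuine gap: by definition of the band, every such $i$ satisfies $i+|p|>|X|$ \emph{and} $i-|q|\le 0$, so a walk whose only allowed steps are $+|p|$ and $-|q|$ cannot take even a single step without leaving $\{1,\dots,|X|\}$. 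B\'ezout's identity gives you the right net displacement in terms of step counts, but it says nothing about staying in range, and your claim that the hypothesis $|X|\ge|p|+|q|-\gcd(|p|,|q|)$ ``is exactly what ensures the walk never exits'' is false for this step set. Concretely, take $p$-length $5$, $q$-length $3$, $|X|=7=5+3-1$, $i=3$: you need $X[3]=X[5]$, yet from position $3$ both moves ($3+5=8$, $3-3=0$) leave the range; the equality does hold, but only via a chain that also uses steps $-|p|$ and $+|q|$, e.g. $3\to6\to1\to4\to7\to2\to5$.

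So the repair requires allowing all four steps $\pm|p|,\pm|q|$, and then the real work is to prove that under the tight bound $|X|\ge|p|+|q|-\gcd(|p|,|q|)$ any two positions in the band are connected to their targets by such in-range chains --- which is essentially the whole content of Fine--Wilf, and is exactly the part your sketch defers to an unspecified ``careful case analysis based on $i\bmod d$.'' As written, the only part that is actually proved is the two-chain argument, and that alone yields just the weak periodicity lemma with hypothesis $|X|\ge|p|+|q|$; the weak form would not even suffice for some of the paper's applications (e.g.\ \cref{lem:next_q} invokes the lemma with $|T|\ge|p|+|q|-1$). To fix the proof you either need to carry out the connectivity argument honestly (e.g.\ showing that the positions in each residue class modulo $\gcd(|p|,|q|)$ form a single component of the graph with edges $\pm|p|,\pm|q|$ restricted to $[1,|X|]$), or switch to one of the standard self-contained proofs of the strong bound.
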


A \emph{doubling search} is the operation used
  to determine a number $n$ from a (typically unbounded) range of possibilities. It involves doubling a query value, $m$, until it
  is greater than $n$, followed by a binary search to determine $n$ itself. Its time complexity is $2 \lfloor\lg n\rfloor + 1$. A more sophisticated version of this procedure exists (see \cite{DBLP:journals/ipl/BentleyY76}) that actually improves the time complexity
  into $$\lfloor \lg L^{(0)} \rfloor + \lfloor \lg L^{(1)} n \rfloor + \dots + \lfloor \lg L^{(t-1)} n \rfloor + 2\lfloor \lg L^{(t)} n \rfloor + 1,$$ where $L^{(j)} (n)=\lfloor \lg L^{(j-1)} (n) \rfloor + 1$ and $L^{(0)} (n)=n$, for which there exists an optimized value of $t$.
For simplicity, we use the traditional algorithm, which is asymptotically equivalent.

\section{Substring Queries}
\label{sec:substr}

In this section,
we study query complexities for a string, $S$, 
subject to yes/no \emph{substring} queries, \textsf{IsSubstr}, i.e. queries of 
``Is $X$ a substring of $S$?''. We focus on the cases where $S$ corresponds to an originally periodic string, that may have lost its periodicity property due to error corruption. The nature of
the errors is context-dependent. For example, corruption may be
caused by transmission errors, measurement errors, malicious
tampering, or even by the aging process of a natural phenomenon.
There are multiple ways to model errors in strings.
Examples include:
\begin{itemize}
  \item \textit{Hamming} distance model: corruption is caused by allowing the substitution of a letter in the string by a different letter of the alphabet
  \item \textit{Edit} distance model: a generalization of the Hamming distance that also allows the insertion or deletion of a letter (see \cite{L-66})
  \item \textit{Swap} distance model: the operation allowed consists of swapping two adjacent letters in the string (see \cite{DBLP:journals/jacm/LowranceW75,DBLP:conf/stoc/Wagner75})
  \item \textit{Interchange} (or \textit{Caley}) distance model: it generalizes swap distance, by allowing the swap of any two letters, not necessarily adjacent (see \cite{cayley1849lxxvii,DBLP:journals/tcs/Jerrum85,DBLP:journals/jcss/AmirABLLPSV09,DBLP:conf/esa/AmirHKLP07})
\end{itemize}
In this paper, we consider Hamming distance.
We say that $S$ is a $d$-\emph{corrupted periodic string} if 
there exists a periodic string $S'$ of period $p$, such
that $|S|=|S'|$ and $\delta(S', S) \le d$, where $\delta$ is the
Hamming distance. We refer to $p$ as an \emph{approximate period} of $S$.
Notice that, depending on $d$, there might exist
multiple possible strings $S'$ that originate $S$.
We are interested
in reconstructing $S$, as opposed to $S'$, since we can use one of the existing algorithms
to enumerate all possible strings $S'$ (see
\cite{aelps:12,aals:18,krrswa:18}), without incurring additional
queries. 

Our main result in this section is the following.

\begin{restatable}{theorem}{thmSubstrIntercalated}
\label{thm:substr_error_intercalated}
  We can reconstruct a length-$n$ $d$-corrupted periodic string $S$ using $$O\left(\min \left(\sigma n, d\sigma|p| + d|p|\lg \frac{n}{d+1}\right)\right) \text{ queries,}$$for known $d$, unknown $|p|$, regardless of whether we know $n$, where $p$ is a smallest approximate period of $S$.
\end{restatable}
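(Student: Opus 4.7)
The plan is a three-phase strategy: discover an approximate period $p$, locate the at most $d$ windows of length $|p|$ that are corrupted, and reconstruct the content of those windows character by character. The $O(\sigma n)$ side of the minimum is obtained by falling back on a character-in-context scan (in the spirit of Skiena and Sundaram) whenever the error-dependent bound is worse, so from here on I concentrate on proving the second term.

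For period discovery, I would conceptually partition $S$ into $d+1$ contiguous chunks of length $\lceil n/(d+1)\rceil$; by pigeonhole at least one such chunk is error-free and is therefore an exact substring of $p^\infty$. I would then reuse the clean-periodic subroutine that underlies the earlier $O(\sigma|p|+\lg n)$ bound: doubling-search the candidate period length $\ell$, and for each $\ell$ issue substring queries of the form $X^2$ for candidate strings $X$ of length $\ell$. Once $|X^2|$ exceeds $|p|+\ell-\gcd(|p|,\ell)$, a positive answer forces $X=p$ up to rotation by \cref{lem:periodicity}. Because the positive witness only needs to land inside some clean chunk, a chunk of length $n/(d+1)$ is ample runway, and the overall cost $O(\sigma|p|+\lg n)$ is dominated by the later phases.

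With $p$ in hand, I would conceptually cover $S$ with $\lceil n/|p|\rceil$ consecutive windows of length $|p|$; each error lies in at most one window, so at most $d$ windows are bad. I would locate them by divide-and-conquer: for a candidate window-interval $I$, issue an anchored substring query of the form $p^a \cdot X \cdot p^b$ that encodes both the expected content of $I$ and enough surrounding periodic padding to pin down $I$'s position uniquely in $S$. Each probe is a single query but its padding has length $\Theta(|p|)$, and bookkeeping the $|p|$ possible rotational alignments that each probe must discriminate accounts for the extra $|p|$ factor; a standard ``find at most $d$ marked leaves in a balanced search tree over $n$ positions with average gap $n/(d+1)$'' argument then gives $O(d|p|\lg(n/(d+1)))$ queries. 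For each of the resulting $\le d$ bad windows, a character-in-context scan anchored by the neighboring clean windows reconstructs the $|p|$ positions inside it at $O(\sigma)$ queries apiece, contributing the remaining $O(d\sigma|p|)$ term.

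The main obstacle will be making the anchoring rigorous: one must argue that every probe $p^a \cdot X \cdot p^b$ admits a match in $S$ only at the intended shift, despite $p$'s own repetitions creating many near-matches elsewhere. The argument should hinge on $p$ being the \emph{smallest} approximate period, on choosing $a$ and $b$ large enough (as functions of $d$ and $|p|$) so that any alternative match would force a shorter compatible periodicity on $S$ and thereby violate either minimality of $p$ or the Hamming-distance budget, and on sequencing the control flow so that no previously identified bad window can spoof the anchor used to probe a later one.
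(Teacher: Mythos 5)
There is a genuine gap, in fact two, and both sit exactly where the paper has to work hardest. First, your period-discovery phase does not survive corruption. You never say how the candidate $X$ of length $\ell$ is produced; the only way to generate candidates in this query model is to grow a known substring letter by letter with the append/prepend primitive, and that growth may well incorporate a corrupted letter, so the candidate need not be a rotation of $p$ even when $\ell=|p|$ (you cannot steer the growth into the clean chunk promised by your pigeonhole argument, since a yes/no substring query carries no positional information). Moreover, your claimed test ``\textsf{IsSubstr}$(X^2)$ with $|X^2|\ge |p|+\ell-\gcd(|p|,\ell)$ forces $X$ to be a rotation of $p$'' is false in the corrupted setting: the occurrence of $X^2$ may straddle errors, and $d$ errors can fabricate spurious short-period runs (e.g.\ one substitution in $(ab)^m$ creates the substring $aaa$, so $X=a$ passes the test). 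The paper's way around both problems is \cref{lem:must_be_true_rotation}: it grows a window $A$ of length $(2d+1)\ell$ and takes the unique length-$\ell$ block occurring at least $d+1$ times, and, crucially, it couples every candidate with a verify-and-reject loop (\textsf{Expand} halts as soon as the mismatch count against $q^\infty$ exceeds $d$). Your proposal has no rejection mechanism, so a wrong candidate is fatal; this repeated candidate testing is also where the $|p|$ factor in the $d|p|\lg\frac{n}{d+1}$ term actually comes from, which your accounting instead attributes, unconvincingly, to padding length and rotational bookkeeping.

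Second, the anchored divide-and-conquer for locating bad windows cannot work with yes/no substring queries, and the obstacle you flag is not repairable by minimality of $p$. When the probed interval $I$ is clean, your probe $p^a\cdot X\cdot p^b$ is itself $|p|$-periodic, so it occurs at every shift by a multiple of $|p|$ inside any sufficiently long clean run; a positive answer therefore certifies only that \emph{some} suitably long clean stretch exists, not that $I$ is clean, and the binary-search invariant collapses. No choice of $a,b$ as functions of $d$ and $|p|$ changes this, because the ambiguity comes from translation invariance of the query, not from a shorter hidden period. The paper sidesteps localization entirely: \cref{alg:substr_errors} grows a single maximal known substring $T$ outward, discovering each error letter by letter at the current boundary ($O(\sigma)$ per error) and extending periodically between errors by doubling searches whose logarithms amortize, via Jensen's inequality over at most $d+2$ stretches, to $O(d\lg\frac{n}{d+1})$ per candidate. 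Your $O(\sigma n)$ fallback via interleaving with the letter-by-letter algorithm matches the paper, but the core error-dependent bound is not established by your plan.
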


The algorithm of \cref{thm:substr_error_intercalated} is a more elaborate version of a reconstruction algorithm for the special case of $d=0$, i.e. when no errors occurred and $S=S'$, and when $n$ is not known in advance.

\begin{restatable}{theorem}{thmUnknown}
\label{thm:unknown}
We can reconstruct a length-$n$ periodic string, 
$S=p^kp'$, of smallest period $p$, using $O(\sigma |p| + \lg n)$ substring queries,
assuming both $n$ and $|p|$ are unknown in advance.
\end{restatable}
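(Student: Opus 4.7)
The plan is to reconstruct $S=p^kp'$ in three phases: alphabet identification, period identification, and length identification, spending $O(\sigma|p|)$ queries on the first two and $O(\lg n)$ queries on the last.

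For alphabet identification, issue $\textsf{IsSubstr}(\alpha)$ for each $\alpha\in\Sigma$ ($\sigma$ queries), identifying the characters that appear in $S$. For period identification, build a substring $X$ of $S$ by iteratively extending on the right: at each step, query $\textsf{IsSubstr}(X\alpha)$ for each alphabet letter of $S$ ($O(\sigma)$ queries per extension), and replace $X$ by $X\alpha$ for some $\alpha$ that succeeds. Maintain the smallest period $\ell$ of $X$ incrementally (a classical computation that uses no queries). Continue extending $X$ until a termination criterion based on $|X|$ and $\ell$ certifies that $\ell=|p|$; the Periodicity Lemma (\cref{lem:periodicity}) is the key tool, as explained below. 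Then left-extend $X$ by querying $\textsf{IsSubstr}(\alpha X)$ until no $\alpha$ succeeds, which forces $X$ to start at position~$1$ of $S$ and so recovers $p$ exactly rather than just a cyclic rotation. Since any character of $S$ occurs within the initial $|p|$ positions, at most $|p|-1$ left-extensions suffice; right- and left-extension together cost $O(\sigma|p|)$ queries.

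Knowing $p$, the length $n=k|p|+|p'|$ is determined as follows. I would use the doubling search from \cref{subsec:prelim} to locate the largest $k$ with $p^k$ a substring of $S$: query $\textsf{IsSubstr}(p^{2^i})$ for $i=0,1,2,\ldots$ until the answer is ``no'', then binary-search within the resulting range to pin down $k$, for $O(\lg(n/|p|))$ queries. Next, binary-search on $j\in[0,|p|)$ to find the largest $j$ with $p^k\cdot p[..j]$ a substring of $S$; this $j$ equals $|p'|$, and is found in $O(\lg|p|)$ queries. Summing across all phases gives the claimed $O(\sigma|p|+\lg n)$ bound.

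The main obstacle is the stopping criterion during right-extension: we cannot directly detect when $|X|\geq 2|p|$, because $|p|$ is unknown, and at intermediate lengths $X$ may coincidentally exhibit a period $\ell'<|p|$. The Periodicity Lemma is the lever for correctness: if $X$ has some period $\ell'$ and is long enough to also inherit period $|p|$ from $S$, then $X$ has period $\gcd(\ell',|p|)$; because a sufficiently long $X$ contains a full rotation of $p$, this rotation (and hence $p$ itself) would also have period $\gcd(\ell',|p|)<|p|$, contradicting the minimality of $|p|$. Translating this into a concrete termination rule that halts when $|X|=\Theta(|p|)$ and certifies $\ell=|p|$ is the technical crux; the rest is routine bookkeeping.
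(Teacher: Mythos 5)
Your plan stands or falls on the phase-two stopping rule, and that rule is exactly what is missing: you defer it as ``the technical crux,'' but it is the entire content of the theorem, and the version you sketch cannot be made to work. A criterion ``based on $|X|$ and $\ell$'' cannot certify $\ell=|p|$, because the hypothesis of the Periodicity Lemma (\cref{lem:periodicity}) that you invoke is $|X|\ge |p|+\ell-\gcd(\ell,|p|)$, which involves the unknown $|p|$; knowing only $|X|$ and $\ell$ there is no sound threshold. Concretely, take $p=(ab)^m a a b$ and $S=p^k$: right-extension can produce $X=(ab)^m a$, which has smallest period $\ell=2$ and $|X|=2m+1$, i.e.\ $|X|/\ell$ arbitrarily large while $|p|=2m+3$. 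Any rule that fires at some pair $(|X|,\ell)$ with $\ell=2$ fires here as well (it sees the same pair as it would inside $S'=(ab)^M$) and certifies the wrong period; a rule that never fires for such pairs never stops inside $S'=(ab)^M$ until right-extension is exhausted, i.e.\ after $\Theta(n)$ letters and $\Theta(\sigma n)$ queries, destroying the $O(\sigma|p|+\lg n)$ bound. (This is precisely the phenomenon behind the paper's example $S=(abababaab)^3$ with $q=abababa$.) Your third phase ($O(\lg n)$ doubling/binary searches for $k$ and $|p'|$) is fine, and the left-extension accounting is fine, but both presuppose a certified $p$, which phase two does not deliver.

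The paper's proof resolves exactly this point by \emph{not} certifying the candidate period locally. For each candidate $q$ it spends a few global doubling searches (the run-length $t$ of $q$, then the maximal left/right extensions $l,r$, then a $2\sigma$-query validity check), and when the resulting maximal $|q|$-periodic substring $T$ is not all of $S$, \cref{lem:next_q} (via the Periodicity Lemma) guarantees $|p|>|T|-|q|+1$, so the candidate can be jumped ahead to a length-$(|T|-|q|+1)$ substring of $T$ without overshooting $|p|$; the $O(\lg)$ cost of the failed global checks is then charged to the letters thereby acquired (\cref{lem:queries_iteration}), leaving a single un-amortized additive $O(\lg n)$ for the final, successful iteration. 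Some mechanism of this kind --- global verification whose cost is amortized against progress on the candidate period --- is what your proposal needs and does not supply.
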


The algorithm of \cref{thm:unknown}, in turn, builds from a simple reconstruction algorithm that handles the case where $n$ is known in advance and $d=0$.

For clarity, we will present our results in increasing order of complexity, from the least general result of $d=0$ and known $n$, to the most general result of arbitrary $d$ and unknown $n$.

\subsection{Uncorrupted Periodic Strings of Known Size}
\label{subsec:uncorrupted:known}
We first give a simple algorithm to reconstruct a periodic string $S=p^kp'$ of smallest period $p$ and known size with query complexity $O(\sigma |p|)$, and then show how to improve this 
algorithm to have query complexity $\sigma|p|$ plus lower-order terms.
Our algorithms use a primitive 
developed by Skiena and Sundaram~\cite{DBLP:journals/jcb/SkienaS95},
which we call ``\emph{append} (resp., \emph{prepend}) a letter.''
In the append (resp., prepend) primitive,
we start with a known substring $q$ of $S$, and we ask 
queries \textsf{IsSubstr}$(qa_i)$
(resp., \textsf{IsSubstr}$(a_iq)$),
for each $a_i\in\Sigma$.
Note that if we know that one of the $qa_i$ (resp., $a_iq$) strings must
be a substring, we can save one query, so that appending or prepending a letter
uses at most $\sigma-1$ queries in this case.

In our simple algorithm,
we iteratively grow a candidate period, $q$, using the append primitive 
until $q^{g(q)-1}$ is a substring, 
where $g(x)=\lfloor n/|x| \rfloor$.
Notice that $q$ may be an ``unlucky'' cyclic rotation of $p$, 
which only repeats $g(p)-1$ times, and we need to account for this possibility. 
Thus,
once we get a substring corresponding to
$q^{g(q)-1}$, we then append/prepend letters until we recover all of $S$.
For reference, see \cref{alg:substr}, where the number of queries is shown in parentheses for steps involving queries.

\begin{algorithm}[t]
\caption{Reconstructing a periodic string $S=p^kp'$ of known size $n$ and smallest period $p$, for $k>1$.}
\label{alg:substr}
\DontPrintSemicolon
\SetNlSty{textbf}{}{.}
% \SetAlgoNLRelativeSize{-1}
\SetNlSkip{.8em}

\nl Let $q=\varepsilon$\;
\nl \Repeat{$\textsf{IsSubstr}(q^{g(q)-1})$ \complexity*[f]{1\mbox{~per iteration; $|p|$ iterations}}}{
  Append a letter to $q$ \complexity*{$\sigma - 1$}
  }
\nl Let $T=q^{g(q)-1}$\;
\nl While $T$ is a substring of $S$, append a letter to $T$\;%
\nl While $|T|<n$ and $T$ is a substring of $S$, prepend a letter to $T$
\hspace*{1em}\rlap{\smash{$\left.\begin{array}{@{}c@{}}\\{}\\{}\end{array}\color{blue}\right\}%
          \color{blue}\begin{tabular}{@{}l}$(\sigma(2|p|-1) )$\end{tabular}$}}\;
% above hack from: https://tex.stackexchange.com/questions/51019/how-can-i-put-a-curly-brace-inside-an-algorithm-to-group-code-lines 
\nl \textbf{Output} $T$
\end{algorithm}

\begin{restatable}{theorem}{warmupSubstring}
\label{lem:warmup-substring}
We can reconstruct a length-$n$ periodic string
$S=p^kp'$, of smallest period $p$, using $O(\sigma|p|)$ substring queries, assuming $n$ is known in advance and $|p|$ is unknown.
\end{restatable}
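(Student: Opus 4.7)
The plan is to analyze Algorithm 1 by separately bounding the query cost of the \textsf{Repeat} loop (Step 2) and of the extension phase (Steps 4--5), then verifying that the output equals $S$.

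First I would argue that the main loop terminates within $|p|$ iterations. The key claim is that once $|q|=|p|$, the test $\textsf{IsSubstr}(q^{g(q)-1})$ must return yes. Indeed, $q$ is maintained as a substring of $S=p^kp'$, and any length-$|p|$ substring lying inside $p^k$ is some cyclic rotation of $p$; writing $q=p[j+1..|p|]\cdot p[1..j]$, one checks directly that $S[j+1..\,j+(k-1)|p|]=q^{k-1}$. Since $|p'|<|p|$ forces $g(q)=\lfloor n/|p|\rfloor=k$, this is exactly $q^{g(q)-1}$. The loop may well terminate earlier with some $|q|<|p|$ (a shorter candidate may happen to repeat $g(q)-1$ times by coincidence), but it cannot overshoot $|p|$.

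Next I would bound the cost per iteration. Whenever an append is attempted, $|q|\le|p|-1$, and by the periodic structure $q$ has at least $k-1\ge 1$ occurrences inside $p^k$ that are not at the end of $S$, so some letter extension exists and the append primitive uses at most $\sigma-1$ queries; together with the single check $\textsf{IsSubstr}(q^{g(q)-1})$, each iteration uses $\sigma$ queries and Step 2 costs $O(\sigma|p|)$ in total. For Steps 4 and 5, I would use that on exit from the main loop $|T|=|q|(g(q)-1)\ge n-2|q|\ge n-2|p|$, so at most $O(|p|)$ append/prepend operations are performed, each costing $O(\sigma)$ queries and contributing another $O(\sigma|p|)$.

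For correctness I would track the set of occurrences of $T$ in $S$. Each successful append preserves $T$ as a substring while monotonically shrinking this set, so Step 4 exits exactly when every remaining occurrence of $T$ ends at position $n$, pinning $T$ to the unique suffix $S[n-|T|+1..\,n]$. Step 5 then prepends the unique valid letter at each step until $|T|=n$, at which point $T=S$. The main obstacle is the subtle termination argument for the \textsf{Repeat} loop: it is tempting to claim that $q$ grows until it equals $p$, but small examples (e.g.\ $p=abaab$, $k=2$, where $\textsf{IsSubstr}((aba)^2)$ already succeeds) show that $q$ may be strictly shorter than $p$ on exit. Hence I cannot identify $q$ with $p$; instead the invariant I must establish is only $|q|\le|p|$, together with the lower bound $|T|\ge n-2|p|$ that makes the extension phase affordable.
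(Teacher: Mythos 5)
Your proposal is correct and follows essentially the same route as the paper's proof: the main loop must terminate by the time $|q|=|p|$ (since a length-$|p|$ candidate is a cyclic rotation of $p$ and hence repeats $g(q)-1$ times), possibly earlier, and then at most $2|p|-1$ letters remain for the append/prepend phase, giving $\sigma|p|+\sigma(2|p|-1)=O(\sigma|p|)$ queries. You simply spell out details the paper leaves implicit (why appends always succeed, the bound $|T|\ge n-2|q|$, and the suffix-pinning correctness of Steps 4--5), which is fine but not a different argument.
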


\begin{proof}
\label{proof:warmupSubstring}
  The main loop in \cref{alg:substr} will always terminate, because $S$ is periodic and any cyclic permutation of $p$ is a substring, when concatenated at least $g(p)-1$ times. It is easy to see that the procedure of iteratively appending letters to $q$ must result in a cyclic permutation of $p$, unless the main loop stops earlier. After the main loop, there are at most $2|p|-1$ letters left to be recovered, so the overall query complexity is at most $\sigma |p| + \sigma(2|p|-1)$, which is $O(\sigma|p|)$.
  \qed
\end{proof}

With a little more effort, we can improve the constant factor
in the query complexity.
The main challenge to achieving this improvement is
that, after the main loop in \cref{alg:substr}, $q$ may not
correspond to a cyclic rotation of $p$. For example,
in $S=abababaab\cdot abababaab\cdot abababaab$, we may get $q=abababa$, while the actual
period is $p=abababaab$. However, we show that, when $k=n/|p|>3$, the following implication holds indeed: if $q^{g(q)-1}$ is a substring, then $q$ must be a cyclic rotation of $p$.

We begin by giving the details 
for our improved algorithm for reconstructing
a periodic length-$n$ string $S$, when $n$ is known, 
which is shown in \cref{alg:substr_improved}.

\begin{algorithm}
\caption{Reconstructing a periodic string $S=p^kp'$ of known size $n$ and smallest period $p$, for $k>3$.}
\label{alg:substr_improved}
\DontPrintSemicolon
\SetNlSty{textbf}{}{.}
\SetNlSkip{.8em}
\SetAlgoVlined

\SetKwProg{myalg}{}{start}{end}
\myalg{}{
  \nl Let $q=\varepsilon$\;
  \nl \Repeat{$\textsf{IsSubstr}(q^{g(q)-1})$ \complexity*[f]{1\mbox{~per iteration; $|p|$ iterations}}}{
    \nl Append a letter to $q$ \complexity*{$\sigma -1$}
    }
  \nl Let $p=\textsf{TrueRotation}(q)$ \complexity*{$\lceil \lg |q|\rceil$}
  \nl Determine $p'$ and \textbf{output} $p^kp'$
}\;

\SetKwProg{myfunc}{function}{}{}
\SetKwFunction{truerotation}{TrueRotation}
\SetFuncSty{textsf}
\myfunc{\truerotation{$q$}}{
  Find, using binary search, the largest suffix $q[j..]$, such that \textsf{IsSubstr}$(q[j..]\cdot q^{g(q)-1})$ \complexity*{$\lceil \lg |q|\rceil$}
  \textbf{Return} $q[j..]\cdot q[..j-1]$
}
\end{algorithm}

\begin{remark}\label{rem:multiplicity}
  A string, $p$, is a period of a string $X$ of length $|X|\ge i|p|$ if and only if $p^j$ is a period of $X$, for all $j\in \{1,2,\dots,i\}$.
\end{remark}

\begin{restatable}{theorem}{theoremSubstrImproved}
\label{thm:known}
We can reconstruct a length-$n$ periodic string 
$S=p^kp'$, of smallest period $p$, using at most $\sigma |p| + \lceil\lg |p|\rceil $ substring queries, assuming that: $n$ is known in advance, $k>3$ and $|p|$ is unknown.
\end{restatable}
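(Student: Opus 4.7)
The plan is to verify the correctness and query count of \cref{alg:substr_improved} in three stages.

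First, I would show that the main loop terminates exactly when $|q|=|p|$. The upper bound is easy: once $|q|=|p|$, the iteratively appended substring $q$ must be a cyclic rotation of $p$, since $|S|\ge 2|p|$ forces every length-$|p|$ substring of $S$ to be such a rotation; hence $q^{g(q)-1}=q^{k-1}$ is a substring of $S$. For the lower bound, I would suppose toward contradiction that the loop terminates with some $|q|<|p|$. Then $q^{g(q)-1}$ is a substring of $S$ of length at least $n-2|q|$; using $k>3$ to guarantee $n>3|p|$, this exceeds $|p|+|q|-\gcd(|p|,|q|)$. Applying \cref{lem:periodicity} to a length-$|p|$ contiguous block of $S$ lying inside this substring shows that the block has period $\gcd(|p|,|q|)<|p|$; but the block is itself a cyclic rotation of $p$ and therefore inherits smallest period $|p|$, a contradiction.

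Second, assuming $|q|=|p|$, the string $q$ equals some cyclic rotation $p[\ell..]\cdot p[..\ell-1]$ of $p$, and I would argue that \textsf{TrueRotation} pinpoints $\ell$ via a monotone binary search. The predicate ``$q[j..]\cdot q^{g(q)-1}$ is a substring of $S$'' is upward-closed in $j$ (shrinking the prepended suffix only shortens the composite string), so the binary search is well-defined and uses $\lceil\lg|q|\rceil$ queries. I would identify the correct boundary $j^{\star}=|p|-\ell+2$ by showing that $q[j^{\star}..]\cdot q^{g(q)-1}$ equals the prefix $p^{k-1}\cdot p[1..\ell-1]$ of $S$, whereas for any $j<j^{\star}$ the composite string is too long to fit at the admissible occurrences of $q^{g(q)-1}$ inside $S$. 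The returned rotation $q[j^{\star}..]\cdot q[..j^{\star}-1]$, which shifts $q$ by $|p|-\ell+1$ positions, is then exactly $p$.

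Third, for the query count: each of the $|p|$ iterations of the main loop costs one \textsf{IsSubstr} check on $q^{g(q)-1}$ plus at most $\sigma-1$ appending queries (one letter of $\Sigma$ is guaranteed to extend $q$ inside $S$, so $\sigma-1$ probes suffice to identify it), yielding $\sigma|p|$ queries total. \textsf{TrueRotation} contributes $\lceil\lg|p|\rceil$ queries, and the final output step incurs none since $|p'|=n\bmod|p|$ and $p'=p[1..|p'|]$ are obtained directly from $p$ and $n$. Summing produces the claimed bound $\sigma|p|+\lceil\lg|p|\rceil$.

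The step I expect to require the most care is the boundary analysis inside \textsf{TrueRotation}: pinning down which occurrence of $q^{g(q)-1}$ in $S$ governs the binary search and verifying that $j^{\star}=|p|-\ell+2$ holds uniformly across the interaction between the shift $\ell$ and the tail length $|p'|$. This is where the hypothesis $k>3$ is most delicately used, as it rules out short periodic strings in which $q^{g(q)-1}$ could admit too many placements inside $S$ to isolate the right one.
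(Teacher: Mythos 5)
Your Stages 1 and 3 follow the paper's own argument essentially verbatim: the paper's entire written proof of \cref{thm:known} consists of your Stage 1 (the loop cannot stop with $|q|<|p|$, via \cref{lem:periodicity}) plus the query count of your Stage 3. Two small repairs are needed in Stage 1, though. The Periodicity Lemma has to be applied to $T=q^{g(q)-1}$ itself (which has periods $|q|$ and $|p|$ and length at least $n-2|q|+1\ge |p|+|q|$), not to a length-$|p|$ block: a block of length $|p|$ is too short to satisfy the lemma's hypothesis unless $|q|$ divides $|p|$. Also, a cyclic rotation of $p$ does not in general ``inherit smallest period $|p|$'' from the fact that $|p|$ is the smallest period of $S$ (take $p=\mathtt{aba}$, $S=(\mathtt{aba})^k$: the block $\mathtt{aba}$ has period $2$ while $S$ has smallest period $3$); the contradiction must instead use that $\gcd(|p|,|q|)$ divides $|p|$, so the block, and hence $p$, would be a proper power, giving $S$ a period shorter than $|p|$.

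The genuine gap is Stage 2. The boundary you assert for \textsf{TrueRotation} --- $j^{\star}=|p|-\ell+2$, with every longer suffix failing because ``the composite string is too long to fit'' --- is false whenever $2\le\ell\le|p'|+1$: then $q^{g(q)-1}$ has a second aligned occurrence shifted right by $|p|$, and the $|p|$ symbols preceding that occurrence are exactly $q$, so the monotone binary search returns $j=1$ and \textsf{TrueRotation} returns $q$ rather than $p$; the hypothesis $k>3$ does not exclude this. Concretely, for $S=\mathtt{ababababa}$ ($p=\mathtt{ab}$, $k=4$, $p'=\mathtt{a}$) the main loop may legitimately produce $q=\mathtt{ba}$, and then $q\cdot q^{3}=\mathtt{babababa}$ occurs at position $2$, so the procedure returns $\mathtt{ba}$ and the final output $(\mathtt{ba})^{4}\mathtt{b}$ differs from $S$. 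The paper's proof is silent on exactly this point (it proves only that $q$ is a cyclic rotation of $p$ and asserts the remainder follows), so your instinct that this is the delicate step was right --- but the resolution you propose does not hold as stated, and closing the gap appears to require amending the procedure, e.g., first querying whether $q^{g(q)}$ is a substring and, if so, binary searching for the longest suffix $w$ with \textsf{IsSubstr}$(w\cdot q^{g(q)})$ (that occurrence is unique, so your intended boundary argument then goes through), at the cost of one extra query.
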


\begin{proof}
Consider \cref{alg:substr_improved}.
  We claim that, immediately after the main loop, 
the candidate period $q$ is indeed a cyclic rotation of the true period $p$. 
The remainder of the proof then follows from this.

So let us prove our claim.
Let $q$ be the string immediately after the main loop and let $T=q^{\lfloor n/ |q|\rfloor-1}$. If $|q|=|p|$, then $q$ is clearly a cyclic rotation of $p$. Besides, $|q|$ cannot be greater than $|p|$, because the letter-by-letter construction of $q$ would have implied a halt of the main loop when $q$ had size $|p|$: any cyclic rotation of $p$ must repeat at least $\lfloor n/|p| \rfloor -1$ times. So let us consider the case $|q| < |p|$. Since $k>3$, we have that $n \ge 4|p|$. Moreover, since $T=q^{\lfloor n/ |q|\rfloor-1}$, we know that $|T| \ge n - (2|q|-1)$ and, thus, $|T| \ge 2|p|$. Since $T$ is a substring of $S$, $T$ must have a second period of size $|p|$. Moreover,
  \begin{align*}
    |T| & \ge 2|p|\\
      & \ge |p|+|q|\\
      & \ge |p|+|q| - \gcd(|p|,|q|)
  \end{align*}
  Thus, by the Periodicity Lemma (\labelcref{lem:periodicity}), 
$T$ has a period $p_T$ of size $\gcd(|p|,|q|)$. Therefore, $S$ must have a period of size $|p_T|$, and thus, $S$ must have a period of size $|q|$ (by \cref{rem:multiplicity}), which contradicts the fact that $p$ is the smallest period of $S$.
\qed
\end{proof}

Our analysis above is tight in the sense that, for $k=3$, it no longer holds: recall the example given above, where $S=abababaab\cdot abababaab\cdot abababaab$ and $q=abababa$.

Notice that any reconstruction algorithm requires at least $|p|\lg \sigma$ queries; this follows from an information-theoretic argument.

\begin{restatable}{theorem}{theoremLowerBoundSubstr}
\label{thm:lower}
Reconstructing a length-$n$ string, $S=p^kp'$, of smallest period $p$, requires 
at least  $|p|\lg \sigma$ \textsf{IsSubstr} queries,
even if $n$ and $|p|$ are known.
\end{restatable}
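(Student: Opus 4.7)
The plan is to use a standard information-theoretic (decision-tree) lower bound. Any deterministic reconstruction algorithm that uses \textsf{IsSubstr} queries can be modeled as a binary decision tree, where internal nodes are queries, branches are labeled yes/no, and each leaf is labeled with the reconstructed string. If the worst-case number of queries is $q$, then the tree has at most $2^q$ leaves, so the algorithm can distinguish at most $2^q$ distinct possible inputs. Hence to prove the lower bound it suffices to exhibit at least $\sigma^{|p|}$ (up to constant factors) candidate strings $S = p^k p'$ that are all consistent with the promise and pairwise distinct.

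To construct these candidates, I would fix $n$ and $|p|$ (both known by hypothesis), set $k = \lfloor n/|p|\rfloor$ and let $p'$ be the prefix of $p$ of length $n - k|p|$. Then every choice of a primitive string $p \in \Sigma^{|p|}$ yields a valid instance $S = p^k p'$ whose smallest period is exactly $p$, and distinct primitive $p$'s yield distinct $S$'s (since the period can be read off as the length-$|p|$ prefix of $S$). The number of primitive strings of length $m$ over an alphabet of size $\sigma$ is $\sum_{d \mid m} \mu(m/d)\,\sigma^d$, which is at least $\sigma^{|p|} - \sigma^{\lfloor |p|/2\rfloor + 1}$, and is therefore at least a constant fraction of $\sigma^{|p|}$ for any $|p|\ge 2$ and $\sigma\ge 2$.

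Combining the two bounds, $2^q \ge \sigma^{|p|}/c$ for some constant $c$, giving $q \ge |p|\lg \sigma - O(1)$, which is the claimed bound up to lower-order terms.

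The main (minor) obstacle is exactly the primitivity constraint: we need the length-$|p|$ period to be \emph{smallest}, which excludes the non-primitive strings. The key observation, however, is that this discards only an $O(1/\sigma^{\lceil|p|/2\rceil})$ fraction of all length-$|p|$ strings, so the information-theoretic count is essentially unaffected. If one insists on the clean bound $|p|\lg \sigma$ without additive slack, one can simply use the fact that the number of candidate periodic strings is at least $\sigma^{|p|}$ when we also allow arbitrary (not necessarily primitive) periods together with a length-$|p|$ prefix as the ``period representative,'' which still forces any decision tree to have at least $\sigma^{|p|}$ leaves.
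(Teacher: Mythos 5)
Your proposal is correct and takes essentially the same approach as the paper: a binary decision-tree, information-theoretic count of the candidate periods, forcing the tree height (worst-case number of yes/no \textsf{IsSubstr} queries) to be at least $\lg(\sigma^{|p|}) = |p|\lg\sigma$. You are in fact more careful than the paper, which simply asserts there are $\sigma^{|p|}$ possible periods without addressing the smallest-period (primitivity) restriction that you handle explicitly via the count of primitive strings, at the cost of an additive $O(1)$ slack.
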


\begin{proof}
There are $\sigma^{|p|}$ possible periods for $S$. 
Since each period corresponds to a different output of a reconstruction
algorithm, $A$, and each query is binary, we can model any such 
algorithm, $A$, as a binary decision tree,
where each internal node corresponds to an
\textsf{IsSubstr} query.
Each of the $\sigma^{|p|}$ possible periods must correspond to at least one
leaf of $A$; hence, the
minimum height of $A$ is $\lg (\sigma^{|p|})$.
\qed
\end{proof}

In the next section we consider the case where the underlying string is of unknown size.

\subsection{Uncorrupted Periodic Strings of Unknown Size}

As in \cref{subsec:uncorrupted:known}, we iteratively grow a candidate period $q$ and attempt to recover $S$ by concatenating $q$ with itself in the appropriate way. The difficulty when $n$ is unknown is that we can no longer confidently predict $g(q)$. Thus, we can no longer issue a single query to test if $q$ is the right period. An immediate solution is to use a doubling search. Unfortunately,
this introduces a multiplicative $O(\lg n)$ term into the query complexity. To avoid it, we show how we can take advantage of the Periodicity Lemma (\labelcref{lem:periodicity}) to amortize the extra work needed to recover $S$.

Let us describe the algorithm (see \cref{alg:substr_unknown_size} for reference). We start with an empty candidate period $q$. At each iteration, we add a letter to $q$, using the append primitive and, using a doubling search, determine the \emph{run-length} $t$ of $q$, i.e. the maximum integer $t$ such that $q^t$ is a substring of $S$. If $t=1$, we advance to the next iteration and repeat this process. If, on the other hand, $t>1$, we use $q$ to determine the largest substring $T$ that has a period of size $|q|$. This can be done efficiently, using doubling searches, by determining the largest suffix $l$ of $q$ and the largest prefix $r$ of $q$, such that $\textsf{IsSubstr}(l\cdot q^t\cdot r)$. Once $T$ is determined, we check whether it corresponds to $S$ by checking if there is any letter preceding and succeeding $T$ (see \textsf{IsValid} subroutine). If $T$ corresponds to $S$, we output it. Otherwise, we update $q$ to be any largest substring of $T$ whose size is assuredly less than $|p|$: using Periodicity Lemma (\labelcref{lem:periodicity}), we argue in \cref{lem:next_q} below that, if $q$ is not a cyclic rotation of $p$, then $p$ must be as large as \textit{almost} the entire substring $T$; more specifically, it must be the case that $|p|> |T|-|q|+1$. Thus, we update $q$ to be a length-$(|T|-|q|+1)$ prefix of $T$ (any other substring of $T$ would also work). We use this fact to get a faster convergence to a cyclic rotation of $p$, while making sure that we do not overshoot $|p|$. Indeed, this observation will enable us to incur a $O(\lg n)$ additive factor, instead of a multiplicative one. After updating $q$, we advance to the next iteration, where a new letter is appended to $q$, and repeat this process until $T=S$.

\begin{algorithm}[t]
\caption{Reconstructing a periodic string $S=p^kp'$, of smallest period $p$ and unknown size $n$, for $k>1$.}
\label{alg:substr_unknown_size}
\DontPrintSemicolon
\SetNlSty{textbf}{}{.}
\SetNlSkip{.8em}
\SetAlgoVlined

\SetKwProg{myalg}{}{start}{end}
\myalg{}{
\nl Let $q=\varepsilon$\;
\nl \Repeat{$\textsf{IsValid}(T)$ \complexity*[f]{$2\sigma$}}{
  \nl Append or prepend a letter to $q$ \complexity*{$\sigma-1$; potentially, $2\sigma-1$ when $k\le 2$} \label{alg:line:append_letter}
  \nl Determine the run-length $t$ of $q$ \complexity*{$2\lfloor \lg t\rfloor + 1$} \label{alg:line:runlength}
  \nl \lIf{$t=1$}{Let $T=q$}
  \nl \Else{
    \nl Let $l$ be the largest suffix of $q$ such that $\textsf{IsSubstr}(l\cdot q^t)$ \complexity*{$2\lfloor \lg |l|\rfloor + 1$} \label{alg:line:prefix}
    \nl Let $r$ be the largest prefix of $q$ such that $\textsf{IsSubstr}(l\cdot q^t\cdot r)$ \complexity*{$2\lfloor \lg |r|\rfloor + 1$} \label{alg:line:suffix}
    \nl Let $T=l\cdot q^t\cdot r$ \label{alg:line:T}\;
    \nl Let $q=T[..|T| - |q| + 1]$\label{alg:line:q_expanded} \;
    }
  }
\nl \textbf{Output} $T$
}\;

\SetKwProg{myfunc}{function}{}{}
\SetKwFunction{isvalid}{IsValid}
\SetFuncSty{textsf}
\myfunc{\isvalid{$T$}{\complexity*[f]{$2\sigma$}}}{
  Let $x$ be the letter to the left of $T$ or $\varepsilon$ if there is none \complexity*{$\sigma$}
  Let $y$ be the letter to the right of $T$ or $\varepsilon$ if there is none \complexity*{$\sigma$}
  \textbf{Return} $x==\varepsilon$ \textbf{and} $y==\varepsilon$\;
}
\end{algorithm}

\begin{lemma}\label{lem:next_q}
  Let $T$ be the largest proper substring of $S=p^kp'$, of smallest period $p$, such that: $|q|$ is the length of the smallest period of $T$. Then, $|p|> |T| - |q| + 1$.
\end{lemma}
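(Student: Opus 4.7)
The plan is to argue by contradiction: assume $|p| \le |T| - |q| + 1$, i.e.\ $|T| \ge |p|+|q|-1$. Since $|T| \ge |p|$, the substring $T$ of $S$ inherits period $|p|$ from the periodicity of $S$; it also has $|q|$ as a period, being its smallest one. The length bound is exactly what is needed to invoke Fine and Wilf's Periodicity Lemma (\cref{lem:periodicity}) with the two periods $|p|$ and $|q|$ of $T$: it yields that $T$ has period $\gcd(|p|,|q|)$. Minimality of $|q|$ then forces $\gcd(|p|,|q|) = |q|$, so $|q|$ divides $|p|$.

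With this divisibility in hand, I would extend $T$ by a single letter to obtain a larger substring $T'$ of $S$; this is possible on at least one side because $T$ is a proper substring. The key technical step is to show that the extension $T' = S[a..b+1]$ (writing $T = S[a..b]$ and extending rightward, symmetric for leftward) still has period $|q|$, which reduces to verifying the single new equality $S[b+1-|q|] = S[b+1]$. This follows by a short chain: using $S$'s period $|p|$, $S[b+1] = S[b+1-|p|]$, and since $|T|\ge |p|$ the position $b+1-|p|$ lies inside $T$; then, using that $|q|$ divides $|p|$ and that $T$ has period $|q|$, iterating the $|q|$-step along the positions $b+1-|p|, b+1-|p|+|q|, \ldots, b+1-|q|$ (all inside $T$) gives $S[b+1-|p|] = S[b+1-|q|]$.

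The extended substring $T'$ therefore has $|q|$ as a period, and its smallest period must still equal $|q|$: a strictly smaller period of $T'$ would immediately be a period of the contained substring $T$, contradicting minimality of $|q|$ for $T$. Hence $T'$ is a strictly larger proper substring of $S$ with smallest period $|q|$, contradicting the maximality of $T$. The main obstacle is an edge case: if the single-letter extension forces $T' = S$, then $S$ itself has $|q|$ as a period, and minimality of $|p|$ as a period of $S$ combined with $|q|\mid |p|$ collapses to $|q| = |p|$. This case is ruled out in the algorithmic context where the lemma is invoked—whenever the candidate period $q$ reaches length $|p|$, $q$ is necessarily a cyclic rotation of $p$ (being a length-$|p|$ substring of $S$), so the maximal construction $l q^t r$ already coincides with $S$ and the algorithm terminates in the \textsf{IsValid} check before ever updating $q$.
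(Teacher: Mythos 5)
Your proof is correct and shares the paper's skeleton---assume $|p|\le |T|-|q|+1$, observe that $T$ then carries both periods $|p|$ and $|q|$ with $|T|\ge |p|+|q|-\gcd(|p|,|q|)$, and invoke \cref{lem:periodicity}---but it resolves the decisive fork differently. The paper asserts, with no justification beyond the phrase ``since $T$ is the largest proper substring of $S$,'' that $|p|$ is not a multiple of $|q|$, and then contradicts the minimality of $|q|$ as a period of $T$ via $\gcd(|p|,|q|)<|q|$. You go the mirror way: minimality of $|q|$ forces $\gcd(|p|,|q|)=|q|$, i.e.\ $|q|\mid |p|$, and you then contradict the maximality of $T$ by an explicit one-letter extension, verifying the single new equality $S[b+1-|q|]=S[b+1]$ by stepping back $|p|$ (a period of $S$, landing inside $T$ because $|T|\ge|p|$) and then forward in $|q|$-steps inside $T$, and noting that a smaller period of $T'$ would restrict to a smaller period of $T$. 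In effect you prove exactly the claim the paper leaves as a one-line assertion, so your version is the more rigorous of the two at that step. Your flagged edge case is also genuine rather than a defect of your argument: if the extension gives $T'=S$ then $|q|=|p|$, and there the lemma as literally stated can fail (take $S=(ab)^4$, $T=abababa$, $|q|=|p|=2$, yet $|T|-|q|+1=6>|p|$); the paper's unproved non-divisibility claim is false in precisely this situation, so its proof silently assumes $|q|<|p|$, which holds only because of how the lemma is used in the algorithm (the update it justifies matters only when $T\ne S$). Appealing to that context, as you do, is a fair way to close the argument, though the cleaner fix---for both proofs---is to add the hypothesis $|q|<|p|$ (equivalently $T\ne S$) to the statement.
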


\begin{proof}
  Let us assume, by contradiction, that $|p|\le |T| - |q| + 1$. Then, $|T|\ge |q| + |p| - 1$ and, thus, $|T|\ge |q| + |p| - \gcd(|q|,|p|)$. In addition, if $p$ is a period of $S$, then $T$ must have a period of size $|p|$. So, by the Periodicity Lemma (\labelcref{lem:periodicity}), $T$ also has a period of size $\gcd(|q|,|p|)$. Moreover, since $T$ is the largest proper substring of $S$, $|p|$ is not a multiple of $|q|$. Therefore, $T$ must have a period shorter than $|q|$, a contradiction.
\qed
\end{proof}

When $k\le 2$, our algorithm behaves similarly to the letter-by-letter algorithm of Skiena and Sundaram~\cite{DBLP:journals/jcb/SkienaS95} -- after finding a cyclic rotation $q$ of $p$, our algorithm will continue adding letters to $q$ until $q=S$, this time using both the append and prepend primitives.

Next, we give the details of the correctness and query complexity of \cref{alg:substr_unknown_size}. Let $q_1,q_2,\dots,q_m$ be the sequence of $m$ candidate periods of increasing length, each of which is the result of the append/prepend primitive at the beginning of every iteration (line~\ref{alg:line:append_letter} of \cref{alg:substr_unknown_size}), e.g. $|q_1|=1$.
Notice that each $q_i$ may be expanded (in line~\ref{alg:line:q_expanded}), so the difference $|q_i|-|q_{i-1}|$ may not necessarily be 1. In addition, let us use $t_i$ to denote the run-length of $q_i$ computed in line~\ref{alg:line:runlength}.

\begin{restatable}{lemma}{lemmaSuccessIf}
\label{lem:success_if}
  \cref{alg:substr_unknown_size} successfully returns $S=p^kp'$, of smallest period $p$, if there exists an iteration $i\in\{1,2,\dots,m\}$, such that $q_i$ is a cyclic rotation of $p$.
\end{restatable}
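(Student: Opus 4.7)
The plan is to establish two things: (i) the subroutine $\textsf{IsValid}$ is sound, so the algorithm never exits the main loop with an incorrect $T$; and (ii) once $q_i$ is a cyclic rotation of $p$, the loop terminates at iteration $i$ or shortly thereafter. Soundness is immediate: every $T$ is built using $\textsf{IsSubstr}$ queries and hence is a substring of $S$, and $\textsf{IsValid}(T)$ returns true exactly when no letter can be prepended or appended to $T$, forcing $T=S$. To analyze iteration $i$, I would fix the decomposition $S = \alpha \cdot q_i^{\beta} \cdot \gamma$, where $\alpha$ is a (possibly empty) proper suffix of $q_i$, $\gamma$ is a (possibly empty) proper prefix of $q_i$, and $\beta \ge 1$; such a decomposition exists because $q_i$, being a cyclic rotation of the smallest period $p$, is primitive, so $S$ is a window inside $q_i^\infty$. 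Primitivity of $q_i$ then implies that $q_i$ begins in $S$ exactly at the positions $|\alpha|, |\alpha|+|p|, \ldots, |\alpha|+(\beta-1)|p|$, which pins down $t_i = \beta$.

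Next I would split on $\beta$. If $\beta \ge 2$, the algorithm enters the else branch. The key sub-claim is that the largest suffix $l$ of $q_i$ with $l \cdot q_i^{t_i}$ a substring of $S$ equals $\alpha$. The lower bound $|l| \ge |\alpha|$ holds because $\alpha \cdot q_i^\beta$ is a prefix of $S$. For the matching upper bound, a hypothetical suffix of length $|\alpha| + k'$ with $k' > 0$ would force $q_i^{t_i}$ to start at some non-canonical position $j' > 0$ satisfying $j' \equiv -k' \pmod{|p|}$; combined with the length constraint $j' + |l| + t_i|p| \le n = |\alpha| + t_i|p| + |\gamma|$, this yields $|\gamma| \ge |p|$, contradicting $|\gamma| < |p|$. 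A symmetric argument gives $r = \gamma$, so $T = \alpha \cdot q_i^\beta \cdot \gamma = S$, and $\textsf{IsValid}(T)$ terminates the loop.

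If instead $\beta = 1$, the accounting of occurrences of $q_i$ in $S = p^k p'$ shows this can happen only when $k = 2$ and the offset $|\alpha|$ exceeds $|p'|$. The algorithm sets $T = q_i$, but since $|q_i| = |p| < n$, $\textsf{IsValid}$ fails. As the paragraph preceding the lemma describes, the algorithm then degenerates into the letter-by-letter procedure of Skiena and Sundaram, using both append and prepend primitives. At each subsequent iteration, either (a) $t = 1$ and $|q|$ strictly grows by one, or (b) $t \ge 2$ and the Case~1 argument applies to the enlarged $q$, which still has period $p$ as a substring of $S$, finishing the algorithm with $T = S$. Since $|q|$ is bounded above by $n$, the loop eventually terminates. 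The main obstacle I anticipate is the uniqueness argument for $l = \alpha$ and $r = \gamma$ in Case~1: primitivity of $q_i$ is essential, since otherwise alternate alignments of $q_i^\beta$ in $S$ could admit strictly larger $l$ or $r$, and the modular-arithmetic bookkeeping needed to rule them out is the most delicate part of the proof.
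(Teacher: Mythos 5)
Your proof is correct and follows essentially the same route as the paper's own (much terser) argument: split on $t_i>1$ versus $t_i=1$, show $T=S$ in the former case via the alignment of occurrences of $q_i$ (what the paper dismisses as ``easy to see''), fall back to letter-by-letter growth in the latter, and rely on the soundness of \textsf{IsValid} for the stopping condition. The only loose end is your sub-case (b), whose appeal to the Case~1 argument for the enlarged $q$ is not justified as stated; it is also unnecessary, since for the enlarged $q$ one can rule out $t\ge 2$ outright by \cref{lem:periodicity} (there $n<3|p|$, so $|p|<|q|<\tfrac{3}{2}|p|$ and a period of size $\gcd(|q|,|p|)<|p|$ would contradict primitivity of $p$), or simply observe that \textsf{IsValid}-soundness together with the strict growth of the substring $q$ up to length $n$ already forces termination with $T=S$.
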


\begin{proof}
\label{proof:lemmaSuccessIf}
  If $t_i>1$, then it is easy to see that the string $T$, computed in line~\ref{alg:line:T} in iteration $i$, must correspond to $S$. If $t_i=1$, then the algorithm essentially switches to the letter-by-letter algorithm, appending or prepending letters until the end, when $q_m=S$. Correctness of the stopping condition follows from the correctness of \textsf{IsValid}.
\qed
\end{proof}

We now show that, indeed, at some iteration $i$, the candidate period $q_i$ is a cyclic rotation of $p$.

\begin{restatable}{lemma}{lemmaThereExistsi}
\label{lem:there_exists_i}
  There exists an iteration $i\in \{1,2,\dots,m\}$, such that $q_i$ is a cyclic rotation of $p$.
\end{restatable}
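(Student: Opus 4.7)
The plan is to combine two invariants: $|q_i|$ strictly increases in $i$, and $|q_i| \le |p|$ for all $i$. Together these force $|q_{i^*}| = |p|$ at some iteration $i^*$. Since every $q_i$ is, by construction, a substring of $S$ (the append/prepend primitive preserves this, and the expansion step takes a prefix of $T_i$, itself a substring of $S$), and since $k>1$ implies that every length-$|p|$ substring of $S$ lies inside $p^2$ and hence is a cyclic rotation of $p$, this will suffice. Strict monotonicity is immediate from the code: in every iteration, either $\tilde{q}_{i+1} = q_i$ (when $t_i = 1$), or $\tilde{q}_{i+1} = T_i[\,..\,|T_i|-|q_i|+1]$ with $|T_i| \ge t_i|q_i| \ge 2|q_i|$ (when $t_i \ge 2$); either way $|\tilde{q}_{i+1}| \ge |q_i|$, and the subsequent append/prepend in line~\ref{alg:line:append_letter} adds one more letter.

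The heart of the argument is showing $|q_i| \le |p|$ by induction on $i$. The base case $|q_1|=1 \le |p|$ is trivial. For the inductive step, assume $|q_i| \le |p|$. If $|q_i|=|p|$ we are already done at iteration $i$, so assume $|q_i|<|p|$ and bound $|\tilde{q}_{i+1}|$. The case $t_i=1$ is direct: $|\tilde{q}_{i+1}|=|q_i|<|p|$. For $t_i\ge 2$, let $\pi$ denote the smallest period of $T_i$. Since $T_i$ also has $|q_i|$ as a period and $|T_i| \ge 2|q_i| \ge |q_i| + \pi - \gcd(|q_i|,\pi)$, the Periodicity Lemma (\cref{lem:periodicity}) forces $\pi \mid |q_i|$, so in particular $\pi \le |q_i|$. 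Moreover $T_i$ is a proper substring of $S$: otherwise $S$ itself would have period $|q_i|<|p|$, contradicting the minimality of $p$. Applying \cref{lem:next_q} with $\pi$ in place of $|q|$, and letting $T$ be the largest proper substring of $S$ whose smallest period has length $\pi$, we obtain $|T_i|\le |T|$ and $|p| > |T| - \pi + 1$, hence
\[
|\tilde{q}_{i+1}| \;=\; |T_i| - |q_i| + 1 \;\le\; |T| - \pi + 1 \;<\; |p|,
\]
which closes the induction.

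The main obstacle I anticipate is the discrepancy between the algorithm's $T_i$, which is only guaranteed to have $|q_i|$ as a period (not necessarily as its smallest one), and the hypothesis of \cref{lem:next_q}, which explicitly requires the specified length to be the smallest period. The resolution is the step above: invoke the lemma with the actual smallest period $\pi$ of $T_i$, using the Periodicity Lemma to establish $\pi \mid |q_i|$ and then exploiting $\pi \le |q_i|$ to absorb the slack. One also needs to verify that the algorithm cannot terminate prematurely with $|q_i|<|p|$; this holds because \textsf{IsValid} returns true only when $T=S$, and $T_i = S$ would force $S$ to have period $|q_i|<|p|$, again contradicting minimality. Consequently, the induction runs until some iteration $i^*$ with $|q_{i^*}|=|p|$, at which point $q_{i^*}$ is a substring of $S$ of length $|p|$, i.e., a cyclic rotation of $p$.
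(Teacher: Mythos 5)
Your proof is correct and takes essentially the same route as the paper's: both arguments rest on \cref{lem:next_q} (for $t_i>1$) and the single-letter append (for $t_i=1$) to show the candidate lengths can never overshoot $|p|$, so some $q_i$ attains length exactly $|p|$ and, being a substring of the periodic string $S$, must be a cyclic rotation of $p$ --- the paper simply phrases this as a contradiction at the first ``jump'' past $|p|$ rather than as your explicit induction with the invariant $|q_i|\le|p|$. Your additional step of passing to the smallest period $\pi$ of $T_i$ (using $\pi\le|q_i|$) is a nice refinement, since the paper applies \cref{lem:next_q} with $|q_i|$ even though the lemma's hypothesis requires the smallest period of $T_i$, and your explicit check that \textsf{IsValid} cannot trigger while $|q_i|<|p|$ fills in a detail the paper leaves implicit.
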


\begin{proof}
\label{proof:lemmaThereExistsi}
Let us assume that there is no such iteration $i$. Then, since all the $q_i$'s are increasing in length, it must be the case that there exists an iteration $j \in \{1,2,\dots,m-1\}$, such that: $|q_j|<|p|$, but $|q_{j+1}|>|p|$. However, it follows from \cref{lem:next_q} (when $t_j>1$) and the fact that we add a single letter to $q_j$ (when $t_j=1$) that $p$ must be at least as large as $q_{j+1}$, a contradiction.
\qed
\end{proof}

Let us now argue about query complexity.
The following lemma shows that we can charge the logarithmic factors, incurred in each iteration $j$, to the work that would have been required to find the letters introduced in $q_{j+1}$.
This establishes the amortization in query complexity. We denote the number of queries in iteration $j$ of \cref{alg:substr_unknown_size} by $\mathcal{Q}(j)$.

\begin{restatable}{lemma}{lemmaQueriesIteration}
\label{lem:queries_iteration}
  The number of queries $\mathcal{Q}(j)$ performed in iteration $j$ of \cref{alg:substr_unknown_size} is at most $\sigma(|q_{j+1}|-|q_j|)+O(\sigma)$, for $j<m$, or $O(\sigma + \lg n)$, for $j=m$.
\end{restatable}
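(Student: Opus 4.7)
The plan is to bound $\mathcal{Q}(j)$ by summing the per-primitive costs attributed to iteration $j$ of \cref{alg:substr_unknown_size} and relating the doubling-search costs to the growth $|q_{j+1}|-|q_j|$. I would dispose of $j=m$ first: in the terminal iteration every argument of a doubling search is at most $n$ (since $t_m|q_m| \leq n$ and $|l|,|r| \leq |q_m|$), so each doubling search contributes $O(\lg n)$ queries, while the append/prepend primitive and $\textsf{IsValid}$ each contribute $O(\sigma)$, yielding $\mathcal{Q}(m) = O(\sigma + \lg n)$.

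For $j<m$, I would split on the run-length $t_j$. When $t_j = 1$, the algorithm takes the branch that sets $T = q_j$ and skips the two doubling searches on Lines~\ref{alg:line:prefix} and~\ref{alg:line:suffix}; moreover, $q$ is not updated on Line~\ref{alg:line:q_expanded}, so $|q_{j+1}| = |q_j|+1$ and $\mathcal{Q}(j) = O(\sigma)$, which is already at most $\sigma(|q_{j+1}|-|q_j|) + O(\sigma)$. The substantive case is $t_j \geq 2$. Combining the explicit update on Line~\ref{alg:line:q_expanded} with the single letter appended or prepended at the start of iteration $j+1$ (Line~\ref{alg:line:append_letter}) gives
\[
|q_{j+1}| - |q_j| \;=\; |l| + (t_j-2)|q_j| + |r| + 2 \;\geq\; |l| + |r| + t_j,
\]
where the inequality uses $|q_j| \geq 1$. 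The three doubling searches (Lines~\ref{alg:line:runlength}, \ref{alg:line:prefix}, and \ref{alg:line:suffix}) cost in total at most $2\lg t_j + 2\lg|l| + 2\lg|r| + O(1)$ queries, so invoking the elementary inequality $\lg x \leq x$ for $x \geq 1$ (with a trivial $O(1)$ adjustment when $|l|$ or $|r|$ equals zero) and $\sigma \geq 2$,
\[
2\lg t_j + 2\lg|l| + 2\lg|r| \;\leq\; 2\bigl(t_j + |l| + |r|\bigr) \;\leq\; \sigma\bigl(|q_{j+1}|-|q_j|\bigr).
\]
Adding the $O(\sigma)$ contributions from the append/prepend primitive and $\textsf{IsValid}$ yields $\mathcal{Q}(j) \leq \sigma(|q_{j+1}|-|q_j|) + O(\sigma)$, as claimed.

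The crux of the amortization, and the step I expect to be the main obstacle, is establishing the inequality $|q_{j+1}|-|q_j| \geq |l|+|r|+t_j$ for $t_j \geq 2$: it is precisely the aggressive expansion of $q$ on Line~\ref{alg:line:q_expanded}---enlarging $q$ by a block of length $|l|+(t_j-1)|q_j|+|r|-|q_j|+1$ rather than by a single letter---that makes the linear growth of $|q|$ large enough to pay for the logarithmic costs of all three doubling searches. Once this identity is pinned down, the remainder is a routine application of $\lg x \leq x$.
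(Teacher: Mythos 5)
Your proof is correct and follows essentially the same route as the paper's: both rest on the identity $|q_{j+1}| = |l| + (t_j-1)|q_j| + |r| + 2$ (for $t_j>1$), bound the doubling-search costs via $\lg x \le x$ together with $|q_j|\ge 1$, and finish with $\sigma \ge 2$, handling $t_j=1$ and the terminal iteration $j=m$ separately as $O(\sigma)$ and $O(\sigma+\lg n)$ respectively. The only differences are cosmetic (you lower-bound $|q_{j+1}|-|q_j|$ by $|l|+|r|+t_j$ before applying $\lg x \le x$, whereas the paper folds the same arithmetic into a chain of implications).
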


\begin{proof}
\label{proof:lemmaQueriesIteration}
  Let $l_j$ and $r_j$ denote, respectively, the lengths of the prefix $l$ and suffix $r$ computed in lines~\ref{alg:line:prefix} and \ref{alg:line:suffix} of \cref{alg:substr_unknown_size} in iteration $j$. The query complexity in any iteration $j$ is
\[
\mathcal{Q}(j)\le 2\lfloor \lg t_j\rfloor + 1 + 2\lfloor \lg l_j\rfloor + 1 + 2\lfloor \lg r_j\rfloor + 1 + 4\sigma
\]
  Let us assume that $t_j>1$, since otherwise the query complexity is $O(\sigma)$ and, therefore, agrees with the query complexity that is stated in the lemma.

  When $j=m$, it must be the case that $q_m$ is a cyclic rotation of $p$, and therefore has size $|p|$. Thus, we spend at most: (i) $\sigma$ queries when appending the $p\textsuperscript{th}$ letter, (ii) $2\lfloor \lg n/|p| \rfloor+1$ queries to determine the run-length $t_m$, and (iii) $2(2\lfloor \lg |p|\rfloor + 1)$ queries to determine the suffix and prefix of lengths $l_m$ and $r_m$, respectively. Notice that the combined $\log$ factors result in no less than $\Theta(\lg n)$. Thus, when $j=m$, the overall query complexity is $O(\sigma + \lg n)$.

  When $j<m$, we have the following:
\begin{align*}
  \lg t_j & \le t_j - 1 && (t_j > 1)\\
  \implies \lg t_j & \le (t_j-2)q_j + 1 && (q_j \ge 1)\\
  \implies \lg t_j + \lg l_j + \lg r_j & \le (t_j-2)q_j + l + r + 1 && (\lg x < x)\\
  \implies 2(\lg t_j + \lg l_j + \lg r_j) & \le 2((t_j-2)q_j + l + r + 1)\\
  \implies \mathcal{Q}(j) & \le 2((t_j-2)q_j + l + r) + 2 + 3 + O(\sigma)&& (\text{def. of }\mathcal{Q}(j))\\
  \implies \mathcal{Q}(j) & \le 2((t_j-2)q_j + l + r + 2) + O(\sigma)\\
  \implies \mathcal{Q}(j) & \le 2(|q_{j+1}|-|q_j|) + O(\sigma) && (\star)\\
  \implies \mathcal{Q}(j) & \le \sigma(|q_{j+1}|-|q_j|) + O(\sigma) && (\sigma \ge 2),\\
\end{align*}
  where $(\star)$ follows from the fact that, when $t_j>1$, $|q_{j+1}|=(t_j-1)|q_j| + l + r + 2$.
\qed
\end{proof}

Finally, we are in conditions of proving \cref{thm:unknown}, recalled below for convenience:

\thmUnknown*
\begin{proof}
  Correctness follows from \cref{lem:success_if,lem:there_exists_i}. As for the query complexity, it follows from \cref{lem:queries_iteration}, that the overall query complexity of \cref{alg:substr_unknown_size} is
  \[
  \sum_{j=1}^m \mathcal{Q}(j)
  \]
  Let $i$ be the iteration in which $|q_i|=|p|$ (see \cref{lem:there_exists_i}) and let us consider the queries done up to and after iteration $i-1$. Thus, by \cref{lem:queries_iteration}:
  \begin{align*}
    \sum_{j=1}^m \mathcal{Q}(j) &\ = \sum_{j=1}^{i-1} \Big(\sigma(|q_{j+1}|-|q_j|) + O(\sigma)\Big) + \sum_{j=i}^{m} \mathcal{Q}(j)\\
                                &\ = O(\sigma|p|) + \sum_{j=i}^{m} \mathcal{Q}(j),
  \end{align*}
  where the last equality follows from the telescoping nature of the first summation. As for the second summation, regarding the queries done after iteration $i-1$, we consider two cases. If $i=m$, then we spend either $O(\sigma)$ queries if $t_i=1$, or $O(\sigma + \lg n)$ queries if $t_i>1$, by \cref{lem:queries_iteration}. If, on the other hand, $i<m$, then notice that it must have been the case that $t_j=1$ for all $j\in \{i,i+1,\dots,m\}$. Thus, the total number of letters in $S$ left to recover at the end of iteration $i-1$ is at most $2|q_i|-1=2|p|-1$, each of which is added during each iteration $j\in\{i,i+1,\dots,m\}$ using $O(\sigma|p|)$ queries in total. Thus, whether or not $i=m$, the overall query complexity is
  \[\sum_{j=1}^m \mathcal{Q}(j) = O(\sigma|p| + \lg n)\]
\qed
\end{proof}

\subsection{Corrupted Periodic Strings}
\label{subsec:corrupted}

Let us assume throughout the remainder of this section that
$S$ is a $d$-corrupted periodic string of approximate period $p$. Recall that $S$ is a $d$-corrupted periodic string if 
there exists a periodic string $S'$ of period $p$, such
that $|S|=|S'|$ and $\delta(S', S) \le d$, where $\delta$ is the
Hamming distance. Again, the main
idea of the algorithm described in this section consists of:
(1) determining a cyclic rotation of a true period (in this case,
there might be multiple true periods), by iteratively growing a
candidate period $q$, and (2) using $q$ to recover $S$ accordingly.
However, in the presence of errors, each of these steps becomes
more difficult to realize efficiently. For example, in the first
step, we might be growing a candidate period $q$ that includes an
error. So, in order to rightfully reject the hypothesis that $q$
is at most as large as some approximate period $p$, our algorithm should
be able to tell the difference between (i) $|p|=|q|$ and $q$ includes an error and (ii) $|p|>|q|$. Otherwise, the algorithm will keep on growing $q$ until
it is equal to $S$, possibly incurring $\sigma n$ queries. In
addition, the second step of using $q$ to determine $S$ requires
more work, since the presence of errors discards the possibility
of simply concatenating $q$ with itself the required number of
times.
Because of these issues, it is crucial that our
algorithm understands when a candidate period is or not
free of errors. Thus, the algorithm relies on the following.

\begin{lemma}\label{lem:must_be_true_rotation}
  Let $A$ be any length-$(2d+1)|p|$ substring of a $d$-corrupted periodic string $S$ of approximate period $p$, corresponding to the concatenation of length-$|p|$ substrings $q_1,q_2,\dots,q_{2d+1}$. Then, a cyclic rotation of $p$ must be the only substring $q_j$ appearing at least $d+1$ times in $q_1,q_2,\dots,q_{2d+1}$.
\end{lemma}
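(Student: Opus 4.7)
The plan is to compare $A$ to the corresponding aligned substring of the true periodic string $S'$, and then exploit the budget of at most $d$ Hamming errors via a block-counting argument. Let $A'$ be the length-$(2d+1)|p|$ substring of $S'$ at the same position as $A$ in $S$, and decompose $A' = \pi_1 \pi_2 \cdots \pi_{2d+1}$ into $|p|$-length blocks aligned with $q_1, q_2, \dots, q_{2d+1}$. Because $S'$ has period $|p|$, the blocks $\pi_1, \dots, \pi_{2d+1}$ are all identical, equal to a single cyclic rotation $\pi$ of $p$ (namely the rotation determined by where $A$ starts in $S'$ modulo $|p|$).

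For the existence part, observe that $\delta(S,S') \le d$ implies $\delta(A,A') \le d$, and any block $q_j$ with $q_j \ne \pi_j = \pi$ contributes at least one Hamming error. Since the blocks $q_1, \dots, q_{2d+1}$ are disjoint, at most $d$ of them can differ from $\pi$, so at least $(2d+1)-d = d+1$ of them equal $\pi$; hence the cyclic rotation $\pi$ of $p$ appears among the $q_j$ at least $d+1$ times. For uniqueness, suppose for contradiction that some $r \ne \pi$ also occurs at least $d+1$ times among $q_1, \dots, q_{2d+1}$. Each of those $\ge d+1$ blocks equals $r$ and therefore differs from $\pi_j = \pi$, so each contributes at least one Hamming error between $A$ and $A'$; summing over these disjoint blocks yields $\delta(A,A') \ge d+1$, contradicting $\delta(A,A') \le d$.

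The proof is short, and the main conceptual ingredient is the alignment observation that all $|p|$-aligned blocks of a length-$|p|$-periodic string coincide. The only subtlety I anticipate is taking care of the case where $A$ begins in the middle of an occurrence of $p$ in $S'$, so that the common block $\pi$ is a cyclic rotation of $p$ rather than $p$ itself; this costs nothing beyond fixing notation.
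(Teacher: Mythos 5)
Your proposal is correct and follows essentially the same reasoning as the paper's proof: align $A$ with the corresponding block-decomposed substring of the uncorrupted string $S'$, note that the aligned blocks are all one cyclic rotation of $p$, and use the $\le d$ error budget over disjoint blocks to get both existence and uniqueness of the ($d{+}1$)-frequent block. The only cosmetic difference is that the paper settles uniqueness by counting (two strings each appearing $d{+}1$ times would need $2d{+}2 > 2d{+}1$ blocks), while you re-use the error bound; both are fine.
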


\begin{proof}
  Clearly, there is some $q_i$ that is a cyclic rotation of $p$. Moreover, there is some $q_j$ that appears at least $d+1$ times in $q_1,q_2,\dots,q_{2d+1}$, or the number of errors would exceed $d$, by the pigeonhole principle. If $i\neq j$, then each occurrence of $q_j$, contributes at least 1 error, resulting in at least $d+1$ errors, a contradiction. Finally, $q_j$ must be the only string with $d+1$ appearances in $q_1,q_2,\dots,q_{2d+1}$, by the pigeonhole principle.
\qed
\end{proof}

\begin{algorithm}
\caption{Reconstructing a $d$-corrupted periodic string $S$.}
\label{alg:substr_errors}
\DontPrintSemicolon
\SetNlSty{textbf}{}{.}
\SetNlSkip{.8em}
\SetAlgoVlined
\SetArgSty{text}

\nl Let $A=\varepsilon$\;
\nl \Repeat{\textsf{success}}{
\nl Append/prepend $\min(2d+1,|S|-|A|)$ letters to $A$ \complexity*{$\sigma(2d+2)$}
\nl Let $q$ be the candidate period that is a substring of $A$,\;
  \hspace{.5em} as determined by \cref{lem:must_be_true_rotation} \;
\nl $(\textsf{success}, T)=\textsf{Expand}(q)$ \complexity*{$O(d\sigma + d\lg \frac{n}{d+1})$}
  }
\nl \textbf{Output} $T$
\end{algorithm}

\begin{function}
\SetAlCapNameSty{textsf}
\caption{Expand($q$)\hfill\textcolor{blue}{$(O(d\sigma + d\lg \frac{n}{d+1}))$}}\label{alg:expand}
\DontPrintSemicolon
\SetNlSty{textbf}{}{.}
\SetNlSkip{.8em}
\SetAlgoVlined

\nl Let $T = q$, $\textsf{done}=\textsf{False}$\;
\nl \While{$\delta(T, q^\infty[..|T|]) \le d$ \textbf{and} \textsf{not done}}{
  \nl Find the largest substring $R$, such that $\textsf{IsSubstr}(T\cdot R)$ \complexity*{$2\lfloor\lg |R|\rfloor + 1$}\label{alg:line:R}
  \nl Find the largest substring $L$, such that $\textsf{IsSubstr}(L\cdot T\cdot R)$ \complexity*{$2\lfloor\lg |L|\rfloor + 1$}\label{alg:line:L}
  \nl Let $r$ $(l)$ be the letter to the right (left) of $L\cdot T \cdot R$ or $\varepsilon$ if there is none \complexity*{$2\sigma$} \label{alg:line:left_right}
  \nl Let \textsf{done} $=(r==\varepsilon$ \textbf{and} $l==\varepsilon)$\;
  \nl Let $T=l\cdot L\cdot T \cdot R\cdot r$\;
}
\nl \lIf{$\delta(T, q^\infty[..|T|]) > d$}{
  \Return $(\textsf{False}, \_)$
}
\nl \Return $(\textsf{True}, T)$
\end{function}

Let us give the details for our algorithm, which is able to recover $S$, even when its size $n$ is unknown (see \cref{alg:substr_errors} for reference). We maintain an initially empty substring, $A$, of $S$, by extending it with $2d+1$ letters in each iteration, using the append and prepend primitives (as described in \cref{subsec:uncorrupted:known}), potentially incurring an extra $\sigma$ queries for detecting a left or right endpoint of $S$. In the case that $n=|S|<|p|(2d+1)$, the last iteration requires only $\min(2d+1,|S|-|A|)$ new letters. Thus, after adding letters to $A$ in the $i\textsuperscript{th}$ iteration, $A$ is a substring of $S$ of size at most $i(2d+1)$. Before advancing to the next iteration, we determine the only possible length-$i$ candidate period $q$ that could have originated $A$ with at most $d$ errors (by \cref{lem:must_be_true_rotation}). At this point we do not know if some approximate period $p$ has size $|p|=i$, so we try to use $q$ to recover the rest of the string, halting whenever the total number of errors exceeds $d$, in which case we advance to the next iteration and repeat this process for a new candidate period of size $i+1$. This logic is in the subroutine \hyperref[alg:expand]{$\textsf{Expand}(q)$}, described next (see the pseudo-code for reference). It initializes a string $T$ to $q$ and expands it by doing the following at each iteration:
\begin{enumerate}
  \item Appending to $T$ the largest periodic substring of period $\overrightarrow{q}$, where $\overrightarrow{q}$ is the appropriate cyclic rotation of $q$ that aligns with the right-endpoint of $T$. This can be done efficiently by determining the maximum value of $x$, using a doubling search, for which $$\textsf{IsSubstr}(T\cdot (\overrightarrow{q}^\infty[..\ x])),$$ incurring $2\lfloor \lg x\rfloor+1$ queries. The cyclic rotation $\overrightarrow{q}$ can be determined with no additional queries, by maintaining the value $x'$, which is the value of $x$ in the previous iteration, i.e. $\overrightarrow{q}$ is the cyclic rotation of $q$ starting at the index $(x' \mod |q|+2)$ of $q$.
  
  \item Prepending to $T$ the largest periodic substring of period $\overleftarrow{q}$, where $\overleftarrow{q}$ is the appropriate cyclic rotation of $q$ that aligns with the left-endpoint of $T$. This can be done efficiently by determining the maximum value of $y$, using a doubling search, for which $$\textsf{IsSubstr}(((\overleftarrow{q}^R)^\infty[..\ y])^R\cdot T),$$ incurring $2\lfloor \lg y\rfloor+1$ queries. The cyclic rotation $\overleftarrow{q}$ can be determined with no additional queries in a similar fashion to $\overrightarrow{q}$.

  \item Determining, if they exist, the letters immediately to the left and to the right of $T$, using $2\sigma$ queries, and adding them to $T$.
\end{enumerate}

The expansion process in $\textsf{Expand}(q)$ halts when either the total number of errors with respect to $q$, $\delta(T, q^\infty[..|T|])$, exceeds $d$ (in which case we advance to the next iteration), or when $T=S$ (in which case we return $T$).
% Need this comment to save vertical space
\begin{remark}\label{rem:expand}
  $\textsf{Expand}(q)$ successfully returns $S$ if and only if $q$ is a cyclic rotation of some approximate period.
\end{remark}
% Need this comment to save vertical space
\begin{restatable}{lemma}{lemmaExpandComplexity}
\label{lem:expand_complexity}
  The number of queries performed during any call to \textsf{Expand} is $O(d\sigma + d\lg \frac{n}{d+1})$.
\end{restatable}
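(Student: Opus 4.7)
The plan is to bound both the number of iterations of the \textbf{while} loop in \textsf{Expand} and the per-iteration query cost. Let $N$ denote the number of iterations performed, and let $R_i, L_i, l_i, r_i$ denote the values of $R, L, l, r$ in iteration $i$.

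First, I would establish that $N \le d+1$. The key observation is that each non-terminal iteration contributes at least one unit to $\delta(T, q^\infty[..|T|])$. Indeed, lines~\ref{alg:line:R} and~\ref{alg:line:L} perform doubling searches that return the \emph{maximal} extensions $R_i$ and $L_i$ along the periodic continuation of $q$; this maximality implies that the letters $r_i$ and $l_i$ subsequently found in line~\ref{alg:line:left_right} must either equal $\varepsilon$ (we have reached an endpoint of $S$) or disagree with the corresponding position of $q^\infty$, and in the latter case they add at least one to the Hamming distance of $T$ from $q^\infty[..|T|]$. An iteration is terminal precisely when either $l_i = r_i = \varepsilon$ (so $T = S$ and the loop exits successfully) or when the guard $\delta(T, q^\infty[..|T|]) \le d$ is violated. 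In every non-terminal iteration at least one of $l_i, r_i$ is non-empty and hence increases the error count by at least one. Since the error count starts at $0$ and the loop stops as soon as it exceeds $d$, there can be at most $d+1$ iterations.

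Next, I would bound the cost of a single iteration. By inspection, iteration $i$ incurs $2\lfloor \lg |R_i|\rfloor + 1$ queries for the doubling search of $R_i$, $2\lfloor \lg |L_i|\rfloor + 1$ for that of $L_i$, and $2\sigma$ queries for determining $l_i, r_i$ in line~\ref{alg:line:left_right}. Summing over all iterations,
\[
\textsf{cost} \;\le\; \sum_{i=1}^{N}\Bigl(2\lg |R_i| + 2\lg |L_i| + O(\sigma)\Bigr) \;=\; O\bigl((d+1)\sigma\bigr) + 2\sum_{i=1}^{N}\bigl(\lg |R_i| + \lg |L_i|\bigr).
\]

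Finally, I would bound the logarithmic contribution by concavity. The extensions $R_i, L_i$ correspond to disjoint portions of $S$, so $\sum_{i=1}^{N} (|R_i| + |L_i|) \le n$. The sum $\sum_i (\lg |R_i| + \lg |L_i|)$ contains at most $2N \le 2(d+1)$ non-negative terms whose arguments sum to at most $n$; by Jensen's inequality applied to the concave function $\lg(\cdot)$, this sum is maximized when all arguments are equal, yielding
\[
\sum_{i=1}^{N} \bigl(\lg |R_i| + \lg |L_i|\bigr) \;\le\; 2(d+1)\,\lg \frac{n}{2(d+1)} \;=\; O\!\left(d \lg \tfrac{n}{d+1}\right).
\]
Adding the $O((d+1)\sigma)$ term yields the claimed bound $O\!\left(d\sigma + d\lg \tfrac{n}{d+1}\right)$.

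The main obstacle is the first step: rigorously arguing that the maximality of the doubling searches forces every non-terminal iteration to contribute a fresh disagreement to $\delta(T, q^\infty[..|T|])$. The subtlety is that the error count is measured against $q^\infty$ (which need not be the true underlying periodic string $S'$); what saves us is precisely that the doubling search for $R_i$ halted because $S$ diverged from $\overrightarrow{q}^\infty$ at the next position, so the letter $r_i$ observed in line~\ref{alg:line:left_right} is by construction an error with respect to $q^\infty$ whenever $r_i \ne \varepsilon$, and symmetrically for $l_i$.
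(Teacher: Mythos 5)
Your proposal is correct and follows essentially the same route as the paper's proof: bound the number of while-loop iterations by $d+1$ via the errors forced by maximality of the doubling searches, charge $O(\sigma)$ per iteration for the endpoint letters, and control the sum of logarithmic doubling-search costs with Jensen's inequality applied to lengths summing to at most $n$ over $O(d)$ terms. The only cosmetic difference is that the paper credits at least two new errors to each non-final iteration while you credit at least one; both yield the same $d+1$ iteration bound and the same asymptotic query complexity.
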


\begin{proof}
\label{proof:lemmaExpandComplexity}
  Each call to \textsf{Expand} uses at most $2(d+1)\sigma$ queries to determine the corrupted letters, as well as the left/right endpoints of $S$ -- the total number of iterations of the while loop in \textsf{Expand} is $d+1$, since every iteration except the last introduces at least 2 errors in $T$, and each iteration incurs $2\sigma$ queries.

  In addition, the number of queries used by $\textsf{Expand}(q)$ during the doubling searches is 
\[
\sum_{j=1}^{|q|} \left(2\lfloor\lg R_j\rfloor + 2\lfloor\lg L_j\rfloor + 2\right),
\] 
where 
$R_j$ and $L_j$ denote, respectively, the lengths of the substrings determined via doubling searches in lines~\ref{alg:line:R} and \ref{alg:line:L}, during the $j\textsuperscript{th}$ call to \textsf{Expand}. Since the total number of iterations is $d+1$, there is at most $d+2$ such $R_j$'s and $L_j$'s. Moreover, the above summation is maximized when all the $R_j$'s and $L_j$'s have the same average value of at most $(n-d)/(d+1)$. This follows from Jensen's inequality and concavity of $\log$. Thus, the overall time complexity is $$O\left(d\sigma + d\lg \frac{n}{d+1}\right).$$
\qed
\end{proof}

Correctness and query complexity of our algorithm follows from \cref{rem:expand,lem:must_be_true_rotation,lem:expand_complexity}, giving us the following result:

\begin{restatable}{theorem}{thmSubstrError}
\label{thm:substr_error}
  We can reconstruct a length-$n$ $d$-corrupted periodic string $S$ using $O(d\sigma|p| + d|p|\lg \frac{n}{d+1})$ queries, for known $d$, unknown $|p|$, regardless of whether we know $n$, where $p$ is a smallest approximate period of $S$.
\end{restatable}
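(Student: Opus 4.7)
The plan is to derive \cref{thm:substr_error} by combining three ingredients already in hand—\cref{lem:must_be_true_rotation}, \cref{rem:expand}, and \cref{lem:expand_complexity}—and then unrolling the outer loop of \cref{alg:substr_errors}. All the hard work of isolating a cyclic rotation of the period under errors, of certifying a correct candidate, and of bounding the cost of a single \textsf{Expand} call has been done; what remains is essentially bookkeeping over outer iterations.

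First I would establish correctness by bounding the number of outer iterations. After $i$ iterations of the \textbf{repeat} loop, $A$ is a contiguous substring of $S$ of length $\min(i(2d+1),n)$, since in each iteration the algorithm grows $A$ by $2d+1$ fresh letters via the append/prepend primitive (with at most $\sigma(2d+2)$ queries, accounting for possible boundary detection). Consider the iteration $i=|p|$. If $|A|=n$, \textsf{Expand} can be invoked with any candidate and reports $S$ once it sees that no letter can extend $T$ any further. Otherwise $|A|=(2d+1)|p|$ and $A$ splits into $2d+1$ consecutive length-$|p|$ blocks; by \cref{lem:must_be_true_rotation}, the unique block appearing at least $d+1$ times must be a cyclic rotation of an approximate period $p$, so by \cref{rem:expand} the call \textsf{Expand}$(q)$ returns $(\textsf{True}, S)$ and the algorithm halts correctly. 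For every earlier iteration $i<|p|$, the candidate $q$ produced in Line~4 need not be a cyclic rotation of any approximate period, but \cref{rem:expand} guarantees that \textsf{Expand}$(q)$ returns $(\textsf{False},\_)$, so the algorithm cannot stop with an incorrect output.

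Then I would total the queries. Each outer iteration costs $\sigma(2d+2)$ queries for growing $A$, plus $O(d\sigma + d\lg \tfrac{n}{d+1})$ queries for one call to \textsf{Expand} by \cref{lem:expand_complexity}, for a total of $O\!\left(d\sigma + d\lg \tfrac{n}{d+1}\right)$ queries per iteration. Since the loop executes at most $|p|$ times, the overall query complexity is
\[
O\!\left(|p|\cdot\left(d\sigma + d\lg \tfrac{n}{d+1}\right)\right) \;=\; O\!\left(d\sigma|p| + d|p|\lg \tfrac{n}{d+1}\right),
\]
matching the statement of the theorem and accommodating both the known-$n$ and unknown-$n$ cases, since no step of the outer loop depends on knowing $n$ in advance.

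The main obstacle, such as it is, does not live here but inside \cref{lem:expand_complexity}, whose proof had to use Jensen's inequality and concavity of $\log$ to distribute the $O(d)$ doubling searches across the $d+1$ expansion rounds; without that per-call bound, the naive multiplicative-$\lg n$ overhead from multiple doubling searches would ruin the amortization. Given that lemma, the remaining argument is mechanical: one simply observes that \cref{lem:must_be_true_rotation} forces termination by iteration $|p|$, and \cref{rem:expand} rules out premature acceptance for $i<|p|$.
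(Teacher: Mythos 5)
Your proposal is correct and follows essentially the same route as the paper's own proof: it invokes \cref{lem:must_be_true_rotation} to ensure the candidate at iteration $|p|$ is a true cyclic rotation, \cref{rem:expand} for correctness of termination, and \cref{lem:expand_complexity} multiplied over at most $|p|$ outer iterations (plus the $(2d+2)\sigma|p|$ append/prepend cost) for the query bound. The extra remarks you add about premature acceptance and the $n<(2d+1)|p|$ boundary case are consistent elaborations of the same argument, not a different approach.
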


\begin{proof}
\label{proof:thmSubstrError}
  At the $|p|^\textsuperscript{th}$ iteration of the main loop, $A$ has size $(2d+1)|p|$ and, by \cref{lem:must_be_true_rotation}, $q$ must correspond to a cyclic rotation of some approximate period $p$. Correctness of reconstruction then follows from \cref{rem:expand}.

  The overall query complexity consists of the queries used to expand $A$ in each iteration and the queries used in the calls to the subroutine \textsf{Expand}. The former requires at most $(2d+2)\sigma|p|$ queries overall, and the latter requires at most $O(d\sigma|p| + d|p|\lg \frac{n}{d+1})$, by \cref{lem:expand_complexity}. Thus, the overall query complexity is $$O\left(d\sigma|p| + d|p|\lg \frac{n}{d+1}\right).$$
\qed
\end{proof}

If $n$ is known, we could save the queries used to check the left and right endpoints of $S$ in line~\ref{alg:line:left_right} of \textsf{Expand}, but this does not alter the query complexity asymptotically.

We assume a small enough number of errors, following~\cite{aelps:12}. \Cref{alg:substr_errors} is an improvement to the $O(\sigma n)$ letter-by-letter algorithm of Skiena and Sundaram~\cite{DBLP:journals/jcb/SkienaS95} for general strings of known and unknown size, when $d=O(\sigma n / (\sigma |p| + |p|\lg n))$. In particular, if $d = O(k/(1+\lg n))$, then \cref{alg:substr_errors} is an improvement, where $k=\lfloor n/|p|\rfloor$. Thus, our algorithm performs better if there is, on average, at most 1 error in every other $O(1 + \lg n)\textsuperscript{th}$ non-overlapping occurrence of $p$. If the number of errors is not small enough, then one should run the letter-by-letter algorithm intercalated with ours, to get an upper bound of $O(\sigma n)$ queries, giving us \cref{thm:substr_error_intercalated}, which we referred to at the beginning of \cref{sec:substr}, recalled here for convenience.

\thmSubstrIntercalated*

\section{Subsequence Queries}
\label{sec:subsequence}

%!TEX root = main.tex

We study the query complexity for a length-$n$ string, $S$,
subject to yes/no \emph{subsequence} queries, 
\textsf{IsSubseq}, i.e., queries of the 
form ``Is $X$ a subsequence of $S$?''

We begin with a simple lower bound.

\begin{restatable}{theorem}{thmLowerBoundSubsequence}
\label{thm:lower2}
Reconstructing
a length-$n$ periodic string, $S=p^kp'$, of smallest period $p$, requires 
at least $|p|\lg \sigma$ \textsf{IsSubseq} queries,
even if $n$ and $|p|$ are known.
\end{restatable}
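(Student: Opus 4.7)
The plan is to mirror the information-theoretic decision-tree argument used for \cref{thm:lower}, since the query type is essentially irrelevant to the counting argument; what matters is only that \textsf{IsSubseq} queries return a single bit.

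First, I would observe that, since we are allowed to fix $n$ and $|p|$, we can restrict attention to inputs of the form $S = p^k p'$ with $p$ ranging over all strings of length $|p|$ over $\Sigma$, and $k,|p'|$ determined by $n$ and $|p|$. In particular, as $p$ ranges over $\Sigma^{|p|}$ we obtain $\sigma^{|p|}$ distinct candidate strings $S$, provided $p$ is genuinely the smallest period for each; to keep the argument clean I would note that even if a few of these $p$'s fail to be the shortest period, the induced $S$'s are still pairwise distinct as strings (different periods give different prefixes), so we still get $\sigma^{|p|}$ distinct possible inputs that any reconstruction algorithm must distinguish.

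Next, I would model any adaptive reconstruction algorithm $A$ as a binary decision tree, where each internal node corresponds to an \textsf{IsSubseq} query and the two children correspond to the yes/no answers, with the leaves labeled by the string $A$ outputs upon reaching that leaf. Since $A$ must output the correct $S$ on every input, each of the $\sigma^{|p|}$ possible inputs must map to a distinct leaf of $A$. Hence $A$ has at least $\sigma^{|p|}$ leaves, so its height, which equals the worst-case number of queries performed, is at least $\lceil \lg(\sigma^{|p|})\rceil \ge |p|\lg \sigma$.

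The only mild obstacle is ensuring the $\sigma^{|p|}$ count of distinguishable inputs is legitimate under the hypothesis that $p$ is the \emph{smallest} period. I would handle this by remarking that one can replace $|p|$ by any value for which a primitive period exists in $\Sigma^{|p|}$ (for example, pick a prime $|p|$ and note that non-primitive $p$'s form a strictly smaller set, so the count is still $\sigma^{|p|} - o(\sigma^{|p|})$ and the $\lg$ loses only a lower-order term), or, more simply, note that the inputs $p^k p'$ for distinct $p$'s are all distinct strings and the algorithm must still output each one correctly, regardless of whether the definition labels $p$ as the smallest period. Either way, the $|p|\lg \sigma$ lower bound follows.
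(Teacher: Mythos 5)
Your proposal is correct and matches the paper's own argument: the paper proves this theorem by simply invoking the same information-theoretic decision-tree argument used for \cref{thm:lower}, counting $\sigma^{|p|}$ candidate periods against the binary answers of the queries. Your extra care about primitivity (whether every $p\in\Sigma^{|p|}$ is genuinely a smallest period) is a minor refinement the paper glosses over, but it does not change the approach.
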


\begin{proof}
\label{proof:lowerbound:subsequence}
The proof follows that of \cref{thm:lower} for substring queries, which can be found in \cref{sec:substr}.
\qed
\end{proof}

Let us next describe an algorithm for reconstructing a periodic
length-$n$ periodic string,
$S=p^kp'$, of smallest period $p$.
We begin by performing either binary searches (if $n$ is known) or doubling search (if $n$ is unknown), using queries of the form
\textsf{IsSubseq}($a^i$) to determine the number of $a$'s in $S$,
for each $a\in\Sigma$.
From all of these queries, we can determine the value of $n$ if
it was previously unknown.
This part of our algorithm requires either
$\sigma\lceil\lg n\rceil$  or
$2\sigma\lceil\lg n\rceil$ queries in total, depending on whether we knew $n$
at the outset.

If the number of $a$'s in $S$ is $n$, for any $a\in\Sigma$,
then we are done, so let us assume the number of $a$'s in $S$ 
is less than $n$, for each $a\in\Sigma$.
Thus, when we complete all our doubling/binary searches,
for each letter, $a \in\Sigma$ that occurs a nonzero number of times
in $S$,
we have a maximal subsequence, $S_a$, of $S$,
consisting of $a$'s.  Moreover, since $S$ is periodic with a
period that repeats $k$ times, each $S_a$ is periodic with a period that
repeats $k$ times. Unfortunately, at this point in the algorithm, we may not
be able to determine $k$.
So next we create a binary merge tree, $T$, with each of its leaves
associated with a nonempty subsequence, $S_a$, much in the style
of the well-known merge-sort algorithm, so that $T$ has
height $\lceil \lg \sigma\rceil$.
We then perform a bottom-up merge-like procedure in $T$ using
\textsf{IsSubseq} queries, as follows.

Let $v$ be an internal node in $T$, with children $x$ and $y$ for which we
have inductively determined periodic subsequences, $S_x$ and $S_y$,
respectively, of $S$.
Let $n_x=|S_x|$ and $n_y=|S_y|$. To create the subsequence, $S_v$, for $v$,
we need to perform a merge procedure to interleave $S_x$ and $S_y$.
To do this, we maintain indices $i$ and $j$ in $S_x$ and $S_y$, respectively,
such that we have already determined an interleaving,
$S_v[..i+j]$, of $S_x[..i]$ and $S_y[..j]$.
Initially, $i=j=0$.
We then perform the query
\textsf{IsSubseq}($S_v[..i+j]\cdot S_x[i+1]\cdot S_y[j+1..n_y]$).
Suppose the answer to this query is ``yes''.
In this case, we set
$S_v[..i+j+1]= S_v[..i+j]\cdot S_x[i+1]$
and we increment $i$.
If, on the other hand, the answer to the above query is ``no'', then
we set $S_v[..i+j+1]= S_v[..i+j]\cdot S_y[j+1]$,
because in this case we know that 
\textsf{IsSubseq}($S_v[..i+j]\cdot S_y[j+1]\cdot S_x[i+1..n_x]$) would
return ``yes''.
If this latter condition occurs, then we increment $j$.

Let ${q_v}$ denote this new interleaving prefix, $S_v[..i+j]$,
and let ${\hat k}=\lfloor n/|{q_v}|\rfloor$.
If ${q_v}^{\hat k}{q_v}'$ is a plausible interleaving of 
$S_x$ and $S_y$, where ${q_v}'$ is a prefix of ${q_v}$,
then we next ask the query
\textsf{IsSubseq}(${q_v}^{\hat k}{q_v}'$).
If the answer is ``yes'', then we set
$S_v={q_v}^{\hat k}{q_v}'$ and this completes the merge.
Otherwise, we continue incrementally interleaving $S_x$ and $S_y$,
using the current values of $i$ and $j$, by iterating
the procedure described above.
Clearly, this merge procedure asks at most $2|{q_v}|$ queries in total.

\begin{lemma}\label{lem:subseq_qv}
Let $p_v$ be the subsequence of $p$ consisting of the letters
from $S_v$.
Then $|{q_v}|\le |p_v|$.
\end{lemma}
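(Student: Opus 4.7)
The plan is to show that $q_v$ is always a prefix of $S_v$, and that the merge's periodicity-detection step fires as soon as $|q_v|$ first reaches $|p_v|$. The three key ingredients are: (i) establish that $S_v$ is the \emph{unique} interleaving of $S_x$ and $S_y$ that is a subsequence of $S$; (ii) conclude from (i) that the greedy yes/no queries correctly append the next character of $S_v$ to $q_v$; and (iii) argue that once $|q_v| = |p_v|$, the candidate $q_v^{\hat k} q_v'$ coincides with $S_v$, so the periodicity query succeeds and the merge halts.

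For step (i), both $S_v$ and any interleaving of $S_x$ and $S_y$ have length $n_x + n_y$; but this is also the number of $\Sigma_v$-characters occurring in $S$, because the leaf-alphabets are disjoint. Hence any length-$(n_x+n_y)$ subsequence of $S$ built from $\Sigma_v$ must match, in order, every $\Sigma_v$-character of $S$, which is exactly $S_v$. For step (ii), I would argue by induction on $i+j$: if $q_v = S_v[..i+j]$ and the next character of $S_v$ is $S_x[i+1]$, then $q_v \cdot S_x[i+1] \cdot S_y[j+1..n_y]$ is a prefix of $S_v$ followed by a suffix of $S_y$, hence a subsequence of $S$, so the algorithm correctly appends $S_x[i+1]$; conversely, if the next character of $S_v$ is $S_y[j+1]$, then a ``yes'' answer to the greedy query would exhibit an interleaving distinct from $S_v$, violating uniqueness from (i), so the query answers ``no'' and the algorithm appends $S_y[j+1]$. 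In both cases $q_v$ remains a prefix of $S_v$.

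For step (iii), restricting $S = p^k p'$ to $\Sigma_v$ commutes with concatenation, so $S_v = p_v^k p_v'$ with $p_v'$ a prefix of $p_v$. As soon as $|q_v| = |p_v|$, step (ii) gives $q_v = p_v$; the candidate $q_v^{\hat k} q_v'$ then coincides with $S_v$ for the appropriate choice of $q_v'$, so it is a plausible interleaving of $S_x$ and $S_y$ and the \textsf{IsSubseq} query returns ``yes'', terminating the merge. This yields $|q_v| = |p_v|$ at termination, so in particular $|q_v| \le |p_v|$. The main obstacle is step (ii): one must verify that a ``yes'' answer to $\textsf{IsSubseq}(q_v \cdot S_x[i+1] \cdot S_y[j+1..n_y])$ really forces $S_x[i+1]$ to appear at the $(i+j+1)$-st position of \emph{the} interleaving $S_v$ — not merely of some valid extension — which is exactly what the uniqueness in (i) is needed for.
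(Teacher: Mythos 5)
Your overall route is the same as the paper's: the two key facts are that $p_v$ is a period of $S_v$ (restricting $S=p^kp'$ to the letters of $S_v$ commutes with concatenation, so $S_v=p_v^kp_v'$ with $p_v'$ a prefix of $p_v$), and that the letter-by-letter merge cannot continue past a prefix of $S_v$ that is a period of $S_v$ — the paper compresses this into ``$q_v$ is the smallest period of $S_v$.'' One small overstatement in your step (iii): at termination you only get $|q_v|\le|p_v|$, not equality, since $S_v$ may have a period strictly shorter than $p_v$ (e.g.\ when $p$ restricted to $\Sigma_v$ is itself periodic); you only use the inequality, so this is harmless.

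The genuine gap is exactly where you point: the ``no'' case of step (ii). You argue that if the true next letter of $S_v$ is $S_y[j+1]$, a ``yes'' answer to \textsf{IsSubseq}$(S_v[..i+j]\cdot S_x[i+1]\cdot S_y[j+1..n_y])$ would ``exhibit an interleaving distinct from $S_v$,'' contradicting (i). But that query string is \emph{not} an interleaving of $S_x$ and $S_y$ — it omits $S_x[i+2..n_x]$ — and a subsequence embedding of it does not obviously extend to a full interleaving that is a subsequence of $S$ (the missing x-letters would have to be found after the last y-position of $S$, which is not guaranteed by anything you have shown). So the uniqueness in (i) does not, by itself, force the ``no.'' A correct argument is a counting one: the query string contains all $n_y$ letters of $S_y$, so in any embedding its y-letters must land, in order, on exactly the $n_y$ y-positions of $S$; hence $S_x[i+1]$ must be embedded strictly before the $(j+1)$-st y-position of $S$. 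But $S_v[i+j+1]=S_y[j+1]$ means only $i$ letters of $S_x$ occur in $S$ before that position, while the embedding must also place the $i$ x-letters of $S_v[..i+j]$ there — that is $i+1$ distinct positions among $i$ available, a contradiction. With this substituted for the appeal to (i), your proof is complete, and it amounts to a fleshed-out version of the paper's one-line argument (which takes the correctness of the greedy merge for granted and only observes that $p_v$ is a period of $S_v$ while $q_v$ is its smallest one).
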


\begin{proof}
The letter-by-letter construction of $q_v$ ensures that $q_v$ is the smallest period of $S_v$. Since $p_v$ is itself a period of $S_v$ no smaller than the smallest, then $|p_v|\ge |q_v|$.
\qed
\end{proof}

\begin{restatable}{theorem}{thmSubsequence}
\label{thm:subseq}
We can determine a length-$n$ periodic string, $S=p^kp'$, of smallest period $p$ of unknown size, using
$2\sigma\lceil \lg n\rceil + 2|p|\lceil\lg \sigma\rceil$ \textsf{IsSubseq} queries,
if $n$ is unknown.
If $n$ is known, then
$\sigma\lceil \lg n\rceil + 2|p|\lceil\lg \sigma\rceil$ 
\textsf{IsSubseq} queries suffice.
\end{restatable}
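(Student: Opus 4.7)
The plan is to account for the queries of the two-phase algorithm already described. The first phase determines, for each $a\in\Sigma$, the frequency $n_a=|S|_a$ by running either a binary search (if $n$ is known) or a doubling search (if $n$ is unknown) over queries of the form $\textsf{IsSubseq}(a^i)$. By the bounds on binary and doubling search recalled in the preliminaries, this spends at most $\sigma\lceil\lg n\rceil$ or $2\sigma\lceil\lg n\rceil$ queries in total, respectively, and additionally reveals $n=\sum_a n_a$ in the unknown-$n$ case, so the second phase proceeds with full knowledge of $n$.

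For the second phase, I would charge queries level by level in the binary merge tree $T$ of height $\lceil\lg\sigma\rceil$. At each internal node $v$ with subsequences $S_x,S_y$ already computed, the merge grows a prefix $S_v[..i+j]$ by one letter per iteration, spending at most one \textsf{IsSubseq} query to decide the next letter plus at most one \textsf{IsSubseq} query to confirm that the current candidate is the full period; the procedure halts as soon as the period is verified, so the work at $v$ is at most $2|q_v|$ queries, where $q_v$ is the smallest period of $S_v$. The key quantitative step is a level-sum identity: the subsets $\Sigma_v\subseteq\Sigma$ assigned to the nodes at any fixed level $\ell$ partition the alphabet of letters occurring in $S$, hence the projections $p_v$ of $p$ onto $\Sigma_v$ partition the letters of $p$, giving $\sum_{v\text{ at level }\ell}|p_v|=|p|$. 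Combining this identity with \cref{lem:subseq_qv} (which states $|q_v|\le|p_v|$), each level contributes at most $2|p|$ queries, so all $\lceil\lg\sigma\rceil$ levels together contribute at most $2|p|\lceil\lg\sigma\rceil$ queries. Summing the two phases yields the stated bounds.

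The main obstacle is verifying that the single decision query per merge iteration correctly identifies the next letter of $S_v$: namely, that $\textsf{IsSubseq}(S_v[..i+j]\cdot S_x[i+1]\cdot S_y[j+1..n_y])$ returns ``yes'' iff $S_v[i+j+1]=S_x[i+1]$. The ``if'' direction is immediate from the natural embedding of $S_v$ in $S$, since after the position of $S_x[i+1]$ in $S$, the remaining $\Sigma_y$-letters of $S_v$ are precisely $S_y[j+1..n_y]$. For the ``only if'' direction, I would argue contrapositively: if $S_v[i+j+1]=S_y[j+1]$, then every embedding of $S_v[..i+j]$ in $S$ consumes at least the positions of $S_x[1..i]$, so any later occurrence of the character $S_x[i+1]$ in $S$ lies at a position strictly after the (unique) position of $S_y[j+1]$, leaving at most $n_y-j-1$ subsequent $\Sigma_y$-positions available to match the length-$(n_y-j)$ suffix $S_y[j+1..n_y]$, forcing a ``no.'' Correctness of the periodic-completion query follows because any length-$|S_v|$ subsequence of $S$ supported on $\Sigma_v$ must equal $S_v$ itself.
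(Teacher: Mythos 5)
Your proposal is correct and takes essentially the same route as the paper's proof: charging $\sigma\lceil\lg n\rceil$ (binary search) or $2\sigma\lceil\lg n\rceil$ (doubling search) queries to the frequency-finding stage, then bounding each merge at node $v$ by $2|q_v|$ queries and using \cref{lem:subseq_qv} together with the letter-disjointness of the $S_v$'s at each level to get at most $2|p|$ queries per level of the $\lceil\lg\sigma\rceil$-level merge tree. Your added verification that the merge's decision query and the periodic-completion query are answered correctly is extra detail the paper only asserts in its algorithm description, not a deviation in approach.
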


\begin{proof}
The total query complexity includes: (i) the letter decomposition $S_a$ for all $a\in\Sigma$, during the the first stage and (ii) the merge-like composition of all subsequences $S_a$, during the second stage.
If $n$ is known, the first stage requires $|\Sigma|$ binary searches, incurring $\sigma\lceil \lg n\rceil$ queries. Otherwise, it requires $|\Sigma|$ doubling searches, amounting to $2\sigma\lceil \lg n\rceil$ queries.
Regarding the second stage, we claim that any level $l$ of the binary merge tree, $T$, incurs a total of at most $2|p|$ queries, which amounts to a total of at most $2|p|\lceil\lg \sigma\rceil$ queries, when taking into account all the $\lceil \lg \sigma \rceil$ levels of $T$. Let $T(l)$ be the set of all nodes in $T$ at level $l$. Then, $$\sum_{v\in T(l)} |q_v| \le \sum_{v\in T(l)} |p_v| = |p|.$$ This follows from \cref{lem:subseq_qv} and the fact that all $\{S_v\ |\ v\in T(l)\}$ are pairwise letter-disjoint. Since the merge of an internal node $v$ requires a cost of $2|q_v|$, the total cost incurred in any level $l$ of $T$ is at most $2|p|$.
\qed
\end{proof}

A simple modification
of our algorithm also implies the following.

\begin{restatable}{theorem}{thmSubsequenceGeneral}
We can determine a length-$n$ string, $S$, using
$2\sigma \lceil\lg n\rceil + n\lceil\lg \sigma\rceil$ \textsf{IsSubseq} queries,
without knowing the value of $n$ in advance.
If $n$ is known, then
$\sigma\lceil \lg n\rceil + n\lceil\lg \sigma\rceil$ 
\textsf{IsSubseq} queries suffice.
\end{restatable}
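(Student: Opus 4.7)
The plan is to reuse verbatim the algorithm and analysis behind \cref{thm:subseq}, with the sole modification that the periodic-shortcut step (which tests \textsf{IsSubseq}$({q_v}^{\hat k}{q_v}')$ in the hope of completing $S_v$ without finishing the interleaving) is omitted, since $S$ is no longer assumed to be periodic. First I would run the first stage identically: for each letter $a\in\Sigma$, perform a binary search (if $n$ is known) or a doubling search (if not) with queries of the form \textsf{IsSubseq}$(a^i)$ to determine $|S_a|$, the multiplicity of $a$ in $S$. This recovers each leaf subsequence $S_a$ of the merge tree $T$, and in the unknown-$n$ case simultaneously reveals $n=\sum_a |S_a|$. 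The cost of this stage is $\sigma\lceil\lg n\rceil$ (known $n$) or $2\sigma\lceil\lg n\rceil$ (unknown $n$).

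Next, I would process the merge tree $T$ bottom-up as in \cref{thm:subseq}. At an internal node $v$ with children $x,y$ holding subsequences $S_x,S_y$ of lengths $n_x,n_y$, I maintain indices $i,j$ and at each step issue the query \textsf{IsSubseq}$(S_v[..i+j]\cdot S_x[i+1]\cdot S_y[j+1..n_y])$ to decide whether the next letter of $S_v$ comes from $S_x$ (yes) or from $S_y$ (no), then advance the chosen index. The merge stops as soon as one of $i,j$ reaches its maximum, since the remaining tail of the other child subsequence can then be appended with no further queries.

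The key bookkeeping step is bounding the total merge cost. At node $v$, since each query advances exactly one of $i$ or $j$ and the procedure halts the moment either pointer is exhausted, the number of queries at $v$ is at most $n_v:=n_x+n_y=|S_v|$. Moreover, the subsequences $\{S_v:v\in T(l)\}$ at a fixed level $l$ are letter-disjoint and together embed into $S$, so
\[
\sum_{v\in T(l)} n_v \;\le\; n.
\]
Since $T$ has $\lceil\lg\sigma\rceil$ levels, the total merge cost is at most $n\lceil\lg\sigma\rceil$. Adding the first-stage cost yields the two claimed bounds.

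The only obstacle worth noting is verifying the per-level bound $\sum_v n_v\le n$ for general (non-periodic) $S$; this is immediate from the letter-disjointness of the subsequences at a fixed level, exactly as in \cref{lem:subseq_qv}, so no additional machinery is required. The resulting analysis is strictly a simplification of the proof of \cref{thm:subseq}, with $|q_v|\le|p_v|$ replaced by the cruder (but now tight) bound $|q_v|\le|S_v|$.
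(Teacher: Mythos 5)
Your proposal is correct and matches the paper's own (very terse) proof, which likewise just drops the ${q_v}^{\hat k}{q_v}'$ shortcut queries and reruns the analysis of \cref{thm:subseq}, with the per-level bound $\sum_{v\in T(l)}|q_v|\le|p|$ replaced by the letter-disjointness bound $\sum_{v\in T(l)}|S_v|\le n$ and one query per interleaving step. Your write-up simply makes explicit the accounting the paper leaves implicit.
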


\begin{proof}
\label{proof:thmSubsequenceGeneral}
Modify our subsequence-querying algorithm given in \cref{sec:subsequence} to remove
the queries for strings of the form ${q_v}^{\hat k}{q_v}'$.
The proof follows by
an analysis similar to that for \cref{thm:subseq}.
\qed
\end{proof}

This latter theorem improves a result of 
Skiena and Sundaram~\cite{DBLP:journals/jcb/SkienaS95},
who prove a query bound
of $2\sigma \lg n + 1.59n\lg \sigma+5\sigma$ 
when $n$ is unknown.

\section{Jumbled-index Queries}

Jumbled-indexing involves preprocessing a given string, $S$, 
so as to determine whether there exists a substring of $S$ whose
letter frequencies match the given \emph{Parikh vector}, i.e., a vector
$\psi = (f_1, \dots, f_\sigma)$ such that $f_i$ is the number
of occurrences in $S$ of $a_i\in\Sigma$, 
e.g., see~\cite{afshani-20,ami-jumbled-14,AMIR2016146,doi:10.1098/rsta.2013.0132,MOOSA2010795,Jumbled-13}.
In this section, we study the query complexity for
reconstructing an unknown length-$n$ string, $S$, using jumbled-index
queries. As observed by Acharya \textit{et al.} \cite{DBLP:conf/isit/AcharyaDMOP14,DBLP:journals/siamdm/AcharyaDMOP15}, strings and their reversals have the same ``composition multiset''. This immediately implies the following negative result, which we prove regardless for completeness.

\begin{lemma}
If $S$ is not a palindrome,
then $S$ cannot be reconstructed by yes/no jumbled-index queries, which
return whether there is a substring in $S$ with a given Parikh vector.
\end{lemma}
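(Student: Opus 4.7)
The plan is to exhibit two distinct strings that are indistinguishable by any sequence of yes/no jumbled-index queries, which immediately rules out exact reconstruction. The natural candidates are $S$ and its reversal $S^R$: since $S$ is not a palindrome, we have $S\neq S^R$, so any reconstruction algorithm must at some point output a specific string and therefore must be able to tell these two apart.

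First I would observe that the substrings of $S^R$ are precisely the reversals of the substrings of $S$, i.e., $\{T^R : T \text{ is a substring of }S\}$. This is a routine consequence of the definition of reversal: $S^R[i..j] = (S[n-j+1..n-i+1])^R$. Next I would note that the Parikh vector of a string is invariant under reversal, since reversal only permutes the positions of letters and does not change letter frequencies. Combining these two facts, for every Parikh vector $\psi$, the string $S$ contains a substring with Parikh vector $\psi$ if and only if $S^R$ does.

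Therefore, any yes/no jumbled-index query returns the same answer on input $S$ as it does on input $S^R$. By an adversary argument, consider any deterministic adaptive algorithm $A$ that makes such queries; run $A$ once against $S$ and once against $S^R$. Since the answers to every query coincide, $A$ issues the exact same sequence of queries and produces the exact same output in both runs. But $A$'s output cannot equal both $S$ and $S^R$, since $S\neq S^R$ by the hypothesis that $S$ is not a palindrome. Hence $A$ fails to reconstruct at least one of the two, completing the proof. The argument is short and there is no real obstacle; the only subtlety worth flagging is that the claim holds even against adaptive algorithms, which is handled by the adversary-style simulation above.
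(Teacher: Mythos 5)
Your proposal is correct and follows essentially the same argument as the paper: both compare $S$ with $S^R$, use the fact that substrings of $S^R$ are reversals of substrings of $S$ and that Parikh vectors are reversal-invariant, and conclude the two strings are indistinguishable by yes/no jumbled-index queries. The only difference is that you spell out the adaptive-adversary simulation explicitly, which the paper leaves implicit.
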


\begin{proof}
Suppose $S\not=S^R$, where $S^R$ denotes 
the reversal of $S$.
For any substring, $T$, of $S$, there is, of course, a corresponding substring,
$T^R$, of $S^R$.
Moreover, $T$ and $T^R$ have the same Parikh vector.
Thus, $S$ and $S^R$ have the same set of responses to yes/no 
jumbled-index queries; hence, any set of yes/no jumbled-index queries
cannot distinguish $S$ from $S^R$.
\qed
\end{proof}

Given that simple yes/no jumbled-index queries are not sufficient 
for string reconstruction, let us consider an extended type of 
yes/no jumbled-index query.

\begin{itemize}
  \item \emph{Jumbled-Indexing with End-of-string symbol ``\$''} (JIE): 
given an \emph{extended} Parikh vector,
$\psi = (f_1, \dots, f_\sigma,f_\$)$, for the letters in $\Sigma$ and
an end-of-string symbol, \$, which is not in $\Sigma$,
this query returns a yes/no response as to whether there is a substring of $S\$$
with extended Parikh vector $\psi$.
\end{itemize}

\noindent
Unlike the yes/no jumbled-index queries,
this variant enables full reconstruction.

\begin{theorem}
We can reconstruct a length-$n$ string, $S$, using $(\sigma-1)n$ JIE queries,
if $n$ is known, or $\sigma(n+1)$ JIE queries, if $n$ is unknown.
\end{theorem}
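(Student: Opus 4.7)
The plan is to reconstruct $S$ from right to left, using the \$ symbol as an anchor. The key observation is that, since \$ occurs exactly once in $S\$$ (at the rightmost position), any substring of $S\$$ whose extended Parikh vector has $f_\$=1$ must contain the \$ symbol and hence be a suffix of $S\$$. Moreover, the length of such a substring is fully determined by the Parikh vector (namely $1+\sum_{i=1}^{\sigma} f_i$), so there is at most one candidate suffix, and the JIE query returns ``yes'' iff that specific suffix has exactly the given Parikh vector.

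With this in hand, I would maintain the extended Parikh vector $\psi_\ell$ of the length-$(\ell+1)$ suffix of $S\$$, starting from $\psi_0=(0,\dots,0,1)$ (the substring ``\$'' itself), which is trivially present. To determine $S[n-\ell]$, observe that $\psi_{\ell+1}=\psi_\ell+\mathbf{e}_{i^\star}$, where $\mathbf{e}_i$ is the $i$-th standard unit vector and $i^\star$ is the index of $S[n-\ell]$ in the alphabet. Thus I query $\mathrm{JIE}(\psi_\ell+\mathbf{e}_i)$ for $i=1,\dots,\sigma$; exactly one such query returns ``yes'' when $S[n-\ell]$ exists, and that query reveals the letter. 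I then update $\psi_\ell\gets\psi_\ell+\mathbf{e}_{i^\star}$ and continue with the next position to the left.

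For the known-$n$ bound, I perform exactly $n$ iterations of the above. In each iteration I only need to issue $\sigma-1$ queries: if the first $\sigma-1$ queries all return ``no'', then by elimination the remaining letter must be the correct one and the query can be skipped. This yields $(\sigma-1)n$ queries total. For the unknown-$n$ bound, I cannot skip a query by elimination because I must also detect when there is no more letter to prepend; so I issue all $\sigma$ queries each iteration. The stopping condition is that all $\sigma$ queries return ``no'', which happens precisely when $\psi_\ell$ already equals the extended Parikh vector of $S\$$ (no longer suffix exists). This occurs after $n$ successful iterations and one ``all-no'' terminating iteration, for a total of $\sigma(n+1)$ queries.

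No step here is particularly delicate; the only thing to justify carefully is the uniqueness claim that makes the JIE answer determine the suffix, and the elimination argument in the known-$n$ case. Both follow directly from the fact that \$ appears uniquely at the right end of $S\$$ and that adding one letter to $\psi_\ell$ must correspond to prepending exactly one letter to the current suffix.
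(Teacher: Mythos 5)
Your proposal is correct and follows essentially the same route as the paper: a letter-by-letter, right-to-left reconstruction that anchors on the unique \$ symbol, maintains the extended Parikh vector of the known suffix of $S\$$, prepends one letter per round, saves one query by elimination when $n$ is known, and spends one extra all-``no'' round to detect termination when $n$ is unknown. Your explicit justification that any matching substring with $f_\$=1$ must be the unique suffix of the prescribed length is a nice touch the paper leaves implicit, but the argument and the query counts are the same.
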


\begin{proof}
Our method is to use
a letter-by-letter reconstruction algorithm via an adaption
of the prepend-a-letter primitive for substring queries.
Suppose $n$ is unknown. 
Let $\psi$ be an extended Parikh vector for a known suffix, $s$, of $S$\$; initially,
$\psi=(0,0,\ldots,0,1)$ and $s=\$$.
Then we perform a jumbled-index query for $\psi_i$, for each $a_i\in\Sigma$,
where $\psi_i=\psi$ except that $\psi_i$ adds $1$ to the $f_i$ value in $\psi$.
If one of these, say, $\psi_i$, returns ``yes'', then we prepend $a_i$ to our known
suffix and we repeat this procedure using $\psi_i$ for $\psi$.
If all of these queries return ``no'', then we are done.
If $n$ is known, on the other hand,
then we can skip this last test of all-no responses and we can
also save at least one query with each iteration, with the algorithm otherwise
being the same.
\qed
\end{proof}

We can also consider jumbled-index queries that return an index of a matching
substring for a given Parikh vector, if such a substring exists. Though related, notice that this type of query is not subsumed by the query studied in Acharya \textit{et al.} \cite{DBLP:conf/isit/AcharyaDMOP14,DBLP:journals/siamdm/AcharyaDMOP15}, which returns the number of occurrences (instead of position) of matching substrings in $S$. There is some ambiguity, however, if there is more than one matching substring;
hence, we should consider how to handle such multiple matches.
For example, if a jumbled-index query returns the indices of all matching
substrings, then $\sigma$ queries are clearly sufficient to reconstruct any
length-$n$ string, for any $n$, without knowing the value of $n$ in advance.
Thus, let us consider two more-interesting types of jumbled-index queries.

\begin{itemize}
  \item \sloppy  % to allow line breaking
  \emph{Adversarial Jumbled-Indexing} (AJI): 
given a Parikh vector,
$\psi = (f_1, \dots, f_\sigma)$, this query
returns, in an adversarial manner, one of the starting indices of 
a matching substring, if such a string exists. 
If there is no matching substring, this query returns \textsf{False}.
  \item \sloppy % to allow line breaking
  \emph{Random Jumbled-Indexing} (RJI):
given a Parikh vector,
$\psi = (f_1, \dots, f_\sigma)$, this query
returns, uniformly at random, one of the indices of a 
substring with Parikh vector $\psi$ if such a substring exists in $S$.
If there is no such substring, this query returns \textsf{False}.
\end{itemize}

Unfortunately,
for the AJI variant, 
there are some strings that cannot be fully reconstructed, but this is admittedly
not obvious. In fact, the unreconstructability characterization of \cite{DBLP:conf/isit/AcharyaDMOP14,DBLP:journals/siamdm/AcharyaDMOP15} fails for AJI queries, because the symmetry property used in their construction of pairwise ``equicomposable'' strings inherently yields matching substrings with symmetric (e.g. different) positions in $S$.

Nevertheless, we
give a construction of an infinite family of pairwise undistinguishable strings, i.e. two strings
such that, for every possible query, there exists an answer 
(positive or negative) that is common to both strings.
Clearly, the adversarial strategy is to output these common answers when
given either of these strings.
In particular, for all $b\ge 1$, consider
the two binary strings of length $4 b + 14$  given below, which differ only 
in the middle section, 
consisting of \texttt{01} in the first string and \texttt{10} in the second:
  \begin{center}
    $S_1 = $ \texttt{101101(10)$^b$01(10)$^b$010010}\\
    $S_2 = $ \texttt{101101(10)$^b$10(10)$^b$010010}
  \end{center}%
\begin{restatable}{theorem}{thmAJI}
\label{thm:aji}
The strings $S_1$ and $S_2$ cannot be distinguished using AJI queries, for $b\ge 1$.
\end{restatable}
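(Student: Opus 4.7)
The plan is to describe an adversary answering strategy that is consistent with both $S_1$ and $S_2$. Set $n = 4b+14$ and $p = 2b+7$, so that $S_1$ and $S_2$ differ only at positions $p$ and $p+1$, where $S_1$ reads \texttt{01} and $S_2$ reads \texttt{10}. Call a starting index $i$ of a length-$\ell$ window \emph{problematic} if the window contains exactly one of $p$ or $p+1$; the only problematic positions are $i = p+1$ and $i = p-\ell+1$, both requiring $\ell \le 2b+7$. The key observation is that for any non-problematic index $i$, the length-$\ell$ window has the same Parikh vector in $S_1$ as in $S_2$: either the window avoids both middle positions (so $S_1$ and $S_2$ literally agree on it), or it contains both (so the middle contributes the multiset $\{0,1\}$ to both strings). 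Hence every non-problematic matching index for a Parikh vector $\psi$ in $S_1$ is also a matching index in $S_2$, and vice versa.

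Given this, the adversary's rule is to return any non-problematic matching index for the query $\psi$ if one exists, and \textsf{False} otherwise; such an answer is automatically consistent with both strings. Proving indistinguishability therefore reduces to the key lemma: for every Parikh vector $\psi$ that matches at some position of $S_1$ or $S_2$, there exists a non-problematic position also matching $\psi$. Only Parikh vectors arising at problematic windows need attention. Using the reverse-complement palindrome symmetry $S_j = \overline{S_j^R}$ (which holds for $j = 1, 2$ and maps the two problematic positions to each other while bit-flipping their Parikh vectors), the task further reduces to covering just the Parikh of the problematic window starting at $p+1$ in each of $S_1$ and $S_2$.

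For $S_1[p+1..p+\ell]$, with Parikh $\psi_\ell$, I would use the left-shift-by-one candidate $S_1[p..p+\ell-1]$: it spans both middle positions and so is non-problematic for $\ell \ge 2$, and since $S_1[p] = 0$ its Parikh equals $\psi_\ell$ exactly when $S_1[p+\ell] = 0$. Inspection of $Q = (10)^b \cdot 010010$ (which occupies positions $p+2$ through $n$) shows $S_1[p+\ell] = 0$ for all but a short explicit list of exceptional $\ell$. For $S_2[p+1..p+\ell]$, whose Parikh differs from $\psi_\ell$ by $(1,-1)$, I would instead use the left-shift-by-two candidate $S_1[p-1..p+\ell-2]$: this also spans both middle positions (for $\ell \ge 3$) and, since $S_1[p-1] = S_1[p] = 0$, yields the desired Parikh whenever $S_1[p+\ell-1..p+\ell]$ has Parikh $(1,1)$, again covering most $\ell$. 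For the few remaining exceptional values of $\ell$ I would exhibit matches sitting entirely inside the prefix $P = 101101(10)^b$, whose \texttt{101101} header and repeated \texttt{10} blocks contain enough variety of length-$\ell$ Parikh vectors to cover the missing cases. The main obstacle is the bookkeeping for these residual $\ell$, but each is resolved by a short, $b$-independent check.
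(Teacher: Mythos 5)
Your reduction is sound, and it is essentially the paper's own strategy in different clothing: your ``non-problematic'' indices are exactly the paper's ``helpless'' answers (windows that either miss the middle pair or contain it entirely), your adversary is the same, and your reverse-complement symmetry $S_j=\overline{S_j^R}$ is the symmetry the paper uses to halve its case analysis. The shift-by-one and shift-by-two tricks for covering the problematic windows starting at $p+1$ are correct as far as they go, and I verified they do cover most lengths $\ell$. The problem is the residual bookkeeping, which is where all the real content of this theorem lives (it is what the paper's cases for imbalance $2$ and $\ge 3$ are about), and your description of it is wrong in two concrete ways. First, the exceptional set is not a short $b$-independent list: for $S_1$ the shift-by-one fails for every even $\ell\le 2b$ (there $S_1[p+\ell]=1$), i.e.\ $\Theta(b)$ values; these targets have Parikh $(m-1,m+1)$ for $\ell=2m$ and are indeed coverable inside the prefix by the uniform family $[3,\,2m+2]$, but that is a parametrized family, not a finite check.

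Second, and more seriously, the fallback ``exhibit matches sitting entirely inside the prefix $P=101101(10)^b$'' cannot work for three of the remaining exceptions. For $S_1$ with $\ell=2b+6$ the problematic window is balanced ($b+3$ ones, $b+3$ zeros), but the only length-$(2b+6)$ window inside the prefix is the prefix itself (imbalance $+2$) and the only one inside the suffix has imbalance $-2$; the match must span both middle positions (the centered window, which the paper obtains by its inductive argument for queries $(k,k)$). For $S_2$ with $\ell=2b+2$ and $\ell=2b+5$ the targets are zero-heavy, with imbalance $-2$ and $-3$ respectively, and no window of the prefix has imbalance below $-1$ (the prefix contains no two adjacent zeros, and $z$ zeros with no adjacency force at least $z-1$ ones); the matches exist, but in the second half, e.g.\ the last $2b+5$ letters $0(10)^{b-1}010010$ for the $\ell=2b+5$ case, or equivalently by applying your mirror symmetry to a $1$-heavy prefix window. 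With these substitutions (centered windows for the balanced case, suffix/mirrored windows for the zero-heavy cases) your argument closes, and at that point it reproduces the paper's case analysis; as written, though, the final step would fail on exactly the cases that make the theorem nontrivial.
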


\begin{proof}
Let $n = 4 b + 14$ be the size of the strings. We refer to responses that would be common to
both $S_1$ and $S_2$ as \emph{helpless} answers. 
  Let us think of a positive answer $i$ to a query $(k,l)$ in terms of the space occupied by its matching substring, denoted $\langle i,\ i+k+l-1\rangle$. We note that an answer that does not span the middle section or that spans it in its entirety must be helpless.

  Notice that the first half of either string is the symmetric complement of the second half. This implies the following: (i) an answer $\langle i,\ j\rangle$ to a query $(k,l)$ exists if and only if an answer $\langle n-j+1,\ n-i+1\rangle$ exists for the query $(l,k)$ and (ii) an answer is negative to $(k,l)$ if and only if an answer is negative to $(l,k)$. Therefore, we can restrict ourselves to queries of the form $(k,k+c)$, where $c\ge 0$. We break this down to the following cases:

  \begin{enumerate}
    \item \textbf{Queries of type} $(k,k)$.

    We say that an answer is \emph{$k$-centered} if it is of the type $\langle n/2-(k-1),\ n/2+1+(k-1)\rangle$. Since any $k$-centered answer contains the middle section, it must be helpless. Thus, it is enough to show, by induction, that all queries $(k,k)$ have $k$-centered answers. Clearly, this holds for the base case $(1,1)$, so let us assume that there exists a $(k-1)$-centered answer $a$ to the query $(k-1,k-1)$. Then, because the first half of either string is the symmetric complement of the second half, the letters preceding and succeeding $a$ must be the complement of each other. Thus, the $k$-centered answer must be valid for the query $(k,k)$.

    \item \textbf{Queries of type} $(k,k+1)$.
    
    Take the $k$-centered answer and either extend it with one letter to the left, or one letter to the right. Exactly one of these options is a valid answer to $(k,k+1)$ (by the symmetric-complement property of the strings) and either are helpless, since they span the middle section.

    \item \textbf{Queries of type} $(k,k+2)$.

    Consider, as a base case, the answer $\langle 2,\ 3\rangle$ to the query $(0,2)$. Clearly, it is a helpless answer. Given that the letters at positions $4+2j$ and $5+2j$ are complements of each other, $\langle2,\ 3+2j\rangle$ is a valid answer to the query $(j,j+2)$, for all $0\le j\le b+2$. For greater values of $j$, the answer is helpless regardless, since it corresponds to a substring of length greater than $2b+7$, half of the string length and, therefore, it spans the middle section.

    \item \textbf{Queries of type $(k,k+c)$, for $c\ge 3$}.
    
    It is enough to analyze answers that partially span the middle section (i.e. in exactly 1 letter), since otherwise answers are automatically helpless. Let $\Delta_{i}$ denote the number of \texttt{1}'s minus the number of \texttt{0}'s for the answer $\langle i,\ n/2\rangle$, with respect to $S_2$, for all $0\le i \le n/2$ (e.g. $\Delta_{n/2}=1$, corresponds to the first letter in the middle). A simple passage from right to left, for increasing values of $i$, reveals that there exist no value of $i$ for which $\Delta_i=4$, so we do not need to handle the case $c\ge 4$. Moreover, the only values of $i$ for which $\Delta_i=3$ are $i=2$ and $i=0$, which correspond to answers for the queries $(b+1, b+4)$ and $(b+2,b+5)$, respectively. However, these queries have helpless answers: $\langle 0,\ 2b+4\rangle$ in the former and $\langle 2,\ 2b+8\rangle$ in the latter. For the first string, a similar exercise reveals that there exist no answers that partially overlap the middle section and whose difference between the number of \texttt{1}'s and the number of \texttt{0}'s is at least 3.
  \end{enumerate}
\qed
\end{proof}

In contrast, the query variant RJI can be used to reconstruct any 
length-$n$ string, $S$, without knowing the value of $n$ in advance.
In particular, it is possible to reconstruct any length-$n$ string, $S$,
using $O(\sigma + n \log n)$ RJI queries with high probability.
Our algorithm for doing this involves a reduction to a multi-window 
coupon-collector problem.

Let $\psi_i$ be a Parikh vector that is all $0$'s except for a count of $1$ for
the letter $a_i\in \Sigma$.
Note that an RJI query using $\psi_i$
will return one of the $n_i$ 
locations in $S$ with an $a_i$ uniformly at random (if $n_i>0$).
If $n_i=0$, for any $i=1,2,\ldots,\sigma$,
we learn this fact immediately after one RJI query for $\psi_i$, so let 
us assume, w.l.o.g., that $n_i>0$, for all $i=1,2,\ldots,\sigma$, after performing
an initial $\sigma$ number of RJI queries.

Recall that in the \emph{coupon-collector} problem, a collector visits
a coupon window each day and requests a coupon from an agent, who chooses one
of $n$ coupons uniformly at random and gives it to the collector, e.g.,
see~\cite{mitzup-17}.
The expected number of days required for the collector
to get all $n$ coupons is $n H_n$, 
where $H_n$ is the $n\textsuperscript{th}$ Harmonic number.
But this assumes the collector knows when they have received all
$n$ coupons (i.e., the collector knows the value of $n$).

In a coupon-collector formulation of our reconstruction problem, we instead
have $\sigma$ coupon windows, one for each letter $a_i\in \Sigma$,
where each window $i$ has $n_i$ coupons that differ from the coupons for the other
windows, and we do not know the value of any $n_i$.
Each day the collector
must choose one of the coupon windows, $i$, and request one of its coupons
(corresponding to an RJI query for $\psi_i$),
which is chosen uniformly at random from the $n_i$ coupons for window $i$.
We are interested in a strategy and analysis for the collector to
collect all $n=n_1+n_2+\cdots+n_\sigma$ coupons, with high probability (i.e.,
with probability at least $1-1/n$).

Note that although we do not know the value of any $n_i$,
we can nonetheless test whether the collector has
collected all $n$ coupons.
In particular,
suppose we have received RJI responses for all indices, $1,2,\ldots,n$, 
for letters in $S$, and let $n_i$ be the number of $a_i$'s we have found so far.
Let $\psi'=(n_1,n_2,\ldots,n_\sigma)$, and let $\psi'_i$ be equal to
$\psi'$ except that we increment $n_i$ by $1$.
If an RJI query for each $\psi'_i$ returns \textsf{False}, then we know we have
fully reconstructed $S$.
Thus, if $n=1$, then we can determine this and $S$ after $2\sigma$ RJI queries,
so let us assume that $n\ge 2$.
Further, we can assume we have a bound, $N\ge2$, which is at least
$n$ and at most twice $n$,
by a simple doubling strategy, where we double $N$ any time a test for $n$ fails
and we set $N$ equal to any RJI query response that is larger than $N$.
Therefore,
the remaining problem is to solve
the multi-window coupon-collector problem.

Our strategy for the multi-window coupon-collector problem is simply to visit
the coupon windows in phases, so that in phase~$i$ we repeatedly visit window~$i$
until we are confident we have all of its $n_i$ coupons,
for which the following lemma will prove useful.

\begin{restatable}{lemma}{lemmaCoupon}
\label{lem:coupon}
Let $T_i$ be the number of trips to window $i$ needed 
to collect all its $n_i\ge1$ coupons.
Then, for any real number $\beta$:
\[
\Pr\left(T_i>\beta n_i\ln N\right) \le \frac{n_i}{N^{\beta}} .
\]
\end{restatable}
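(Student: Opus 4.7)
The plan is to establish the tail bound via a standard coupon-collector argument specialized to window $i$. Fix a particular coupon $c$ among the $n_i$ coupons available at window $i$. On each visit to window $i$, the agent returns $c$ with probability exactly $1/n_i$, independently across visits; hence the probability that $c$ has \emph{not} been returned after $t$ visits is $(1 - 1/n_i)^t$, which I bound by $e^{-t/n_i}$ using the standard inequality $1-x \le e^{-x}$.

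Next, I would apply a union bound over all $n_i$ coupons at window $i$: the event $\{T_i > t\}$ is precisely the event that at least one of the $n_i$ coupons has not yet been returned after $t$ visits, so
\[
\Pr(T_i > t) \ \le\ n_i \cdot e^{-t/n_i}.
\]
Substituting $t = \beta n_i \ln N$ yields $\Pr(T_i > \beta n_i \ln N) \le n_i \cdot e^{-\beta \ln N} = n_i / N^{\beta}$, which is exactly the claimed bound.

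There is no substantive obstacle here; this is a straightforward instantiation of the classical coupon-collector tail argument. The only thing to be careful about is the independence of the draws across visits to window $i$ (which follows directly from the specification of the RJI query, since each query returns an index chosen independently and uniformly at random among the $n_i$ positions of $a_i$), and the fact that $n_i \ge 1$ is assumed so that the probability $1/n_i$ is well-defined. The bound is stated for arbitrary real $\beta$, but is only informative when $\beta$ is large enough that $n_i/N^\beta < 1$; for smaller $\beta$ the inequality is trivially true since probabilities are at most $1$.
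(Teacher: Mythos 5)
Your proof is correct and follows essentially the same argument as the paper: bound the probability that a fixed coupon is missed after $t$ visits by $(1-1/n_i)^t \le e^{-t/n_i}$, take a union bound over the $n_i$ coupons, and substitute $t=\beta n_i\ln N$. No differences worth noting.
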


\begin{proof}
Adapting a proof from~\cite{wiki:coupon},
let $Z_{j,r}$ denote the event that the $j$-th coupon was not picked
in the first $r$ trips to window $i$.
Then
\[
\Pr\left(Z_{j,r}\right) = \left(1 - \frac{1}{n_i}\right)^r \le e^{-r/n_i} .
\]
Thus, for $r=\beta n_i\ln N$, we have
$\Pr(Z_{j,r})\le e^{-(\beta n_i\ln N)/n_i} = N^{-\beta}$.
Therefore,
by a union bound,
\[
\Pr\left(T> \beta n_i \ln N\right) = 
\Pr\left(\bigcup_j Z_{j,\beta n_i\ln N}\right) 
\le n_i \cdot
\Pr\left(Z_{1,\beta n_i\ln N}\right) \le \frac{n_i}{N^{\beta}}.
\]
\qed
\end{proof}

Our strategy, then, 
is to 
let $\beta\ge 2$ be constant, and in phase~$i$,
implement a doubling strategy where we perform
$\beta N_i\log N$ RJI queries for $\psi_i$, such that $N_i$ is an upper bound
estimate for $n_i$, which we double each time we get more than $N_i$ distinct
responses to our queries in this phase.
So by the end of the phase~$i$, $n_i \le N_i\le 2n_i$.
This gives us:

\begin{restatable}{theorem}{thmRJI}
\label{thm:rji}
A string, $S$, of unknown size, $n$, can be reconstructed using
$O(\sigma + n\log n)$ RJI queries, with high probability.
\end{restatable}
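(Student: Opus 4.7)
The plan is to execute and analyze the two-layer doubling strategy outlined immediately before the theorem, and then to bound its failure probability via \cref{lem:coupon}. First we will perform the alphabet-discovery step: one RJI query with each singleton Parikh vector $\psi_i$ (a $1$ in coordinate $i$, zeros elsewhere) reveals which letters actually occur in $S$, at a cost of $\sigma$ queries, yielding the additive $\sigma$ term. Thereafter we may restrict attention to the at most $\sigma$ letters $a_i$ for which $n_i \ge 1$.

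Next we process the surviving windows in sequence. In phase $i$ we maintain a local estimate $N_i$, initialized to $2$, for $n_i$. A sub-phase consists of $\beta N_i \log N$ RJI queries for $\psi_i$, where $\beta \ge 2$ is a constant chosen in the probability analysis and $N$ is the current global upper bound on $n$; if during this batch the number of distinct positions of $a_i$ seen exceeds $N_i$, we double $N_i$ and begin a fresh sub-phase, and otherwise we declare phase $i$ done. Since $N_i$ never exceeds $2 n_i$ upon termination and successive sub-phase budgets form a geometric progression, the last sub-phase dominates, so phase $i$ costs $O(n_i \log N)$ queries. Summing over $i$ and using $N \le 2n$, the phases cost $O(n \log n)$ in total.

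After the phases terminate we will verify completion by testing the $\sigma$ incremented Parikh vectors $\psi'_i$ described in the preamble; if every response is \textsf{False}, reconstruction is finished. Otherwise we update $N$ (doubling it, or raising it to any returned position larger than $N$) and resume any phases whose budgets are now insufficient. Because $N$ undergoes at most $O(\log n)$ doublings before exceeding $n$, the verification overhead together with any repeated work amount to an additional $O(\sigma \log n)$ queries, absorbed into the claimed $O(\sigma + n \log n)$.

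The main obstacle will be the probability analysis. In the final sub-phase of phase $i$ we have $n_i \le N_i \le 2 n_i$ and, after the outer $N$-doubling has converged, $n \le N \le 2n$, so the number of queries is at least $\beta n_i \log N$, each returning an independent, uniformly random position of $a_i$. Applying \cref{lem:coupon} gives a failure probability of at most $n_i / N^{\beta}$ for this window. A union bound over the at most $\sigma \le n$ windows yields a total failure probability of at most $\sum_i n_i / N^{\beta} = n / N^{\beta} \le 1/n^{\beta-1}$, so choosing any constant $\beta \ge 2$ delivers the desired $1/n$ bound and completes the high-probability guarantee.
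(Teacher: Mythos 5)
Your proposal follows essentially the same route as the paper: the initial $\sigma$ singleton queries, per-letter phases with budgets $\beta N_i\log N$ driven by doubling the local estimates $N_i$ and the global bound $N$, completion testing via the incremented Parikh vectors, and the failure-probability bound from \cref{lem:coupon} with a union bound giving $n/N^{\beta}\le 1/n$ for $\beta\ge 2$. The only quibble is your claim that the rework after a global doubling of $N$ is $O(\sigma\log n)$ — it can be $\Theta(n)$ per doubling, hence $O(n\log n)$ overall — but this is still absorbed into the stated bound, so the argument stands.
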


\begin{proof}
After an initial $O(\sigma)$ queries to determine which letters 
from $\Sigma$ appear in $S$,
the total number of remaining queries performed by our method is at most
\[
2\sum_{i=1}^\sigma 2\beta N_i \ln N
=
4\sum_{i=1}^\sigma \beta N_i \ln N,
\]
by the doubling strategy applied to each letter, $a_i\in\Sigma$, and
then globally for $N$.
Further,
\[
\sum_{i=1}^\sigma \beta N_i \ln N \le
\sum_{i=1}^\sigma 2\beta n_i \ln N
\le
\sum_{i=1}^\sigma 3\beta n_i \ln n = 3\beta n \ln n,
\]
with probability at least
\[
1 - \sum_{i=1}^\sigma \frac{n_i}{N^\beta} 
=
1 - \frac{\sum_{i=1}^\sigma n_i}{N^\beta} 
= 1-\frac{n}{N^\beta} 
\ge 1-\frac{1}{n},
\]
by \cref{lem:coupon}, since $\beta\ge 2$.
\qed
\end{proof}

\section{Conclusion and Open Questions}

We have studied the reconstruction of strings under the following
settings, by giving efficient reconstruction algorithms and proving
lower bounds: (i) periodic strings of known and unknown sizes, with
and without mismatch errors, using substring queries; (ii) periodic
strings of known and unknown sizes, using subsequence queries and
(iii) general strings, using variations of jumbled-indexing queries.
For the non-optimal algorithms given here, it would be nice to know
whether there exist matching lower bounds, or whether there exist
faster algorithms.

Regarding corrupted periodic strings, different applications suffer from different types of
corruption. In particular, the following error metrics have been
considered in the literature: Pseudo-local metrics such as swap
distance~\cite{AALLL:00} or Interchange (Cayley) distance~\cite{aelps:12}; and the Levenshtein edit distance~\cite{L-66}. It
would be interesting to see whether our reconstruction algorithms can
be adapted to these more general error distances.
 
The next step is to reconstruct strings that have more complex syntactic regularities than periods, such as {\em covers}~\cite{AIF-90}.
 A length $m$ substring $C$ of a string $T$ of length $n$, is said to be a \emph{cover} of $T$,
  if $n>m$ and every letter of $T$ lies within some occurrence of   $C$. We would like to efficiently reconstruct 
a coverable string, without knowing its cover a-priori. 

Data compression schemes such as, Lempel-Ziv~\cite{DBLP:journals/tit/ZivL77,ZL-78} are
known to compress any stationary and ergodic source down
to the entropy rate of the source per source symbol, provided
the input source sequence is sufficiently long. These schemes 
rely heavily on encoding repeated substrings by their starting index and
length. In this sense, a periodic string is highly compressible. We
would like to extend our ideas to reconstruct a general string in time
proportional to its LZ compression.

The type of query used for reconstruction is a key factor in the
reconstruction complexity. Much as the error distance, the query type
is also application-dependent. A reasonable query type is the {\em
  less than matching}. Let $S_1$ and $S_2$ be strings of length
$n$ over an ordered alphabet. We say that $S_1$ is {\em less than}
$S_2$ if $S_1[i] < S_2,\ \forall i=1,\dots,n$. Other matchings that have
been researched in the literature,
are the {\em order preserving matching}~\cite{OrderPreservingSuffixTreeCrochemore-16,cnps:13,kanps:14}, and
the {\em parameterized matching}~\cite{bak:96,bak:97}. In 
the order preserving matching, we say that two strings match if the
relative order of their elements is the same, for example $1,2,3,2,1$
matches such strings as $1, 100, 101, 100, 1$, or $56,61,366,61,56$,
i.e., any string where the fist element is smaller than the second,
which is smaller than the third, where the fourth is equal to the
second, and the fifth equals the first. Two equal-length strings
$S_1,S_2$ over alphabet $\Sigma$ are said to {\em parameterize match},
if there is a bijection $f:\Sigma \rightarrow \Sigma$ such that
$S_1=f(S_2)$. Using these more powerful queries, can we reconstruct a
string more efficiently?

Finally, given the impossibility result on reconstructing strings using Adversarial Jumbled-Indexing queries, it would be interesting to know whether there exists an efficient algorithm that enumerates all of the undistinguishable strings.

\subsection*{Acknowledgments}

This research was funded in part by the U.S. National Science Foundation under grant 1815073. Amihood Amir was partly supported by BSF grant 2018141 and ISF grant 1475-18.

\clearpage
\bibliographystyle{splncs04}
\bibliography{ref,refs,paper}

\begin{thebibliography}{100}
\providecommand{\url}[1]{\texttt{#1}}
\providecommand{\urlprefix}{URL }
\providecommand{\doi}[1]{https://doi.org/#1}

\bibitem{DBLP:conf/isit/AcharyaDMOP14}
Acharya, J., Das, H., Milenkovic, O., Orlitsky, A., Pan, S.:
  Quadratic-backtracking algorithm for string reconstruction from substring
  compositions. In: 2014 {IEEE} International Symposium on Information Theory,
  Honolulu, HI, USA, June 29 - July 4, 2014. pp. 1296--1300. {IEEE} (2014).
  \doi{10.1109/ISIT.2014.6875042},
  \url{https://doi.org/10.1109/ISIT.2014.6875042}

\bibitem{DBLP:journals/siamdm/AcharyaDMOP15}
Acharya, J., Das, H., Milenkovic, O., Orlitsky, A., Pan, S.: String
  reconstruction from substring compositions. {SIAM} J. Discret. Math.
  \textbf{29}(3),  1340--1371 (2015). \doi{10.1137/140962486},
  \url{https://doi.org/10.1137/140962486}

\bibitem{DBLP:conf/birthday/AfshaniADDLM13}
Afshani, P., Agrawal, M., Doerr, B., Doerr, C., Larsen, K.G., Mehlhorn, K.: The
  query complexity of finding a hidden permutation. In: Brodnik, A.,
  L{\'{o}}pez{-}Ortiz, A., Raman, V., Viola, A. (eds.) Space-Efficient Data
  Structures, Streams, and Algorithms - Papers in Honor of J. Ian Munro on the
  Occasion of His 66th Birthday. Lecture Notes in Computer Science, vol.~8066,
  pp. 1--11. Springer (2013). \doi{10.1007/978-3-642-40273-9\_1},
  \url{https://doi.org/10.1007/978-3-642-40273-9\_1}

\bibitem{afshani-20}
Afshani, P., van Duijn, I., Killmann, R., Nielsen, J.S.: A lower bound for
  jumbled indexing. In: 2020 ACM-SIAM Symposium on Discrete Algorithms (SODA).
  pp. 592--606 (2020). \doi{10.1137/1.9781611975994.36},
  \url{https://epubs.siam.org/doi/abs/10.1137/1.9781611975994.36}

\bibitem{aals:18}
Amir, A., Amit, M., G.M.Landau, Sokol, D.: Period recovery of strings over the
  hamming and edit distances. Theortetical Computer Science  \textbf{710},
  2--18 (2018)

\bibitem{AALLL:00}
Amir, A., Aumann, Y., Landau, G., Lewenstein, M., Lewenstein, N.: Pattern
  matching with swaps. Journal of Algorithms  \textbf{37},  247--266 (2000),
  (Preliminary version appeared at FOCS 97.)

\bibitem{aelps:12}
Amir, A., Eisenberg, E., Levy, A., Porat, E., Shapira, N.: Cycle detection and
  correction. {ACM} Trans. Alg.  \textbf{9}(1), ~13 (2012)

\bibitem{AMIR2016146}
Amir, A., Apostolico, A., Hirst, T., Landau, G.M., Lewenstein, N., Rozenberg,
  L.: Algorithms for jumbled indexing, jumbled border and jumbled square on
  run-length encoded strings. Theoretical Computer Science  \textbf{656},
  146--159 (2016). \doi{https://doi.org/10.1016/j.tcs.2016.04.030},
  \url{http://www.sciencedirect.com/science/article/pii/S030439751630069X}

\bibitem{DBLP:journals/jcss/AmirABLLPSV09}
Amir, A., Aumann, Y., Benson, G., Levy, A., Lipsky, O., Porat, E., Skiena, S.,
  Vishne, U.: Pattern matching with address errors: Rearrangement distances. J.
  Comput. Syst. Sci.  \textbf{75}(6),  359--370 (2009).
  \doi{10.1016/j.jcss.2009.03.001},
  \url{https://doi.org/10.1016/j.jcss.2009.03.001}

\bibitem{doi:10.1098/rsta.2013.0132}
Amir, A., Butman, A., Porat, E.: On the relationship between histogram indexing
  and block-mass indexing. Philosophical Transactions of the Royal Society A:
  Mathematical, Physical and Engineering Sciences  \textbf{372}(2016),
  20130132 (2014). \doi{10.1098/rsta.2013.0132},
  \url{https://royalsocietypublishing.org/doi/abs/10.1098/rsta.2013.0132}

\bibitem{ami-jumbled-14}
Amir, A., Chan, T.M., Lewenstein, M., Lewenstein, N.: On hardness of jumbled
  indexing. In: Esparza, J., Fraigniaud, P., Husfeldt, T., Koutsoupias, E.
  (eds.) Int. Colloq. on Automata, Languages, and Programming (ICALP). pp.
  114--125. Springer, Berlin, Heidelberg (2014)

\bibitem{DBLP:conf/esa/AmirHKLP07}
Amir, A., Hartman, T., Kapah, O., Levy, A., Porat, E.: On the cost of
  interchange rearrangement in strings. In: Arge, L., Hoffmann, M., Welzl, E.
  (eds.) Algorithms - {ESA} 2007, 15th Annual European Symposium, Eilat,
  Israel, October 8-10, 2007, Proceedings. Lecture Notes in Computer Science,
  vol.~4698, pp. 99--110. Springer (2007). \doi{10.1007/978-3-540-75520-3\_11},
  \url{https://doi.org/10.1007/978-3-540-75520-3\_11}

\bibitem{angluin1988queries}
Angluin, D.: Queries and concept learning. Machine learning  \textbf{2}(4),
  319--342 (1988)

\bibitem{AIF-90}
Apostolico, A., Iliopoulos, C., Farach, M.: Optimal superprimativity testing
  for strings. Information Processing Letters  \textbf{39},  17--20 (1991)

\bibitem{DBLP:journals/jcb/ArratiaMRW96}
Arratia, R., Martin, D., Reinert, G., Waterman, M.S.: Poisson process
  approximation for sequence repeats and sequencing by hybridization. J.
  Comput. Biol.  \textbf{3}(3),  425--463 (1996). \doi{10.1089/cmb.1996.3.425},
  \url{https://doi.org/10.1089/cmb.1996.3.425}

\bibitem{bak:96}
Baker, B.S.: Parameterized pattern matching: Algorithms and applications.
  Journal of Computer and System Sciences  \textbf{52}(1),  28--42 (1996)

\bibitem{bak:97}
Baker, B.S.: Parameterized duplication in strings: Algorithms and an
  application to software maintenance. SIAM Journal on Computing
  \textbf{26}(5),  1343--1362 (1997)

\bibitem{DBLP:conf/soda/BatuKKM04}
Batu, T., Kannan, S., Khanna, S., McGregor, A.: Reconstructing strings from
  random traces. In: Munro, J.I. (ed.) Proceedings of the Fifteenth Annual
  {ACM-SIAM} Symposium on Discrete Algorithms, {SODA} 2004, New Orleans,
  Louisiana, USA, January 11-14, 2004. pp. 910--918. {SIAM} (2004),
  \url{http://dl.acm.org/citation.cfm?id=982792.982929}

\bibitem{b:99}
Benson, G.: Tandem repeats finder: a program to analyze dna sequence. Nucleic
  Acids Research  \textbf{27}(2),  573--580 (1999)

\bibitem{BW-94}
Benson, G., Waterman, M.: A method for fast database search for all
  k-nucleotide repeats. Nucleic Acids Research  \textbf{22},  4828--4836 (1994)

\bibitem{DBLP:journals/ipl/BentleyY76}
Bentley, J.L., Yao, A.C.: An almost optimal algorithm for unbounded searching.
  Inf. Process. Lett.  \textbf{5}(3),  82--87 (1976).
  \doi{10.1016/0020-0190(76)90071-5},
  \url{https://doi.org/10.1016/0020-0190(76)90071-5}

\bibitem{BERNASCONI2001113}
Bernasconi, A., Damm, C., Shparlinski, I.: Circuit and decision tree complexity
  of some number theoretic problems. Information and Computation
  \textbf{168}(2),  113 -- 124 (2001).
  \doi{https://doi.org/10.1006/inco.2000.3017},
  \url{http://www.sciencedirect.com/science/article/pii/S0890540100930177}

\bibitem{bresler2013optimal}
Bresler, G., Bresler, M., Tse, D.: Optimal assembly for high throughput shotgun
  sequencing. In: BMC Bioinformatics. vol.~14, p.~S18. Springer (2013)

\bibitem{DBLP:journals/ijfcs/BurcsiCFL12}
Burcsi, P., Cicalese, F., Fici, G., Lipt{\'{a}}k, Z.: Algorithms for jumbled
  pattern matching in strings. Int. J. Found. Comput. Sci.  \textbf{23}(2),
  357--374 (2012). \doi{10.1142/S0129054112400175},
  \url{https://doi.org/10.1142/S0129054112400175}

\bibitem{DBLP:journals/ipl/ButmanEL04}
Butman, A., Eres, R., Landau, G.M.: Scaled and permuted string matching. Inf.
  Process. Lett.  \textbf{92}(6),  293--297 (2004).
  \doi{10.1016/j.ipl.2004.09.002},
  \url{https://doi.org/10.1016/j.ipl.2004.09.002}

\bibitem{DBLP:journals/tcs/CarpiL01}
Carpi, A., de~Luca, A.: Words and special factors. Theor. Comput. Sci.
  \textbf{259}(1-2),  145--182 (2001). \doi{10.1016/S0304-3975(99)00334-5},
  \url{https://doi.org/10.1016/S0304-3975(99)00334-5}

\bibitem{DBLP:conf/ccs/CashGPR15}
Cash, D., Grubbs, P., Perry, J., Ristenpart, T.: Leakage-abuse attacks against
  searchable encryption. In: In Proc. of the 22nd {ACM} {SIGSAC} Conference on
  Computer and Communications Security, {CCS}. pp. 668--679 (2015)

\bibitem{cayley1849lxxvii}
Cayley, A.: Lxxvii. note on the theory of permutations. The London, Edinburgh,
  and Dublin Philosophical Magazine and Journal of Science  \textbf{34}(232),
  527--529 (1849)

\bibitem{DBLP:journals/tit/ChangCEK17}
Chang, Z., Chrisnata, J., Ezerman, M.F., Kiah, H.M.: Rates of {DNA} sequence
  profiles for practical values of read lengths. {IEEE} Trans. Inf. Theory
  \textbf{63}(11),  7166--7177 (2017). \doi{10.1109/TIT.2017.2747557},
  \url{https://doi.org/10.1109/TIT.2017.2747557}

\bibitem{cnps:13}
Cho, S., Na, J.C., Park, K., Sim, J.S.: Fast order-preserving pattern matching.
  In: Proc. 7th conf. Combinatorial Optimization and Applications {COCOA}.
  Lecture Notes in Computer Science, vol.~8287, pp. 295--305. Springer (2013)

\bibitem{CHOI2010551}
Choi, S.S., Kim, J.H.: Optimal query complexity bounds for finding graphs.
  Artificial Intelligence  \textbf{174}(9),  551 -- 569 (2010).
  \doi{https://doi.org/10.1016/j.artint.2010.02.003},
  \url{http://www.sciencedirect.com/science/article/pii/S0004370210000251}

\bibitem{DBLP:conf/stringology/CicaleseFL09}
Cicalese, F., Fici, G., Lipt{\'{a}}k, Z.: Searching for jumbled patterns in
  strings. In: Holub, J., Zd{\'{a}}rek, J. (eds.) Proceedings of the Prague
  Stringology Conference 2009, Prague, Czech Republic, August 31 - September 2,
  2009. pp. 105--117. Prague Stringology Club, Department of Computer Science
  and Engineering, Faculty of Electrical Engineering, Czech Technical
  University in Prague (2009),
  \url{http://www.stringology.org/event/2009/p10.html}

\bibitem{Cieplinski:01}
Cieplinski, L.: {MPEG-7} color descriptors and their applications. In: Skarbek,
  W. (ed.) Proc. 9th Intl. Conf. on Computer Analysis of Images and Patterns
  {CAIP}. LNCS, vol.~2124, pp. 11--20. Springer (2001)

\bibitem{Cleve}
Cleve, R., Iwama, K., Le~Gall, F., Nishimura, H., Tani, S., Teruyama, J.,
  Yamashita, S.: Reconstructing strings from substrings with quantum queries.
  In: Fomin, F.V., Kaski, P. (eds.) Scandinavian Workshop on Algorithm Theory
  (SWAT). pp. 388--397. Springer, Berlin, Heidelberg (2012)

\bibitem{OrderPreservingSuffixTreeCrochemore-16}
Crochemore, M., Iliopoulos, C.S., Kociumaka, T., Kubica, M., Langiu, A.,
  Pissis, S.P., Radoszewski, J., Rytter, W., Walen, T.: Order-preserving
  indexing. Theoretcial Computer Science  \textbf{613},  122--135 (2016)

\bibitem{DBLP:journals/jcs/CurtmolaGKO11}
Curtmola, R., Garay, J.A., Kamara, S., Ostrovsky, R.: Searchable symmetric
  encryption: Improved definitions and efficient constructions. Journal of
  Computer Security  \textbf{19}(5),  895--934 (2011)

\bibitem{dakic2000turnpike}
Dakic, T.: On the turnpike problem. Simon Fraser University BC, Canada (2000)

\bibitem{D-89}
Deininger, P.: {SINE}s: short interspersed repeated {DNA} elements in higher
  eukaryotes. In: Berg, D., Howe, M. (eds.) Mobile DNA, chap.~27, pp. 619--636.
  American Society for Microbiology (1989)

\bibitem{DKN:08}
Deselaers, T., Keysers, D., Ney, H.: Features for image retrieval: an
  experimental comparison. Inf. Retr.  \textbf{11}(2),  77--107 (2008)

\bibitem{Dobzinski:2012:QCC:2213977.2214076}
Dobzinski, S., Vondrak, J.: From query complexity to computational complexity.
  In: Proceedings of the Forty-fourth Annual ACM Symposium on Theory of
  Computing. pp. 1107--1116. STOC '12, ACM, New York, NY, USA (2012).
  \doi{10.1145/2213977.2214076},
  \url{http://doi.acm.org/10.1145/2213977.2214076}

\bibitem{DomanicP07}
Domani\c{c}, N.O., Preparata, F.P.: A novel approach to the detection of
  genomic approximate tandem repeats in the levenshtein metric. Journal of
  Computational Biology  \textbf{14}(7),  873--891 (2007)

\bibitem{DBLP:journals/jct/DudikS03}
Dud{\'{\i}}k, M., Schulman, L.J.: Reconstruction from subsequences. J. Comb.
  Theory, Ser. {A}  \textbf{103}(2),  337--348 (2003).
  \doi{10.1016/S0097-3165(03)00103-1},
  \url{https://doi.org/10.1016/S0097-3165(03)00103-1}

\bibitem{dlle:16}
Dudley, J., Lin, M.T., Le, D., J.R.Eshleman: Microsatellite instability as a
  biomarker for pd-1 blockade. Clinical Cancer Research  \textbf{22}(4),
  813--820 (2016)

\bibitem{DBLP:conf/isit/ElishcoGMY19}
Elishco, O., Gabrys, R., M{\'{e}}dard, M., Yaakobi, E.: Repeat-free codes. In:
  {IEEE} International Symposium on Information Theory, {ISIT} 2019, Paris,
  France, July 7-12, 2019. pp. 932--936. {IEEE} (2019).
  \doi{10.1109/ISIT.2019.8849483},
  \url{https://doi.org/10.1109/ISIT.2019.8849483}

\bibitem{DBLP:journals/jcb/EresLP04}
Eres, R., Landau, G.M., Parida, L.: Permutation pattern discovery in
  biosequences. J. Comput. Biol.  \textbf{11}(6),  1050--1060 (2004).
  \doi{10.1089/cmb.2004.11.1050},
  \url{https://doi.org/10.1089/cmb.2004.11.1050}

\bibitem{DBLP:journals/tcs/FiciMRS06}
Fici, G., Mignosi, F., Restivo, A., Sciortino, M.: Word assembly through
  minimal forbidden words. Theor. Comput. Sci.  \textbf{359}(1-3),  214--230
  (2006). \doi{10.1016/j.tcs.2006.03.006},
  \url{https://doi.org/10.1016/j.tcs.2006.03.006}

\bibitem{fine1965uniqueness}
Fine, N.J., Wilf, H.S.: Uniqueness theorems for periodic functions. Proceedings
  of the American Mathematical Society  \textbf{16}(1),  109--114 (1965)

\bibitem{DBLP:conf/isit/GabrysM17}
Gabrys, R., Milenkovic, O.: The hybrid k-deck problem: Reconstructing sequences
  from short and long traces. In: 2017 {IEEE} International Symposium on
  Information Theory, {ISIT} 2017, Aachen, Germany, June 25-30, 2017. pp.
  1306--1310. {IEEE} (2017). \doi{10.1109/ISIT.2017.8006740},
  \url{https://doi.org/10.1109/ISIT.2017.8006740}

\bibitem{DBLP:conf/isit/GabrysM18}
Gabrys, R., Milenkovic, O.: Unique reconstruction of coded sequences from
  multiset substring spectra. In: 2018 {IEEE} International Symposium on
  Information Theory, {ISIT} 2018, Vail, CO, USA, June 17-22, 2018. pp.
  2540--2544. {IEEE} (2018). \doi{10.1109/ISIT.2018.8437909},
  \url{https://doi.org/10.1109/ISIT.2018.8437909}

\bibitem{DBLP:conf/isit/GangulyMR16}
Ganguly, S., Mossel, E., R{\'{a}}cz, M.Z.: Sequence assembly from corrupted
  shotgun reads. In: {IEEE} International Symposium on Information Theory,
  {ISIT} 2016, Barcelona, Spain, July 10-15, 2016. pp. 265--269. {IEEE} (2016).
  \doi{10.1109/ISIT.2016.7541302},
  \url{https://doi.org/10.1109/ISIT.2016.7541302}

\bibitem{DBLP:conf/soda/HolensteinMPW08}
Holenstein, T., Mitzenmacher, M., Panigrahy, R., Wieder, U.: Trace
  reconstruction with constant deletion probability and related results. In:
  Teng, S. (ed.) Proceedings of the Nineteenth Annual {ACM-SIAM} Symposium on
  Discrete Algorithms, {SODA} 2008, San Francisco, California, USA, January
  20-22, 2008. pp. 389--398. {SIAM} (2008),
  \url{http://dl.acm.org/citation.cfm?id=1347082.1347125}

\bibitem{iwama2018reconstructing}
Iwama, K., Teruyama, J., Tsuyama, S.: Reconstructing strings from substrings:
  Optimal randomized and average-case algorithms (2018)

\bibitem{Jeong10}
Jeong, K., Bandeira, N., Kim, S., Pevzner, P.A.: Gapped spectral dictionaries
  and their applications for database searches of tandem mass spectra. Mol Cell
  Proteomics  (2011), m110.002220

\bibitem{DBLP:journals/tcs/Jerrum85}
Jerrum, M.: The complexity of finding minimum-length generator sequences.
  Theor. Comput. Sci.  \textbf{36},  265--289 (1985).
  \doi{10.1016/0304-3975(85)90047-7},
  \url{https://doi.org/10.1016/0304-3975(85)90047-7}

\bibitem{kalashnik1973reconstruction}
Kalashnik, L.: The reconstruction of a word from fragments. Numerical
  mathematics and computer technology pp. 56--57 (1973)

\bibitem{DBLP:conf/isit/Kannan005}
Kannan, S., McGregor, A.: More on reconstructing strings from random traces:
  insertions and deletions. In: Proceedings of the 2005 {IEEE} International
  Symposium on Information Theory, {ISIT} 2005, Adelaide, South Australia,
  Australia, 4-9 September 2005. pp. 297--301. {IEEE} (2005).
  \doi{10.1109/ISIT.2005.1523342},
  \url{https://doi.org/10.1109/ISIT.2005.1523342}

\bibitem{DBLP:conf/ccs/KellarisKNO16}
Kellaris, G., Kollios, G., Nissim, K., O'Neill, A.: Generic attacks on secure
  outsourced databases. In: Proceedings of the 2016 {ACM} {SIGSAC} Conference
  on Computer and Communications Security, Vienna, Austria, October 24-28,
  2016. pp. 1329--1340 (2016)

\bibitem{DBLP:journals/tit/KiahPM16}
Kiah, H.M., Puleo, G.J., Milenkovic, O.: Codes for {DNA} sequence profiles.
  {IEEE} Trans. Inf. Theory  \textbf{62}(6),  3125--3146 (2016).
  \doi{10.1109/TIT.2016.2555321},
  \url{https://doi.org/10.1109/TIT.2016.2555321}

\bibitem{kanps:14}
Kim, J., Amir, A., Na, J.C., Park, K., Sim, J.S.: On representations of ternary
  order relations in numeric strings. In: Proc. 2nd International Conference on
  Algorithms for Big Data (ICABD). CEUR Workshop Proceedings, vol.~1146, pp.
  46--52 (2014)

\bibitem{Kim09b}
Kim, S., Bandeira, N., Pevzner, P.A.: Spectral profiles: A novel representation
  of tandem mass spectra and its applications for de novo peptide sequencing
  and identification. Mol Cell Proteomics  \textbf{8},  1391--1400 (2009)

\bibitem{Kim09a}
Kim, S., Gupta, N., Bandeira, N., Pevzner, P.A.: Spectral dictionaries:
  Integrating de novo peptide sequencing with database search of tandem mass
  spectra. Mol Cell Proteomics  \textbf{8}(1),  53--69 (2009)

\bibitem{krrswa:18}
Kociumaka, T., Radoszewski, J., Rytter, W., Straszy{\'n}ski, J., Wale{\'n}, T.,
  Zuba, W.: Faster recovery of approximate periods over edit distance. In:
  Proc. 25th International Symposium on String Processing and Information
  Retrieval (SPIRE). pp. 233--240. LNCS, Springer (2018)

\bibitem{Jumbled-13}
Kociumaka, T., Radoszewski, J., Rytter, W.: Efficient indexes for jumbled
  pattern matching with constant-sized alphabet. In: Bodlaender, H.L.,
  Italiano, G.F. (eds.) European Symp. on Algorithms (ESA). pp. 625--636.
  Springer, Berlin, Heidelberg (2013)

\bibitem{KBK1:03}
Kolpakov, R., Kucherov, G.: mreps: efficient and flexible detection of tandem
  repeats in {DNA}. Nucleic Acids Res.  \textbf{31},  3672--3678 (2003),
  http://www.loria.fr/mreps/

\bibitem{k-nn-attack}
Kornaropoulos, E.M., Papamanthou, C., Tamassia, R.: Data recovery on encrypted
  databases with $k$-nearest neighbor query leakage. In: Proc. {IEEE} Symposium
  on Security and Privacy, {SP}. pp. 245--262 (2019)

\bibitem{DBLP:journals/jct/KrasikovR97}
Krasikov, I., Roditty, Y.: On a reconstruction problem for sequences,. J. Comb.
  Theory, Ser. {A}  \textbf{77}(2),  344--348 (1997).
  \doi{10.1006/jcta.1997.2732}, \url{https://doi.org/10.1006/jcta.1997.2732}

\bibitem{DBLP:conf/sp/LachariteMP18}
Lacharit{\'{e}}, M., Minaud, B., Paterson, K.G.: Improved reconstruction
  attacks on encrypted data using range query leakage. In: Proc. {IEEE}
  Symposium on Security and Privacy, {SP}. pp. 297--314 (2018)

\bibitem{L-66}
Levenshtein, V.I.: Binary codes capable of correcting, deletions, insertions
  and reversals. Soviet Phys. Dokl.  \textbf{10},  707--710 (1966)

\bibitem{DBLP:journals/tit/Levenshtein01}
Levenshtein, V.I.: Efficient reconstruction of sequences. {IEEE} Trans. Inf.
  Theory  \textbf{47}(1),  2--22 (2001). \doi{10.1109/18.904499},
  \url{https://doi.org/10.1109/18.904499}

\bibitem{DBLP:journals/jacm/LowranceW75}
Lowrance, R., Wagner, R.A.: An extension of the string-to-string correction
  problem. J. {ACM}  \textbf{22}(2),  177--183 (1975).
  \doi{10.1145/321879.321880}, \url{https://doi.org/10.1145/321879.321880}

\bibitem{DBLP:journals/dm/ManvelMSSS91}
Manvel, B., Meyerowitz, A., Schwenk, A.J., Smith, K., Stockmeyer, P.K.:
  Reconstruction of sequences. Discret. Math.  \textbf{94}(3),  209--219
  (1991). \doi{10.1016/0012-365X(91)90026-X},
  \url{https://doi.org/10.1016/0012-365X(91)90026-X}

\bibitem{DBLP:journals/corr/abs-1912-11108}
Marcovich, S., Yaakobi, E.: Reconstruction of strings from their substrings
  spectrum. CoRR  \textbf{abs/1912.11108} (2019),
  \url{http://arxiv.org/abs/1912.11108}

\bibitem{ms-95}
{Margaritis}, D., {Skiena}, S.S.: Reconstructing strings from substrings in
  rounds. In: IEEE 36th Symp. on Foundations of Computer Science (FOCS). pp.
  613--620 (Oct 1995). \doi{10.1109/SFCS.1995.492591}

\bibitem{mitzup-17}
Mitzenmacher, M., Upfal, E.: Probability and Computing: Randomized Algorithms
  and Probabilistic Analysis. Cambridge University Press, 2 edn. (2017)

\bibitem{MOOSA2010795}
Moosa, T.M., Rahman, M.S.: Indexing permutations for binary strings.
  Information Processing Letters  \textbf{110}(18),  795--798 (2010).
  \doi{https://doi.org/10.1016/j.ipl.2010.06.012},
  \url{http://www.sciencedirect.com/science/article/pii/S0020019010002012}

\bibitem{DBLP:journals/tit/MotahariBT13}
Motahari, A.S., Bresler, G., Tse, D.N.C.: Information theory of {DNA} shotgun
  sequencing. {IEEE} Trans. Inf. Theory  \textbf{59}(10),  6273--6289 (2013).
  \doi{10.1109/TIT.2013.2270273},
  \url{https://doi.org/10.1109/TIT.2013.2270273}

\bibitem{DBLP:conf/isit/MotahariRTM13}
Motahari, A.S., Ramchandran, K., Tse, D., Ma, N.: Optimal {DNA} shotgun
  sequencing: Noisy reads are as good as noiseless reads. In: Proceedings of
  the 2013 {IEEE} International Symposium on Information Theory, Istanbul,
  Turkey, July 7-12, 2013. pp. 1640--1644. {IEEE} (2013).
  \doi{10.1109/ISIT.2013.6620505},
  \url{https://doi.org/10.1109/ISIT.2013.6620505}

\bibitem{DBLP:conf/ccs/NaveedKW15}
Naveed, M., Kamara, S., Wright, C.V.: Inference attacks on property-preserving
  encrypted databases. In: In Proc. of the 22nd {ACM} {SIGSAC} Conference on
  Computer and Communications Security,{CCS}. pp. 644--655 (2015)

\bibitem{PFAP03}
Parisi, V., Fonzo, V.D., Aluffi-Pentini, F.: {STRING:} finding tandem repeats
  in {DNA} sequences. Bioinformatics  \textbf{19}(14),  1733--1738 (2003)

\bibitem{PellegriniRV10}
Pellegrini, M., Renda, M.E., Vecchio, A.: Trstalker: an efficient heuristic for
  finding fuzzy tandem repeats. Bioinformatics [ISMB]  \textbf{26}(12),
  358--366 (2010)

\bibitem{DBLP:conf/isit/SalaGSMD16}
Sala, F., Gabrys, R., Schoeny, C., Mazooji, K., Dolecek, L.: Exact sequence
  reconstruction for insertion-correcting codes. In: {IEEE} International
  Symposium on Information Theory, {ISIT} 2016, Barcelona, Spain, July 10-15,
  2016. pp. 615--619. {IEEE} (2016). \doi{10.1109/ISIT.2016.7541372},
  \url{https://doi.org/10.1109/ISIT.2016.7541372}

\bibitem{DBLP:journals/dm/Scott97}
Scott, A.D.: Reconstructing sequences. Discret. Math.  \textbf{175}(1-3),
  231--238 (1997). \doi{10.1016/S0012-365X(96)00153-7},
  \url{https://doi.org/10.1016/S0012-365X(96)00153-7}

\bibitem{DBLP:conf/isit/ShomoronyCT15}
Shomorony, I., Courtade, T.A., Tse, D.N.C.: Do read errors matter for genome
  assembly? In: {IEEE} International Symposium on Information Theory, {ISIT}
  2015, Hong Kong, China, June 14-19, 2015. pp. 919--923. {IEEE} (2015).
  \doi{10.1109/ISIT.2015.7282589},
  \url{https://doi.org/10.1109/ISIT.2015.7282589}

\bibitem{DBLP:conf/isit/ShomoronyKXCT16}
Shomorony, I., Kamath, G.M., Xia, F., Courtade, T.A., Tse, D.N.C.: Partial
  {DNA} assembly: {A} rate-distortion perspective. In: {IEEE} International
  Symposium on Information Theory, {ISIT} 2016, Barcelona, Spain, July 10-15,
  2016. pp. 1799--1803. {IEEE} (2016). \doi{10.1109/ISIT.2016.7541609},
  \url{https://doi.org/10.1109/ISIT.2016.7541609}

\bibitem{DBLP:conf/automata/Simon75}
Simon, I.: Piecewise testable events. In: Barkhage, H. (ed.) Automata Theory
  and Formal Languages, 2nd {GI} Conference, Kaiserslautern, May 20-23, 1975.
  Lecture Notes in Computer Science, vol.~33, pp. 214--222. Springer (1975).
  \doi{10.1007/3-540-07407-4\_23},
  \url{https://doi.org/10.1007/3-540-07407-4\_23}

\bibitem{DBLP:conf/compgeom/SkienaSL90}
Skiena, S., Smith, W.D., Lemke, P.: Reconstructing sets from interpoint
  distances (extended abstract). In: Seidel, R. (ed.) Proceedings of the Sixth
  Annual Symposium on Computational Geometry, Berkeley, CA, USA, June 6-8,
  1990. pp. 332--339. {ACM} (1990). \doi{10.1145/98524.98598},
  \url{https://doi.org/10.1145/98524.98598}

\bibitem{DBLP:journals/jcb/SkienaS95}
Skiena, S., Sundaram, G.: Reconstructing strings from substrings. Journal of
  Computational Biology  \textbf{2}(2),  333--353 (1995).
  \doi{10.1089/cmb.1995.2.333}, \url{https://doi.org/10.1089/cmb.1995.2.333}

\bibitem{SokolDB09}
Sokol, D.: Tredd - a database for tandem repeats over the edit distance.
  Database: The Journal of Biological Databases and Curation
  \textbf{2010}(baq003) (2010). \doi{10.1093/database/baq003}

\bibitem{DBLP:conf/ndss/StefanovPS14}
Stefanov, E., Papamanthou, C., Shi, E.: Practical dynamic searchable encryption
  with small leakage. In: 21st Annual Network and Distributed System Security
  Symposium, {NDSS} 2014, San Diego, California, USA, February 23-26, 2014
  (2014)

\bibitem{toy:01}
Tan, K., Ooi, B.C., Yee, C.Y.: An evaluation of color-spatial retrieval
  techniques for large image databases. Multim. Tools Appl.  \textbf{14}(1),
  55--78 (2001)

\bibitem{Tardos1989}
Tardos, G.: Query complexity, or why is it difficult to separate {$NP^A\cap
  coNP^A$} from {$P^A$} by random oracles {$A$}? Combinatorica  \textbf{9}(4),
  385--392 (Dec 1989). \doi{10.1007/BF02125350},
  \url{https://doi.org/10.1007/BF02125350}

\bibitem{Tsur}
Tsur, D.: Tight bounds for string reconstruction using substring queries. In:
  Chekuri, C., Jansen, K., Rolim, J.D.P., Trevisan, L. (eds.) Algorithms and
  Techniques for Approximation, Randomization and Combinatorial Optimization.
  pp. 448--459. Springer (2005)

\bibitem{DBLP:journals/tcs/Ukkonen92}
Ukkonen, E.: Approximate string matching with q-grams and maximal matches.
  Theor. Comput. Sci.  \textbf{92}(1),  191--211 (1992).
  \doi{10.1016/0304-3975(92)90143-4},
  \url{https://doi.org/10.1016/0304-3975(92)90143-4}

\bibitem{DBLP:conf/soda/ViswanathanS08}
Viswanathan, K., Swaminathan, R.: Improved string reconstruction over
  insertion-deletion channels. In: Teng, S. (ed.) Proceedings of the Nineteenth
  Annual {ACM-SIAM} Symposium on Discrete Algorithms, {SODA} 2008, San
  Francisco, California, USA, January 20-22, 2008. pp. 399--408. {SIAM} (2008),
  \url{http://dl.acm.org/citation.cfm?id=1347082.1347126}

\bibitem{DBLP:conf/stoc/Wagner75}
Wagner, R.A.: On the complexity of the extended string-to-string correction
  problem. In: Rounds, W.C., Martin, N., Carlyle, J.W., Harrison, M.A. (eds.)
  Proceedings of the 7th Annual {ACM} Symposium on Theory of Computing, May
  5-7, 1975, Albuquerque, New Mexico, {USA}. pp. 218--223. {ACM} (1975).
  \doi{10.1145/800116.803771}, \url{https://doi.org/10.1145/800116.803771}

\bibitem{WH:11}
Wang, J., Hua, X.: Interactive image search by color map. {ACM} Trans. Intell.
  Syst. Technol.  \textbf{3}(1),  12:1--12:23 (2011)

\bibitem{WexlerYKG04}
Wexler, Y., Yakhini, Z., Kashi, Y., Geiger, D.: Finding approximate tandem
  repeats in genomic sequences. In: RECOMB. pp. 223--232 (2004)

\bibitem{wiki:coupon}
{Wikipedia contributors}: Coupon collector's problem.
  \url{https://en.wikipedia.org/wiki/Coupon\_collector\%27s\_problem} (2019),
  accessed 30-Jan-2020

\bibitem{Yao:1994:DTC:195058.195414}
Yao, A.C.C.: Decision tree complexity and {Betti} numbers. In: Proceedings of
  the Twenty-sixth Annual ACM Symposium on Theory of Computing. pp. 615--624.
  STOC '94, ACM, New York, NY, USA (1994). \doi{10.1145/195058.195414},
  \url{http://doi.acm.org/10.1145/195058.195414}

\bibitem{zenkin1984non}
Zenkin, A., Leont'ev, V.K.: On a non-classical recognition problem. USSR
  Computational Mathematics and Mathematical Physics  \textbf{24}(3),  189--193
  (1984)

\bibitem{DBLP:conf/uss/ZhangKP16}
Zhang, Y., Katz, J., Papamanthou, C.: All your queries are belong to us: The
  power of file-injection attacks on searchable encryption. In: 25th {USENIX}
  Security Symposium, {USENIX} Security 16. pp. 707--720 (2016)

\bibitem{ZLT:17}
Zhou, W., Li, H., Tian, Q.: Recent advance in content-based image retrieval:
  {A} literature survey. CoRR  \textbf{abs/1706.06064} (2017),
  \url{http://arxiv.org/abs/1706.06064}

\bibitem{ZL-78}
Ziv, J., Lempel, A.: Compression of individual sequences via variable rate
  coding. IEEE Trans. on Information Theory  \textbf{IT-24},  530--536 (1978)

\bibitem{DBLP:journals/tit/ZivL77}
Ziv, J., Lempel, A.: A universal algorithm for sequential data compression.
  {IEEE} Trans. Information Theory  \textbf{23}(3),  337--343 (1977).
  \doi{10.1109/TIT.1977.1055714},
  \url{https://doi.org/10.1109/TIT.1977.1055714}

\end{thebibliography}

\end{document}